\documentclass[aps,%
twocolumn,%
floatfix,%
pra,%
amsfonts,%
groupedaddress,%
superscriptaddress]{revtex4-2}%
\usepackage[utf8]{inputenc}
\usepackage{mathrsfs}       
\usepackage{mathtools}
\usepackage{amsmath}
\usepackage{amsfonts}       
\usepackage{amssymb}
\usepackage[titletoc,title]{appendix}
\usepackage[ruled,linesnumbered]{algorithm2e}    
\usepackage{bbm}            
\usepackage{bm}             
\usepackage{comment}
\usepackage{amsthm}   
\usepackage{booktabs} 
\usepackage[T1]{fontenc}
\usepackage[osf,sc]{mathpazo}
\usepackage{dsfont}
\usepackage{microtype}
\usepackage{verbatim}
\usepackage{pifont}
\newcommand{\cmark}{\ding{51}}%
\newcommand{\xmark}{\ding{55}}%

\allowdisplaybreaks

\usepackage{thmtools}
\usepackage{makecell}
\usepackage{array}
\usepackage{colortbl}
\usepackage[dvipsnames]{xcolor}
\usepackage{tikz}
\usetikzlibrary{arrows.meta,calc,positioning,shapes.geometric}
\definecolor{beamer@blendedblue}{rgb}{0.2,0.2,0.7}  
\usepackage[colorlinks=true,%
linkcolor=beamer@blendedblue,%
bookmarks=true,%
breaklinks=true,%
filecolor=beamer@blendedblue,%
anchorcolor=yellow,%
citecolor=beamer@blendedblue,%
urlcolor=beamer@blendedblue,%
pdfauthor={Kun Wang and Masahito Hayashi},%
pdfsubject={Efficient Verification of Entangled Measurements with Local States},%
CJKbookmarks=true]{hyperref}

\usepackage[skins,breakable,most]{tcolorbox}

\usepackage{float}    

\newcommand{\sbar}{\;\rule{0pt}{9.5pt}\middle|\;}

\newtheorem{definition}{Definition}
\newtheorem{theorem}{Theorem}
\newtheorem{proposition}[theorem]{Proposition}

\newtheorem{lemma}[theorem]{Lemma}
\newtheorem{corollary}[theorem]{Corollary}

\mathchardef\ordinarycolon\mathcode`\:
\mathcode`\:=\string"8000
\def\vcentcolon{\mathrel{\mathop\ordinarycolon}}
\begingroup \catcode`\:=\active
  \lowercase{\endgroup
  \let :\vcentcolon
  }

\tcolorboxenvironment{proof}{%
  enhanced,
  boxrule=0.5pt,
  breakable,
  sharp corners,
  before upper={%
    \setlength{\parindent}{2em}%
    \setlength{\parskip}{0pt}%
  }%
}

\DeclareFontFamily{U}{mathx}{\hyphenchar\font45}
\DeclareFontShape{U}{mathx}{m}{n}{<-> mathx10}{}
\DeclareSymbolFont{mathx}{U}{mathx}{m}{n}
\DeclareMathAccent{\widebar}{0}{mathx}{"73}

\newcommand{\wh}[1]{\widehat{#1}}

\newcommand{\ket}[1]{\vert{#1}\rangle}
\newcommand{\bra}[1]{\langle{#1}\vert}
\newcommand{\ketbra}[2]{\vert{#1}\rangle\!\langle{#2}\vert}

\newcommand\proj[1]{\vert{#1}\rangle\!\langle{#1}\vert}

\DeclareMathOperator{\tr}{Tr}  

\newcommand{\1}{\mathbbm{1}}
\newcommand{\ox}{\otimes}

\def\sss{\scriptscriptstyle}
\newcommand{\Fid}{F} 

\makeatletter
\newsavebox{\@brx}
\newcommand{\llangle}[1][]{\savebox{\@brx}{\(\m@th{#1\langle}\)}%
  \mathopen{\copy\@brx\kern-0.5\wd\@brx\usebox{\@brx}}}
\newcommand{\rrangle}[1][]{\savebox{\@brx}{\(\m@th{#1\rangle}\)}%
  \mathclose{\copy\@brx\kern-0.5\wd\@brx\usebox{\@brx}}}
\makeatother

\newcommand*{\bF}{\mathbb{F}}

\newcommand*{\cB}{\mathcal{B}}

\newcommand*{\cH}{\mathcal{H}}

\newcommand*{\cJ}{\mathcal{J}}

\newcommand*{\cM}{\mathcal{M}}

\newcommand*{\cO}{\mathcal{O}}
\newcommand*{\cP}{\mathcal{P}}
\newcommand*{\cQ}{\mathcal{Q}}

\newcommand*{\cS}{\mathcal{S}}

\newcommand*{\cX}{\mathcal{X}}

\newcommand*{\bZ}{\mathbb{Z}}

\definecolor{wildstrawberry}{rgb}{1.0, 0.26, 0.64}
\definecolor{googleblue}{HTML}{4285F4}
\definecolor{googlered}{HTML}{DB4437}
\definecolor{googleyellow}{HTML}{F4B400}
\definecolor{googlegreen}{HTML}{0F9D58}

\begin{document}

\newcommand{\thetitle}{{Efficient Verification of Entangled Measurements with Local States}}
\title{\thetitle}

\author{Kun Wang}
\email{kunwang.quantum@outlook.com}
\affiliation{Independent Researcher}

\author{Masahito Hayashi}
\email{hmasahito@cuhk.edu.cn}
\affiliation{School of Data Science, The Chinese University of Hong Kong,
Shenzhen, Longgang District, Shenzhen, 518172, China}
\affiliation{International Quantum Academy (SIQA), Futian District, Shenzhen 518048, China}

\date{\today}
\begin{abstract}
We develop a framework for quantum measurement verification (QMV) using only local state preparations. For locally transitive and irreducible projective measurements, we prove that symmetry reduces locality constrained QMV to quantum state verification of a single basis state, thereby reducing protocol design to the optimization of homogeneous verification operators. We apply the framework to generalized Bell measurements, single-parameter measurements on two qubits, elegant joint measurements, and stabilizer state induced measurements, and derive explicit local protocols together with closed form verification operators, success probabilities, and sample complexities. We further show that homogeneous QMV protocols can estimate measurement fidelity directly from observed passing frequencies.
\end{abstract}
\maketitle

\section{Introduction}\label{sec:introduction}

Measurement devices are the operational interface between quantum systems and
classical decisions.  Rather than merely reading out a computation at the end,
their outcomes trigger feedback, syndrome decoding, postselection, network
connections, and acceptance tests.  Consequently, reliable measurements are
indispensable for paradigmatic tasks such as quantum teleportation, dense
coding, entanglement swapping, quantum repeaters, quantum communication, and
fault-tolerant computation~\cite{bennett1993teleporting,gisin2019entanglement}.
At the same time, measurement remains a noisy hardware operation.  Recent
large-scale demonstrations across superconducting, trapped ion, neutral atom,
and photonic quantum computing platforms all benchmark nonzero readout or
state preparation and measurement errors
~\cite{GoogleQuantumAI2025BelowThreshold,Liu2025CertifiedRandomness,Bluvstein2024Logical,PsiQuantum2025Manufacturable}.
An unverified measurement device can therefore invalidate the classical output
of an otherwise controlled quantum experiment.  It is thus essential to
verify the correct functioning of measurement devices before using them in
practical applications.  Full detector tomography is typically time intensive
and computationally difficult~\cite{PhysRevA.64.024102,Lundeen2008}, motivating
more efficient methods that certify whether a device measures input states as
designed.

In this work we propose a general framework for quantum measurement verification (QMV)
based on \emph{local state preparations}, thereby establishing a remaining piece
of quantum verification theory alongside quantum state
verification~\cite{Hayashi_2006,H1,Hayashi_2009,Hayashi2015Verifiable,Morimae2017Verification,Hayashi2018Self,pallister2018optimal,Takeuchi2018Verification,Wang2019Optimal,Yu2019Optimal,Li2019Efficient,Zhu2019General,Liu2019Efficient,Zhu2019Optimal,zhu2019efficient,Li2019Efficient,Zhu2019EfficientPRL,Li2020Optimal,Dangniam2020Optimal,Zhang2020Experimental,Jiang2020Towards,Zhang2020Classical,Li2021Verification,Han2021Optimal,Liu2021Efficient,Wu2021Efficient,Liu2021Universally,Li2021Minimum,MiguelRamiro2022Collective,Liu2022Verification,Zhu2022Efficient,Chen2022Efficient,Xia2022Experimental,Gocanin2022Sample},
quantum process verification~\cite{liu2020efficient,Zhu2020Efficient,Zeng2020Quantum,Luo2022Proof,Zhang2022Efficient}, and quantum subspace verification~\cite{Zheng2026Efficient,Chen2025QuantumSubspace,Zheng2025GHZ,Hu2026Nearly}.
We refer the reader to~\cite{yu2022statistical} for a comprehensive review on this topic.
Specifically, we demonstrate efficient local verification protocols for
various measurements that are of practical interest, including
generalized Bell measurements,
single-parameter measurements on two qubits,
elegant joint measurements, and
stabilizer state induced measurements on $n$ qudits of prime local dimension.
These protocols are constructed purely from local product states and
achieve the characteristic verification scaling $\cO(1/\varepsilon)$, in contrast
to the $\cO(1/\varepsilon^2)$ scaling of direct fidelity estimation.
Notably, the requirement that the test states be local incurs only a constant factor penalty
over the globally optimal protocol without constraints.

\textbf{Organization.}
Section~\ref{sec:preliminaries} reviews quantum state verification and quantum measurements.
Section~\ref{sec:qmv} formulates QMV and develops the symmetry reduction under locality constraints.
Section~\ref{sec:applications} applies the general result to representative classes of measurements.
Section~\ref{sec:measurement-fidelity-estimation} discusses measurement fidelity estimation 
based on homogeneous QMV protocols.

\section{Preliminaries}\label{sec:preliminaries}

\subsection{Quantum state verification}

Here we review the quantum state verification (QSV)
task~\cite{pallister2018optimal,yu2022statistical}.
A quantum device is designed to produce a multipartite pure state $\ket{\psi}$.
However, it might work incorrectly and outputs
independent states $\sigma_1, \sigma_2, \ldots, \sigma_N$ in $N$ runs.
It is guaranteed that either $\sigma_j=\proj{\psi}$ for all $j$ (good case)
or $\bra{\psi}\sigma_j\ket{\psi}\leq 1-\varepsilon$ for all $j$ (bad case).
We aim to identify the case with the worst-case failure probability $\delta$.
To achieve this, a verifier performs binary measurements randomly chosen
from a set of available measurements.
Each measurement $\{T_\ell,\1-T_\ell\}$ is specified by some operator $T_\ell$
and is performed with probability $p_\ell$, corresponding to ``passing'' the test. 
Let $\Omega:=\sum_\ell p_\ell T_\ell$ denote the corresponding verification strategy. In the bad case, 
the single-round worst-case passing probability that $\sigma_j$ passes the
test satisfies~\cite{pallister2018optimal}
\begin{align}\label{eq:pallister2018optimal}
P(\Omega,\varepsilon) :=  
\max_{\bra{\psi}\sigma_j\ket{\psi}\leq1-\varepsilon}\tr[\Omega\sigma_j]
= 1 - [1-\lambda_2(\Omega)]\varepsilon,
\end{align}
where $\lambda_2(\Omega)$ is the second largest eigenvalue of $\Omega$.
Correspondingly, all $N$ quantum states in the bad case can pass the test
with probability at most $[1 - (1-\lambda_2(\Omega))\varepsilon]^N$.
To achieve the failure probability $\delta$, it suffices to take
\begin{align}\label{eq:state-verification-sample-complexity}
N(\Omega)
&=\left\lceil
\frac{\ln\delta}{\ln[1-[1-\lambda_2(\Omega)]\varepsilon]}
\right\rceil,
\end{align}
where $\ln$ denotes the natural logarithm.
We call $N(\Omega)$ the sample complexity of $\Omega$ by abuse of notation.

\subsection{Quantum measurements}

The most general kind of measurements in quantum mechanics is called the
\emph{positive-operator-valued measures (POVMs)}.
A POVM is a set of operators $\{M_\theta\}_{\theta\in\Theta}$ satisfying $\forall\theta\in\Theta, M_\theta\geq0$
and $\sum_{\theta\in\Theta}M_\theta=\1$, where $\Theta$ is a finite alphabet,
$\theta\in\Theta$ records the measurement outcome, and $\1$ is the identity operator~\cite{Hayashi2017}.
Performing the measurement on a quantum state $\rho$,
the probability of outcome $\theta$ is given by the Born rule: $p_\theta = \tr[M_\theta\rho]$.
If every operator $M_\theta$ of the POVM is a rank-one projector,
then the corresponding measurement is also called a \emph{von Neumann measurement}.

An equivalent characterization of quantum measurements from the quantum channel perspective
is to regard the resulting states as purely classical states.
Mathematically, the induced measurement channel of a POVM $\{M_\theta\}_{\theta\in\Theta}$
is defined as~\cite{Hayashi2017}
\begin{align}\label{eq:measurement-channel}
  \cM(\rho) = \sum_{\theta\in\Theta}\tr[M_\theta\rho]\proj{\theta},
\end{align}
where $\{\ket{\theta}\}_{\theta\in\Theta}$ is the computational basis of the underlying Hilbert space.
Thus, a measurement channel takes a quantum system to a classical one.

The QSV task reviewed above is closely related to QMV, but it cannot be applied
directly.  In QSV, the untrusted object is a state source; in QMV, it is an
entire measurement apparatus, described by a faulty POVM
$\cM=\{M_\theta\}_{\theta\in\Theta}$ whose elements are jointly constrained by
$\sum_\theta M_\theta=\1$.  Hence the adversary may redistribute probability
among outcomes, and the worst-case passing probability is not the expectation
value of one state against one verification operator.
Verifying the Choi state of the measurement channel in Eq.~\eqref{eq:measurement-channel} 
would require process verification, while directly testing a von Neumann measurement
$\cP=\{\proj{\psi_\theta}\}_{\theta\in\Theta}$ by preparing
$\ket{\psi_\theta}$ is often unavailable when these states are entangled.  These
obstacles motivate a dedicated QMV framework with local input states, which we
now formulate.

\begin{figure*}[!t]
    \centering
    \resizebox{0.96\textwidth}{!}{%
    \begin{tikzpicture}[
      x=1cm,y=1cm,
      >=Latex,
      every node/.style={font=\footnotesize},
      box/.style={rounded corners=5pt, line width=0.7pt},
      panel/.style={box, draw=#1!65!black, fill=#1!6},
      action/.style={box, draw=#1!65!black, fill=#1!10,
        minimum width=1.75cm, minimum height=0.55cm, align=center},
      question/.style={ellipse, draw=googlegreen!60!black,
        fill=googlegreen!9, line width=0.75pt, align=center, inner sep=1.5pt,
        font=\scriptsize, text width=1.55cm, minimum width=1.85cm,
        minimum height=0.88cm},
      arrow/.style={-{Latex[length=2.2mm]}, line width=0.85pt, black!72}
    ]
      \node[box, draw=googleblue!65!black, fill=white, align=center,
        font=\small, minimum width=5.25cm, minimum height=0.58cm] (prot) at (8.20,4.3)
        {$\cS=\{p_x,(\rho_x,\cP(\rho_x))\}_{x\in\cX}$};
      \node[box, draw=googleyellow!75!black, fill=googleyellow!15,
        font=\small, align=center, minimum width=2.10cm, minimum height=0.58cm] (coin)
        at (8.20,3.50) {sample $x\sim p_x$};
      \draw[arrow] (prot.south) -- (coin.north);

      \draw[panel=googleblue] (0.00,-0.30) rectangle (4.05,2.85);
      \draw[panel=black] (5.05,-0.30) rectangle (9.15,2.85);
      \draw[panel=googlegreen] (10.45,-0.30) rectangle (16.40,2.85);

      \node[font=\bfseries, text=googleblue!55!black] at (2.025,2.56)
        {state preparation};
      \node[font=\bfseries, text=black!70] at (7.10,2.56)
        {measurement device};
      \node[font=\bfseries, text=googlegreen!45!black] at (13.425,2.56)
        {verification};

      \node[action=googleblue, minimum width=2.45cm, minimum height=0.88cm]
        (prep) at (2.025,1.34)
        {};
      \node[font=\Large] at ($(prep.center)+(0,0.14)$) {$\rho_x$};
      \node[font=\scriptsize] at ($(prep.center)+(0,-0.22)$) {test state};

      \node[box, draw=black!70, fill=black!4, align=center,
        minimum width=2.45cm, minimum height=0.88cm] (dev) at (7.10,1.34)
        {};
      \node[font=\Large] at ($(dev.center)+(0,0.14)$) {$\mathscr{D}$};
      \node[font=\scriptsize] at ($(dev.center)+(0,-0.22)$) {actual POVM $\cM$};

      \node[question,font=\small] (ask)
        at (11.825,1.34) {$\theta\in\cP(\rho_x)$?};
      \node[question,font=\small] (rounds)
        at (15.025,1.34) {$\#\geq N$?};
      \node[box, draw=googlegreen!55!black, fill=googlegreen!9,
        text=googlegreen!45!black, font=\bfseries, align=center,
        minimum width=0.80cm, minimum height=0.45cm] (bad) at (11.825,0.05)
        {Bad};
      \node[box, draw=googlered!70!black, fill=googlered!8,
        text=googlered!70!black, font=\bfseries, align=center,
        minimum width=0.80cm, minimum height=0.45cm] (good) at (15.025,0.05)
        {Good};

      \draw[arrow] (prep.east) -- (dev.west);
      \draw[arrow] (dev.east) -- (ask.west);
      \draw[arrow, dashed, googleblue!65!black]
        (coin.west) -| (2.025,2.85);

      \draw[arrow, googlegreen!45!black] (ask.south) -- node[left,
        font=\normalsize, text=googlegreen!45!black] {\xmark} (bad.north);
      \coordinate (yesmid) at ($(ask.east)!0.5!(rounds.west)$);
      \draw[arrow, googlegreen!45!black] (ask.east) -- (rounds.west);
      \node[font=\normalsize, text=googlegreen!45!black]
        at ($(yesmid)+(0,0.16)$) {\cmark};
      \draw[arrow, googlegreen!45!black] (rounds.south) -- node[right,
        font=\normalsize, text=googlegreen!45!black] {\cmark} (good.north);
      \draw[arrow, dashed, googlegreen!45!black] (rounds.north) --
        node[pos=0.18, right, font=\normalsize, text=googlegreen!45!black] {\xmark}
        (15.025,3.50) -- (coin.east);
    \end{tikzpicture}}
    \caption{\textbf{Schematic view of quantum measurement verification.}
    In each round, the protocol $\cS$ samples an index $x$ with probability $p_x$,
    which determines both the test state $\rho_x$ and the accept set $\cP(\rho_x)$.
    The device measures $\rho_x$ and returns an outcome $\theta$.
    If $\theta\notin\cP(\rho_x)$, verification outputs \textbf{Bad}.
    If $\theta\in\cP(\rho_x)$ and the prescribed total number of rounds $N$ has been reached, it outputs \textbf{Good};
    otherwise, it proceeds to the next test round.}
    \label{fig:qmv}
\end{figure*}

\section{Quantum Measurement Verification}\label{sec:qmv}

We first establish a general framework for quantum measurement verification.  
Then, by leveraging the group symmetry within von Neumann measurements, 
we show when the QMV problem constrained by locality can be reduced to a 
QSV problem for one representative basis state, 
where existing tools facilitate the design of efficient and even optimal verification strategies.

\subsection{General framework}

Consider a quantum measurement device $\mathscr{D}$ that is supposed to implement a
von Neumann measurement on $n$ parties $\cP=\{\proj{\psi_\theta}\}_{\theta\in\Theta}$,
where $\{\ket{\psi_\theta}\}_{\theta\in\Theta}$ forms an orthonormal basis of $\cH^{\otimes n}$
and $\cH$ is a Hilbert space of dimension $d$.
The total number of measurement outcomes is $\vert\Theta\vert=d^n$.
However, $\mathscr{D}$ might not work correctly and actually
realizes a POVM $\cM=\{M_\theta\}_{\theta\in\Theta}$ that is different from $\cP$.
In order to distinguish these two cases by making a reasonable number of uses of $\mathscr{D}$,
we are promised that the noisy measurement $\cM$ is sufficiently far from the target $\cP$.
That is, we need to distinguish between the following two cases with the worst-case failure probability $\delta$:
\begin{description}
  \item[\textbf{Good}] $\mathscr{D}$ accurately implements $\cP$;
  \item[\textbf{Bad}] $\mathscr{D}$ implements $\cM$, such that $\Fid(\cP,\cM) \leq 1-\varepsilon$, where
        $\Fid(\cP,\cM) := \frac{1}{\vert\Theta\vert}\sum_{\theta\in\Theta}\bra{\psi_\theta}M_\theta\ket{\psi_\theta}$
        is the measurement fidelity quantifying the closeness between $\cM$ and $\cP$ and $\varepsilon\in(0,1)$ is
        the precision.
\end{description}

The ideal measurement $\cP$ naturally assigns to each input state a
set of outcomes that may occur with nonzero probability.  More precisely,
for an input state $\rho$, define
\begin{align}
  \cP(\rho)
  :=
  \left\{
  \theta\in\Theta \;\middle|\;
  \bra{\psi_\theta}\rho\ket{\psi_\theta}>0
  \right\}.
\end{align}
This set is the support of the outcome distribution generated by the
ideal measurement $\cP$ on the input $\rho$, and it will be used as the
accept set for that input.

Once the accept set is fixed in this way by the target measurement, 
a QMV protocol is specified by
\begin{align}
  \cS=\{p_x,(\rho_x,\cP(\rho_x))\}_{x\in\cX},
\end{align}
where $\rho_x$ is chosen with probability $p_x$
and $\cP(\rho_x)$ denotes the associated accept set of measurement outcomes.
A single round of the protocol proceeds as follows. 
The verifier samples $x$ according to
$p_x$, prepares $\rho_x$, and feeds it into the device $\mathscr D$.
If the returned outcome $\theta$ belongs to $\cP(\rho_x)$, the device
passes this round; otherwise it fails.  
The rounds are repeated until either a failure occurs or 
the prescribed number $N$ of rounds has been reached.
If any round fails, the verifier stops and declares that
$\mathscr D$ is in the \textbf{Bad} case.  
If all $N$ rounds pass, the verifier declares that $\mathscr D$ is in the \textbf{Good} case. 
The verification procedure is illustrated in Figure~\ref{fig:qmv}.
The choice of $\cP(\rho_x)$ enjoys two crucial advantages:
1) it guarantees that if the device \emph{is} in the \textbf{Good} case, it always passes the test; and
2) it is optimal in the sense that any strictly larger set than $\cP(\rho_x)$ can only increase
the chance of being fooled by an adversary~\cite{pallister2018optimal}.

Given a protocol $\cS$, we associate with each outcome $\theta\in\Theta$
the operator
\begin{align}
  \Omega_\theta(\cS) := \sum_{x\in\cX:\theta\in\cP(\rho_x)} p_x\rho_x.
  \label{eq:outcome-verification-operator}
\end{align}
We call $\Omega_\theta(\cS)$ the verification operator associated with
outcome $\theta$.  It collects, with their sampling probabilities, all
test states for which the ideal outcome $\theta$ is accepted.

Now fix an actual POVM $\cM=\{M_\theta\}_{\theta\in\Theta}$ implemented
by $\mathscr D$. The single-round passing probability of the protocol
$\cS$ against $\cM$ is
\begin{subequations}\label{eq:single-round-passing-probability}
\begin{align}
  p_{\rm pass}(\cS,\cM)
  &=
  \sum_{x\in\cX} p_x
  \sum_{\theta\in\cP(\rho_x)}
  \tr[\rho_x M_\theta] \\
  &=
  \sum_{\theta\in\Theta}
  \tr\!\left[\Omega_\theta(\cS)M_\theta\right].
\end{align}  
\end{subequations}
The second equality is just a rearrangement of the sum according to the
reported outcome.

The relevant figure of merit for QMV is the largest value of
this single-round passing probability over all POVMs in the
\textbf{Bad} case. 
We therefore define the single-round worst-case passing probability as
\begin{subequations}\label{eq:worst-case failure probability}
\begin{align}
  P(\cS,\varepsilon)
:=&\;
  \max_{\cM:\sss\Fid(\cP,\cM)\leq 1-\varepsilon}
  p_{\rm pass}(\cS,\cM) \\
=&\;
  \max_{\cM:\sss\Fid(\cP,\cM)\leq 1-\varepsilon}
  \sum_{\theta\in\Theta}
  \tr\!\left[\Omega_\theta(\cS)M_\theta\right].
\end{align}
\end{subequations}

We repeat the above verification round $N$ times independently.
An error occurs if the quantum device is in the \textbf{Bad} case
but we incorrectly decide that it is in the \textbf{Good} case.
The worst-case probability of this error is $P(\cS,\varepsilon)^N$,
since the rounds are carried out independently.
Thus, to achieve confidence $1-\delta$, it suffices to take
$P(\cS,\varepsilon)^N \leq \delta$, yielding the minimum number of tests
\begin{align}\label{eq:number-of-copies}
N(\cS,\varepsilon,\delta) = \left\lceil\frac{\ln\delta}{\ln P(\cS,\varepsilon)}\right\rceil,
\end{align}
assuming that $P(\cS,\varepsilon)<1$.
Our task is to construct good verification protocols that minimize $P(\cS,\varepsilon)$.

For completeness, Appendix~\ref{appx:sec:properties} collects several
basic properties of the maximal passing probability
$P(\cS,\varepsilon)$.  These observations clarify the behavior of the
figure of merit under refinement of test states, variation of
$\varepsilon$, mixing of protocols, and local equivalence of target
measurements.

\subsection{Optimal verification without constraints}

If there were no constraints on the set of test states,
the optimal QMV protocol has been proposed in~\cite{liu2020efficient}
and works as follows. We randomly prepare an input state $\ket{\psi_\theta}$
with probability $1/\vert\Theta\vert$ and measure it with $\mathscr{D}$.
If the measurement outcome is $\theta$, $\mathscr{D}$ passes the test;
otherwise $\mathscr{D}$ fails the test. Namely,
\begin{align}
  \cS_{\rm opt} = \left\{1/\vert\Theta\vert, (\proj{\psi_\theta}, \{\theta\})\right\}_{\theta\in\Theta}.
\end{align}
The maximal probability of this protocol reads
\begin{align}
  P(\cS_{\rm opt},\varepsilon)
= \max_{\sss\Fid(\cP,\cM)\leq 1-\varepsilon}
  \frac{1}{\vert\Theta\vert}\sum_{\theta\in\Theta}\tr[\psi_\theta M_\theta]
= 1-\varepsilon,
\end{align}
where $\psi_\theta:=\proj{\psi_\theta}$ and the upper bound follows from the definition of $\Fid(\cP,\cM)$ and is
tight by mixing the ideal measurement with a fixed point free relabeling of the
outcomes.
Then, Eq.~\eqref{eq:number-of-copies} implies that we can verify $\mathscr{D}$ with the number of tests
\begin{align}\label{eq:optimal-copies}
N_{\rm opt}
  = \left\lceil \frac{\ln\delta}{\ln(1-\varepsilon)}\right\rceil
  \approx
  \left\lceil
  \frac{1}{\varepsilon}\ln\frac{1}{\delta}
  \right\rceil,
\end{align}
where we have used the fact that $\lim_{x\to0}\ln(1+x)=x$.

The above protocol requires that we are able to prepare possibly highly
entangled states $\{\ket{\psi_\theta}\}_\theta$, which is experimentally challenging.
It is thus of practical interest to devise efficient or optimal QMV protocols
under reasonable physical constraints.
Nevertheless, it will serve as the benchmark against which we compare the
more constrained protocols investigated below.

\subsection{Symmetries simplify QMV with locality constraint}\label{sec:symmetries-simplify-QMV}

We now impose the locality constraint on the verification protocol: every test state
in $\cS$ must be a local product state. This constraint is essential for
experimental implementability, but it also makes the optimization in
Eq.~\eqref{eq:worst-case failure probability} difficult. The main point of this
section is that a natural symmetry of the target measurement removes most of
this difficulty: under suitable conditions, 
locality constrained QMV reduces to
locality constrained QSV for one representative basis state. 

\begin{definition}\label{def:locally-transitive}
Let $\cP=\{\proj{\psi_\theta}\}_{\theta\in\Theta}$ be a von Neumann measurement.
We say that $\cP$ is \textbf{locally transitive} if there exist a finite group
$G$ acting transitively on $\Theta$ and a local unitary representation $V$ of
$G$ on $\cH^{\otimes n}$ such that
\begin{align}\label{eq:A-symmetry}
  V(g)\proj{\psi_\theta}V^\dagger(g) = \proj{\psi_{g\theta}},\quad \forall g\in G,
\end{align}
where $V(g)=\bigotimes_{i=1}^n V_i(g)$ and each $V_i(g)$ acts on the local Hilbert space $\cH$.
\end{definition}

Given a local protocol $\cS=\{p_x,(\rho_x,\cP(\rho_x))\}_{x\in\cX}$, its group symmetrization is
\begin{align}
\bar{\cS}
=\left\{\frac{p_x}{|G|},
\left(V(g)\rho_xV^\dagger(g),\,g\cP(\rho_x)\right)
\right\}_{x\in\cX,g\in G},
\end{align}
where $g\cP(\rho_x):=\{g\theta:\theta\in\cP(\rho_x)\}$. Since $V(g)$ is local,
$\bar{\cS}$ is still a local protocol. We call $\cS$ \textbf{symmetric} if
this symmetrization leaves the protocol unchanged up to relabeling.
We prove the following in Appendix~\ref{appx:B1}.

\begin{proposition}\label{prop:symmetric-protocols}
If $\cP$ is locally transitive, then a local protocol $\cS$ can be replaced
by a symmetric local protocol $\bar{\cS}$ satisfying
\begin{align}\label{eq:symmetrization-improves}
P(\bar{\cS},\varepsilon)\leq P(\cS,\varepsilon).
\end{align}
Consequently, optimal local QMV protocols may be searched for within the class
of symmetric protocols.
\end{proposition}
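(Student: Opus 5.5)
The plan is to show two things: that $\bar{\cS}$ is a legitimate local protocol, with accept sets given by the rule $\cP(\cdot)$, and that its passing probability against any bad POVM equals an average of passing probabilities of the original protocol $\cS$ against other bad POVMs. The inequality~\eqref{eq:symmetrization-improves} then follows by taking the maximum.

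\emph{Step 1 (accept sets are covariant).} By Eq.~\eqref{eq:A-symmetry}, $V^\dagger(g)\ket{\psi_\theta}$ equals $\ket{\psi_{g^{-1}\theta}}$ up to a global phase. Hence
\[
\bra{\psi_\theta}V(g)\rho_xV^\dagger(g)\ket{\psi_\theta}=\bra{\psi_{g^{-1}\theta}}\rho_x\ket{\psi_{g^{-1}\theta}},
\]
and so $\cP(V(g)\rho_xV^\dagger(g))=g\cP(\rho_x)$. Thus $\bar{\cS}$ uses exactly the prescribed accept sets. Since each $V(g)$ is a tensor product of local unitaries, every $V(g)\rho_xV^\dagger(g)$ remains a product state. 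From definition~\eqref{eq:outcome-verification-operator} we then get
\[
\Omega_\theta(\bar{\cS})=\frac{1}{|G|}\sum_{g\in G}V(g)\,\Omega_{g^{-1}\theta}(\cS)\,V^\dagger(g).
\]

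\emph{Step 2 (averaging over twisted POVMs).} Fix a POVM $\cM$ with $\Fid(\cP,\cM)\le 1-\varepsilon$. For each $g$, define $\cM^g=\{M^g_{\theta}\}$ with $M^g_{\theta}:=V^\dagger(g)M_{g\theta}V(g)$. Each $\cM^g$ is a POVM, because $\theta\mapsto g\theta$ is a bijection of $\Theta$ and conjugation by a unitary preserves positivity and the identity. Its fidelity is
\[
\Fid(\cP,\cM^g)=\frac{1}{|\Theta|}\sum_\theta\bra{\psi_{g\theta}}M_{g\theta}\ket{\psi_{g\theta}}=\Fid(\cP,\cM),
\]
since the phases cancel. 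Substituting Step 1 into Eq.~\eqref{eq:single-round-passing-probability} and re-indexing $\theta\to g\theta$ gives
\[
p_{\rm pass}(\bar{\cS},\cM)=\frac{1}{|G|}\sum_{g}p_{\rm pass}(\cS,\cM^g)\le P(\cS,\varepsilon).
\]
Maximizing the left-hand side over bad $\cM$ yields the inequality~\eqref{eq:symmetrization-improves}. This step does not even use transitivity; transitivity only matters later, when the symmetric structure is exploited for the reduction to a single basis state.

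\emph{Step 3 (symmetry).} Symmetrizing $\bar{\cS}$ once more produces the entries $\bigl(V(hg)\rho_xV^\dagger(hg),\,hg\cP(\rho_x)\bigr)$ with weight $p_x/|G|^2$, using $V(h)V(g)=V(hg)$. Since $g\mapsto hg$ is a bijection of $G$, collecting equal entries recovers $\bar{\cS}$ up to relabeling, so $\bar{\cS}$ is symmetric. Because $\bar{\cS}$ is local and never worse than $\cS$, any optimal local protocol can be replaced by a symmetric one.

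The main point needing care is Step 1: one must check that the accept set transforms exactly as $g\cP(\rho_x)$ and that the projective (phase) ambiguity in $V$ never matters. Both hold because only the projectors $\proj{\psi_\theta}$ enter the accept sets and the fidelity.
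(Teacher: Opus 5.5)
Your proposal is correct and follows essentially the same route as the paper. The paper combines your twisted POVMs into the single averaged POVM $\bar M_\theta=\frac{1}{|G|}\sum_{g}V^\dagger(g)M_{g\theta}V(g)$ and shows $p_{\rm pass}(\bar{\cS},\cM)=p_{\rm pass}(\cS,\bar{\cM})$ with $\Fid(\cP,\bar{\cM})=\Fid(\cP,\cM)$; by linearity in the POVM this is exactly your identity $p_{\rm pass}(\bar{\cS},\cM)=\frac{1}{|G|}\sum_g p_{\rm pass}(\cS,\cM^g)$, while your explicit checks of $\cP(V(g)\rho_xV^\dagger(g))=g\cP(\rho_x)$ and of the symmetry of $\bar{\cS}$ supply steps the paper uses without proof.
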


The second symmetry condition concerns the stabilizer of one outcome.

\begin{definition}\label{def:irreducibility}
Let $\cP=\{\proj{\psi_\theta}\}_{\theta\in\Theta}$ be locally transitive with
respect to $G$. For each $\theta\in\Theta$, define the stabilizer
$H_\theta:=\{g\in G:g\theta=\theta\}$. We say that $\cP$ is
\textbf{irreducible} if the one dimensional space spanned by
$\ket{\psi_\theta}$ is an irreducible subspace of the representation of
$H_\theta$, and no irreducible subspace of its orthogonal complement is
equivalent to this one dimensional subspace. This condition is independent of
the choice of $\theta$.
\end{definition}

These two symmetry conditions play complementary roles. Local transitivity
allows any local protocol to be symmetrized without increasing the worst-case
passing probability, while irreducibility ensures that the symmetrized
verification operator for any one outcome has the same spectral structure as a
pure state verification strategy. Together, they reduce the design of efficient
local QMV protocols to the design of efficient local QSV protocols for a single
representative state $\ket{\psi_\theta}$. 
This reduction is formalized in the following theorem.

\begin{theorem}\label{thm:verification-efficiency}
Let $\cP=\{\proj{\psi_\theta}\}_{\theta\in\Theta}$ be a von Neumann
measurement that is both locally transitive and irreducible.  
Let $\cS$ be a symmetric local QMV protocol of $\cP$, and define
\begin{align}
  T_\theta(\cS):=|\Theta|\Omega_\theta(\cS),
  \qquad \theta\in\Theta .
\end{align}
The following statements hold:
\begin{enumerate}
\item For every $\theta\in\Theta$, $T_\theta(\cS)$ is a valid
verification operator for the state $\ket{\psi_\theta}$; namely,
\begin{align}
  0\leq T_\theta(\cS)\leq \1,
  \qquad
  T_\theta(\cS)\ket{\psi_\theta}=\ket{\psi_\theta}.
\end{align}
\item The QMV worst-case passing probability is equal to the QSV
worst-case passing probability for any fixed outcome $\theta$;
namely, for any $\theta\in\Theta$,
\begin{align}\label{eq:qmv-to-qsv}
  P(\cS,\varepsilon)
  =
  \max_{\substack{\sigma\geq0,\ \tr[\sigma]=1:\\
  \bra{\psi_{\theta}}\sigma\ket{\psi_{\theta}}\leq1-\varepsilon}}
  \tr[T_\theta(\cS)\sigma].
\end{align}
In particular, the value of the maximization on the right 
hand side is independent of the choice of $\theta$.
\item The second largest eigenvalue of $T_\theta(\cS)$ is independent
of $\theta$.  Denoting this common value by $\lambda_2(\cS)$, we have
\begin{align}
  P(\cS,\varepsilon)
  =
  1-\bigl[1-\lambda_2(\cS)\bigr]\varepsilon .
\end{align}
Equivalently, for every $\theta\in\Theta$,
\begin{align}
  \lambda_2(T_\theta(\cS))=\lambda_2(\cS).
\end{align}
\end{enumerate}
\end{theorem}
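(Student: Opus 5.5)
The plan is to first extract the covariance of the verification operators from the symmetry of $\cS$, then establish the three items in order: validity of $T_\theta$, the reduction of the QMV maximization to a QSV maximization, and the spectral statement.

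\emph{Covariance.} Since $\cS$ is symmetric, $\cS=\bar\cS$ up to relabeling. Since $V(g)$ maps $\ket{\psi_\theta}$ to $\ket{\psi_{g\theta}}$ up to phase, we have $\cP(V(g)\rho V^\dagger(g))=g\cP(\rho)$. Together these give
\begin{align}
  V(g)\,\Omega_\theta(\cS)\,V^\dagger(g)=\Omega_{g\theta}(\cS)\qquad \forall g\in G .
\end{align}
In particular, $T_\theta$ commutes with $V(h)$ for all $h\in H_\theta$.

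\emph{Item 1.} Positivity of $T_\theta$ is immediate. For the upper bound we use
\begin{align}
  \sum_\theta \Omega_\theta=\sum_x p_x|\cP(\rho_x)|\rho_x ,
\end{align}
but this alone does not give $T_\theta\le\1$, so I would argue differently. Each $\rho_x$ that appears in $\Omega_\theta$ satisfies $\rho_x\le\1$, but summing gives $\Omega_\theta\le\sum_x p_x\rho_x=:\bar\rho$. The state $\bar\rho$ is $G$-invariant. By transitivity and irreducibility, $\bar\rho$ should be $\1/|\Theta|$: it commutes with every $V(g)$, and averaging $\bra{\psi_\theta}\bar\rho\ket{\psi_\theta}$ over $\theta$ gives $1/|\Theta|$. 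However, this yields only diagonal constancy in the $\psi$ basis. I would instead bound $\Omega_\theta\le\sum_x p_x\rho_x$ and use the symmetrized form $\sum_x p_x\rho_x=\frac1{|G|}\sum_g V(g)\bar\rho V^\dagger(g)$. I expect this step to be the main obstacle, and my first attempt would be to show that $\bar\rho$ is diagonal in $\{\ket{\psi_\theta}\}$ using the $H_\theta$-irreducibility condition.

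For the eigenvector condition, note that $\bra{\psi_\theta}\rho_x\ket{\psi_\theta}=0$ whenever $\theta\notin\cP(\rho_x)$, so
\begin{align}
  \bra{\psi_\theta}\Omega_\theta\ket{\psi_\theta}=\sum_x p_x\bra{\psi_\theta}\rho_x\ket{\psi_\theta}=\bra{\psi_\theta}\bar\rho\ket{\psi_\theta}=1/|\Theta| .
\end{align}
Hence $\bra{\psi_\theta}T_\theta\ket{\psi_\theta}=1$, and combined with $T_\theta\le\1$ this gives $T_\theta\ket{\psi_\theta}=\ket{\psi_\theta}$. Irreducibility is the reason $\ket{\psi_\theta}$ is an eigenvector at all: $T_\theta$ commutes with $H_\theta$, and the isotypic component of $\ket{\psi_\theta}$ is one-dimensional by Schur's lemma, so $T_\theta$ is block diagonal as $\proj{\psi_\theta}\oplus$ (the rest).

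\emph{Item 2.} For the upper bound, let $\cM$ be a POVM with $\Fid(\cP,\cM)\le1-\varepsilon$. Define $\sigma_\theta:=M_\theta/\tr M_\theta$ and $q_\theta:=\tr M_\theta/|\Theta|$, which is a probability distribution. Then
\begin{align}
  p_{\rm pass}=\sum_\theta q_\theta\tr[T_\theta\sigma_\theta],\qquad \Fid=\sum_\theta q_\theta\bra{\psi_\theta}\sigma_\theta\ket{\psi_\theta}.
\end{align}
Each term $\tr[T_\theta\sigma]$ is at most $1-(1-\lambda_2)(1-\bra{\psi_\theta}\sigma\ket{\psi_\theta})$ by Eq.~\eqref{eq:pallister2018optimal}; this is affine in the fidelity, with $\lambda_2$ common to all $\theta$ by covariance, since $T_{g\theta}=V(g)T_\theta V(g)^\dagger$. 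Averaging over $\theta$ gives $p_{\rm pass}\le1-(1-\lambda_2)\varepsilon$.

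For achievability, take the optimal QSV state $\sigma_{\theta_0}$ for a fixed $\theta_0$ and set $M_{g\theta_0}$ equal to the symmetrized copies $V(g)\sigma_{\theta_0}V(g)^\dagger$, rescaled to form a POVM. The sum $\sum_\theta M_\theta$ is $G$-invariant, but it is not obviously $\1$, and this is a second delicate point. As a fallback, I would mix $\cP$ with a relabeled or noisy POVM built from the second eigenvector, $M_\theta=(1-\varepsilon)\proj{\psi_\theta}+\varepsilon\,\proj{\phi_\theta}$, where $\{\phi_\theta\}$ is obtained by transporting the $\lambda_2$-eigenvector with $G$; it then has to be checked that this forms a POVM. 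That check again uses irreducibility together with the decomposition of $\sum_g V(g)\proj{\phi}V(g)^\dagger$.

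\emph{Item 3.} The independence of $\lambda_2(T_\theta)$ from $\theta$ follows from unitary conjugacy. The formula for $P(\cS,\varepsilon)$ then combines Item 2 with Eq.~\eqref{eq:pallister2018optimal}.
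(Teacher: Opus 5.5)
Your covariance step, your normalization $\bra{\psi_\theta}T_\theta\ket{\psi_\theta}=1$ via $\bar\rho:=\sum_x p_x\rho_x$, and your outcome-wise upper bound are sound. The upper bound writes $M_\theta=|\Theta|q_\theta\sigma_\theta$ and applies the pointwise QSV bound, which is a valid alternative to the paper's group average $\sigma_{\theta_0}(\cM)$. The gap is that both places you flag as delicate are left open, and they need one fact you never state: under local transitivity and irreducibility, every operator $A$ with $V(g)AV^\dagger(g)=A$ for all $g\in G$ is a scalar multiple of $\1$.

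You already have the ingredients. Such an $A$ commutes with $V(h)$ for $h\in H_\theta$, so your Schur argument gives $A\ket{\psi_\theta}=c_\theta\ket{\psi_\theta}$ for every $\theta$. $G$-invariance plus transitivity then make $c_\theta$ independent of $\theta$. Applied to $\bar\rho$, this gives $\bar\rho=\1/|\Theta|$ and hence $T_\theta\le|\Theta|\bar\rho=\1$. That is a shorter route to Item 1 than the paper's, which deduces $T_{\theta_0}\le\1$ from $p_{\rm pass}\le 1$ for lifted POVMs. As written, however, Item 1 is not established. Your upper bound inherits the gap, because the pointwise bound $\tr[T_\theta\sigma]\le 1-(1-\lambda_2)(1-\bra{\psi_\theta}\sigma\ket{\psi_\theta})$ and the final step using $1-\Fid(\cP,\cM)\ge\varepsilon$ both require $T_\theta$ to be a valid verification operator.

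The second gap is achievability. Without it you only have $P(\cS,\varepsilon)\le 1-[1-\lambda_2(\cS)]\varepsilon$, not the equalities in Items 2 and 3. Your first construction is well defined only if you average over all $g$ with $g\theta_0=\theta$, i.e.\ twirl $\sigma$ over $H_{\theta_0}$. Moreover, a scalar rescaling cannot repair $\sum_\theta M_\theta$ unless that sum is already scalar.

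The paper's Lemma~\ref{lem:povm-lifting} supplies exactly this repair, in three steps:
\begin{enumerate}
\item Twirl the optimal QSV state over $H_{\theta_0}$. This preserves $\tr[T_{\theta_0}\sigma]$ and $\bra{\psi_{\theta_0}}\sigma\ket{\psi_{\theta_0}}$, because $T_{\theta_0}$ and $\proj{\psi_{\theta_0}}$ are $H_{\theta_0}$-invariant.
\item Set $M_\theta:=V(g_\theta)\sigma_H V^\dagger(g_\theta)$, which is now independent of the representative $g_\theta$.
\item Observe that $\sum_\theta M_\theta$ is $G$-invariant, hence scalar, hence equal to $\1$ because its trace is $|\Theta|=d^n$.
\end{enumerate}
No rescaling is needed, and one gets $\Fid(\cP,\cM)=\bra{\psi_{\theta_0}}\sigma_H\ket{\psi_{\theta_0}}$ and $p_{\rm pass}(\cS,\cM)=\tr[T_{\theta_0}\sigma_H]$.

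Your fallback with transported pure vectors $\ket{\phi_\theta}$ can genuinely fail. Suppose the $\lambda_2$-eigenspace of $T_{\theta_0}$ contains no one-dimensional $H_{\theta_0}$-invariant subspace. Then $V(g)\proj{\phi}V^\dagger(g)$ again depends on the coset representative, and the transported vectors need not form an orthonormal basis. In that case $\sum_\theta\proj{\phi_\theta}\neq\1$ in general. You need an $H_{\theta_0}$-invariant mixed state instead, which is exactly what the stabilizer twirl supplies.
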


Theorem~\ref{thm:verification-efficiency} is the main reduction result and is shown in
Appendix~\ref{appx:sec:proof-of-theorem-verification-efficiency}.
It turns the original optimization over faulty POVMs in
Eq.~\eqref{eq:worst-case failure probability} into the familiar QSV problem
of minimizing the second eigenvalue of a single operator $T_\theta(\cS)$.
The representative outcome $\theta$ is arbitrary, so all outcomes are verified
with the same efficiency.

A particularly useful case is when the symmetric protocol is homogeneous:
\begin{align}\label{eq:homogeneous}
  \Omega_\theta(\cS)
  = \frac{\proj{\psi_\theta}+\beta(\1-\proj{\psi_\theta})}{d^n},
  \quad \forall \theta\in\Theta .
\end{align}
Then $T_\theta(\cS)=\proj{\psi_\theta}+\beta(\1-\proj{\psi_\theta})$ and
Theorem~\ref{thm:verification-efficiency} immediately gives
\begin{align}\label{eq:homogeneous-failure-probability}
P(\cS,\varepsilon)=1-(1-\beta)\varepsilon.
\end{align}
Accordingly, the number of tests required to achieve significance $\delta$ is
\begin{align}\label{eq:number-of-copies-2}
N(\cS,\varepsilon,\delta)
=\left\lceil
\frac{\ln\delta}{\ln[1-(1-\beta)\varepsilon]}
\right\rceil.
\end{align}
Thus, after symmetry reduction, designing an efficient local QMV protocol
amounts to constructing local test states that make $\lambda_2(T_\theta)$,
or equivalently $\beta$ in the homogeneous case, as small as possible.

\section{Applications}\label{sec:applications}

In this section, we apply the symmetry viewpoint of
Theorem~\ref{thm:verification-efficiency} to several representative measurements.
For each example, we first identify the local symmetry that
permutes the measurement outcomes, then construct a local protocol and display
its verification operators.  When both preconditions of
Theorem~\ref{thm:verification-efficiency} are satisfied, this reduces the
QMV problem constrained by locality to verification of a single representative
state; otherwise, we give a direct calculation of the homogeneous operator.

\subsection{Generalized Bell measurement}\label{sec:GBM}

The generalized Bell measurement (GBM) is the canonical joint measurement of two qu\textit{d}its
$\cH^{\otimes 2}\equiv\cH_A\otimes \cH_B$ onto a maximally entangled basis.
It is a fundamental primitive in quantum information theory, most notably in quantum
teleportation~\cite{bennett1993teleporting}. 
We use the symmetry reduction of Theorem~\ref{thm:verification-efficiency} to obtain the 
optimal local state verification protocol for the GBM.

Let $\bZ_d:=\{0,\cdots, d-1\}$.
The canonical maximally entangled state in a Hilbert space of dimension $d^2$,
$\cH^{\otimes 2}$, is formally defined as
\begin{align}\label{eq:entangled-qudit-mes}
  \vert\Psi^\star\rangle := \frac{1}{\sqrt{d}}\sum_{j\in\bZ_d}\ket{jj},
\end{align}
where $\{\ket{j}:j\in\bZ_d\}$ is the computational basis of $\cH$.
Let $\{W_{x,z}\}_{(x,z)\in\bZ_d^2}$ be the set of discrete Weyl operators on $\cH$;
see Appendix~\ref{appx:sec:GBM} for the detailed definition.
The GBM on $\cH^{\otimes 2}$ is defined as
\begin{align}\label{eq:GBM}
  \cB = \left\{\vert\Psi_{x,z}\rangle\!\langle\Psi_{x,z}\vert : x,z\in \bZ_d\right\},
\end{align}
where $\vert\Psi_{x,z}\rangle:=(\1\otimes W_{x,z})\vert\Psi^\star\rangle$.

Now we show that the GBM satisfies the two symmetry conditions required by
Theorem~\ref{thm:verification-efficiency}.  Let $G$ be the finite local Weyl group;
after suppressing irrelevant phases, its action is represented by
\begin{align}
V((x,z),(x',z') ):=W_{x,z} \otimes W_{x',z'}
\end{align}
for $((x,z),(x',z'))\in \bZ_d^2 \times \bZ_d^2$.
This action permutes the $d^2$ Bell states transitively.  The stabilizer of
$\vert\Psi^\star\rangle$ is generated, up to phases, by
$\{ W_{x,z} \otimes W_{x,-z} :(x,z) \in \bZ_d^2 \}$, under which the Bell basis
decomposes into one dimensional irreducible subspaces.  Hence the GBM is locally
transitive and irreducible.

By Theorem~\ref{thm:verification-efficiency}, the QMV problem constrained by locality
therefore reduces to a single symmetric pure state verification problem.  For the
reference outcome $(0,0)$, set $T=d^2\Omega_{0,0}(\cS)$.  
The optimal single-round worst-case passing probability can be written as
\begin{align}
 P^\star_{\cB,\mathrm{loc}}(\varepsilon)
 :=\min_{\cS\,\mathrm{local}}P(\cS,\varepsilon)
 =\min_T\max_{\sss\bra{\Psi^\star}\sigma\ket{\Psi^\star}\leq 1-\varepsilon}
  \tr [T \sigma],\label{NBF4}
\end{align}
where the minimization is over separable verification operators satisfying
$0\le T\le \1$ and $T\vert\Psi^\star\rangle=\vert\Psi^\star\rangle$.
Thus the QMV problem is reduced to separable verification of the maximally
entangled state $\vert\Psi^\star\rangle$.  The remaining task is to solve this
minimax problem and lift the optimal representative operator back to all
GBM outcomes.  The following corollary states the optimal local 
verification protocol and sample complexity; the proof and explicit MUB
construction are given in Appendix~\ref{appx:sec:GBM}.

\begin{corollary}\label{thm:GBM}
Assume that $d$ is prime.  The generalized Bell measurement $\cB$ defined in
Eq.~\eqref{eq:GBM} admits an optimal local verification protocol $\cS_{\cB}$ with
verification operators
\begin{align}\label{eq:GBM-verification-operator}
\Omega_{x,z}(\cS_{\cB})
&=\frac{1}{d^2}\Bigl[
|\Psi_{x,z}\rangle\langle \Psi_{x,z}| \notag\\
&\qquad
+\frac{1}{d+1}
\left(\1-|\Psi_{x,z}\rangle\langle \Psi_{x,z}|\right)
\Bigr].
\end{align}
Consequently,
\begin{align}
P(\cS_{\cB},\varepsilon)
=P^\star_{\cB,\mathrm{loc}}(\varepsilon)
=1-\frac{d}{d+1}\varepsilon,
\end{align}
and the exact number of tests needed to verify $\cB$ within infidelity
$\varepsilon$ and significance level $\delta$ is
\begin{align}\label{eq:GBM-copies}
  N_{\cB}
  = \left\lceil
  \frac{\ln\delta}{\ln P(\cS_{\cB},\varepsilon)}
  \right\rceil
\approx
  \left\lceil
  \frac{d+1}{d}\cdot\frac{1}{\varepsilon}\ln\frac{1}{\delta}
  \right\rceil.
\end{align}
\end{corollary}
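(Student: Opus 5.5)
The proof has two parts: achievability by an explicit MUB construction, and optimality via the reduction in Eq.~\eqref{NBF4}.

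\textbf{Achievability.} Since $d$ is prime, $\cH$ admits $d+1$ mutually unbiased bases $\{\ket{e^{(k)}_j}\}_{j\in\bZ_d}$, $k=0,\dots,d$. These are the eigenbases of the Weyl operators $W_{1,z}$ for $z\in\bZ_d$ together with $W_{0,1}$. Use the well-known identity for the maximally entangled state
\begin{align}
\sum_{k=0}^{d}\sum_{j\in\bZ_d}\proj{e^{(k)}_j}\otimes\proj{\overline{e^{(k)}_j}}=\1+d\,\proj{\Psi^\star},
\end{align}
which follows from the fact that a complete set of MUBs forms a complex projective 2-design. Dividing by $d+1$ gives a separable operator
\begin{align}
T=\proj{\Psi^\star}+\frac{1}{d+1}\bigl(\1-\proj{\Psi^\star}\bigr).
\end{align}
Each term $\ket{e^{(k)}_j}\ket{\overline{e^{(k)}_j}}$ is a product test state.

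The protocol $\cS_\cB$ then works as follows. Pick $k$ uniformly and $j$ uniformly, and prepare $\ket{e^{(k)}_j}\otimes\ket{\overline{e^{(k)}_j}}$. Then apply a uniformly random local Weyl $V(g)$, i.e.\ symmetrize as in Proposition~\ref{prop:symmetric-protocols}. The accept set is $\cP(\rho)$.

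Next compute $\Omega_{x,z}(\cS_\cB)$. For the product state $\ket{e}\ket{\bar e}$, the accept set consists of the Bell outcomes with nonzero overlap. Each such state is a uniform-probability mixture across the $d$ Bell states $\Psi_{x,z}$ that share its stabilizer line, namely the Weyl operators commuting with the basis. I will check that summing over the protocol reproduces $T$ for outcome $(0,0)$, so that $\Omega_{0,0}=T/d^2$. For this I need the decomposition $\proj{e}\otimes\proj{\bar e}=\frac1d\sum_{\text{Bell states on line}}\Psi$ restricted to the relevant diagonal subalgebra, together with the fact that the accept set of $\ket{e^{(k)}_j}\ket{\overline{e^{(k)}_j}}$ contains $(0,0)$.

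Covariance under the local Weyl group, which is already verified in the text, then gives Eq.~\eqref{eq:GBM-verification-operator} for all $(x,z)$. The protocol is homogeneous with $\beta=1/(d+1)$, so Eq.~\eqref{eq:homogeneous-failure-probability} gives $P=1-\frac{d}{d+1}\varepsilon$. The sample count then follows from Eq.~\eqref{eq:number-of-copies-2} and $\ln(1-x)\approx-x$.

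\textbf{Optimality.} By Theorem~\ref{thm:verification-efficiency} and Eq.~\eqref{NBF4}, it suffices to show that every separable $T$ with $0\le T\le\1$ and $T\ket{\Psi^\star}=\ket{\Psi^\star}$ satisfies $\lambda_2(T)\ge 1/(d+1)$. I will use the standard bound for separable verification of maximally entangled states. For any product state $\ket{a}\ket{b}$, $|\langle\Psi^\star|ab\rangle|^2\le 1/d$. Writing $T=\sum_i c_i\proj{a_ib_i}$ gives
\begin{align}
1=\bra{\Psi^\star}T\ket{\Psi^\star}\le \frac{1}{d}\tr T.
\end{align}
Hence $\tr T\ge d$. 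On the other hand,
\begin{align}
\tr T\le 1+(d^2-1)\lambda_2(T),
\end{align}
so $\lambda_2(T)\ge (d-1)/(d^2-1)=1/(d+1)$. This matches the construction and yields $P^\star_{\cB,\mathrm{loc}}=1-\frac{d}{d+1}\varepsilon$.

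\textbf{Main obstacle.} The delicate step is verifying that the symmetrized MUB protocol's $\Omega_{0,0}$ exactly equals $T/d^2$, given that accept sets contain $d$ outcomes rather than one. I would first carry this out explicitly in the Weyl basis, using that each product MUB state is a uniform mixture of $d$ commuting Bell projectors plus off-diagonal terms, and that these off-diagonal terms cancel after twirling over the local Weyl group.
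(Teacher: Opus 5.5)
\textbf{Gap: the computation of $\Omega_{0,0}$.} Your construction and your optimality argument are sound. However, the step you leave open, showing $\Omega_{0,0}(\cS_{\cB})=T/d^2$, is the whole content of the achievability claim, and your plan for it aims at the wrong difficulty.

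By definition, $\Omega_{0,0}$ is the weighted sum of the test states whose accept set contains $(0,0)$. How many other outcomes those accept sets contain plays no role, and no cancellation of off-diagonal terms is needed. What is missing is the identification of which tests in your ensemble contain $(0,0)$.

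Your twirled ensemble is exactly the paper's $\cS_{\cB}$. Local Weyl operators permute $\mathscr{B}_W$ and $\mathscr{B}_{\overline W}$ up to phases. So the tests are $\ket{u,u'}$ with $W\in\mathscr M$, $\ket u\in\mathscr B_W$, $\ket{u'}\in\mathscr B_{\overline W}$, each of weight $1/(d^2(d+1))$.

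The fact you need is the overlap formula $\langle\Psi_{0,0}|u,u'\rangle=\frac{1}{\sqrt d}\langle\overline u|u'\rangle$. Since $\ket{\overline u}$ is itself a member of the orthonormal basis $\mathscr B_{\overline W}$, this overlap is nonzero iff $\ket{u'}=\ket{\overline u}$ up to a phase. Hence the tests accepted by $(0,0)$ are exactly the $d(d+1)$ states $\ket{u,\overline u}$. It follows immediately that $\Omega_{0,0}$ is your MUB identity divided by $d^2(d+1)$, i.e.\ $T/d^2$.

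Be careful with ``off-diagonal terms cancel after twirling over the local Weyl group''. Twirling a fixed operator over the full group $\{W_{a,b}\otimes W_{c,e}\}$ gives a multiple of $\1$. The only invariance $\Omega_{0,0}$ has is under the stabilizer $\{W_{a,b}\otimes W_{a,-b}\}$. That invariance would force Bell-diagonality if you pursued your route, but you would still need the accept sets of the translated tests, which requires the same overlap computation.

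\textbf{The remaining steps and the optimality argument.} With the overlap fact inserted, covariance, $\beta=1/(d+1)$, and the sample count follow as you say.

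Your optimality argument is a genuinely different and more self-contained route than the paper's. The paper simply invokes the separable verification bound for $\ket{\Psi^\star}$ from Hayashi (2006), Theorem~1. You instead derive it in two lines: $|\langle\Psi^\star|a,b\rangle|^2\le 1/d$ gives $\tr T\ge d$, and $\tr T\le 1+(d^2-1)\lambda_2(T)$ then forces $\lambda_2(T)\ge 1/(d+1)$.

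This is valid. For any local protocol, $T=d^2\Omega_{0,0}$ is a nonnegative combination of product projectors, and Theorem~\ref{thm:verification-efficiency} supplies $T\le\1$ and $\bra{\Psi^\star}T\ket{\Psi^\star}=1$. Your argument also shows that a separable $T$ is optimal exactly when $\tr T=d$, which the MUB operator attains.
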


Comparing Eq.~\eqref{eq:GBM-copies} with Eq.~\eqref{eq:optimal-copies},
we see that the locality constraint incurs only the constant factor penalty $(d+1)/d$
relative to the unconstrained optimal protocol.  This penalty is $3/2$ for the Bell
measurement on two qubits and approaches one as $d$ grows.  
For concreteness, the optimal QMV strategy for the two-qubit Bell measurement 
is listed in Table~\ref{tbl:bell-measurement-verification}.

\begin{table}[!hbtp]
\centering
\setlength\heavyrulewidth{0.3ex}
\setlength{\tabcolsep}{8pt}
\renewcommand{\arraystretch}{1.3}
\begin{tabular}{@{}cccc@{}}
    \toprule
      \textbf{Index $x$} &
      \textbf{Prob. $p_x$} & \textbf{Test State $\rho_x$} & \textbf{Accept Set $\cB(\rho_x)$} \\\midrule
      1 & $1/12$ & $\ket{00}$ & $\{00,01\}$ \\
      2 & $1/12$ & $\ket{11}$ & $\{00,01\}$ \\
      3 & $1/12$ & $\ket{01}$ & $\{10,11\}$ \\
      4 & $1/12$ & $\ket{10}$ & $\{10,11\}$ \\
      5 & $1/12$ & $\ket{++}$ & $\{00,10\}$ \\
      6 & $1/12$ & $\ket{--}$ & $\{00,10\}$ \\
      7 & $1/12$ & $\ket{+-}$ & $\{01,11\}$ \\
      8 & $1/12$ & $\ket{-+}$ & $\{01,11\}$ \\
      9 & $1/12$ & $\ket{\top\top}$    & $\{01,10\}$ \\
      10 & $1/12$ & $\ket{\perp\perp}$  & $\{01,10\}$ \\
      11 & $1/12$ & $\ket{\top\!\perp}$ & $\{00,11\}$ \\
      12 & $1/12$ & $\ket{\perp\!\top}$ & $\{00,11\}$ \\
    \bottomrule
\end{tabular}
\caption{\textbf{An optimal verification protocol for the two-qubit Bell measurement using local states.}
    Here, $\{\ket{0},\ket{1}\}$, $\{\ket{+},\ket{-}\}$, and $\{\ket{\top},\ket{\perp}\}$
    are the eigenstates of the single-qubit Pauli operators $Z$, $X$, and $Y$, respectively.}
\label{tbl:bell-measurement-verification}
\end{table}

\subsection{Single-parameter measurement on two qubits}\label{sec:single-parameter}

We next consider a single-parameter family of measurements on two qubits that interpolates
between product basis measurements and the Bell measurement.  Define the four
orthonormal states on two qubits
\begin{subequations}\label{eq:entangled-states}
\begin{align}
  \vert\Psi_{0,0}^{\gamma}\rangle &:= \sin\gamma\ket{00} + \cos\gamma\ket{11}, \\
  \vert\Psi_{0,1}^{\gamma}\rangle &:= \sin\gamma\ket{01} + \cos\gamma\ket{10}, \\
  \vert\Psi_{1,0}^{\gamma}\rangle &:= \cos\gamma\ket{01} - \sin\gamma\ket{10}, \\
  \vert\Psi_{1,1}^{\gamma}\rangle &:= \cos\gamma\ket{00} - \sin\gamma\ket{11},
\end{align}
\end{subequations}
where $\gamma\in[0,\pi/2]$.  The corresponding von Neumann measurement is
\begin{align}\label{eq:single-parameter}
  \cP_\gamma = \{\vert \Psi_{x,z}^\gamma\rangle\!\langle\Psi_{x,z}^\gamma\vert\}_{x,z\in\bZ_2}.
\end{align}
At $\gamma=\pi/4$, $\cP_\gamma$ is the Bell measurement on two qubits; at
$\gamma=0$ and $\gamma=\pi/2$, it reduces to a product basis measurement up to
relabeling.

We first show that the family $\cP_\gamma$ is locally transitive. 
Up to irrelevant phases, the local unitaries generated by
\begin{align}
  \1\otimes X,\quad Z\otimes Z,\quad ZX\otimes\1
\end{align}
permute the four projectors in Eq.~\eqref{eq:single-parameter}.  This allows us
to use a symmetric local protocol.  Appendix~\ref{appx:sec:single-parameter}
constructs such a protocol $\cS_\gamma$ explicitly.  The corollary below gives
the resulting verification operators, success probability, and sample
complexity.

\begin{corollary}\label{thm:single-parameter two-qubit measurement}
For every $\gamma\in[0,\pi/2]$, the single-parameter measurement on two qubits
$\cP_\gamma$ defined in Eq.~\eqref{eq:single-parameter} admits a local
verification protocol $\cS_\gamma$ with verification operators
\begin{align}\label{eq:single-parameter-verification-operator}
\Omega_{x,z}(\cS_\gamma)
=\frac{1}{4}\left[
\proj{\Psi_{x,z}^{\gamma}}
+\frac{1}{3}\left(\1-\proj{\Psi_{x,z}^{\gamma}}\right)
\right].
\end{align}
Consequently,
\begin{align}
P(\cS_\gamma,\varepsilon)=1-\frac{2}{3}\varepsilon,
\end{align}
and the exact number of tests required by this protocol is
\begin{align}
  N_{\cP_\gamma}
  =
  \left\lceil
  \frac{\ln\delta}{\ln P(\cS_\gamma,\varepsilon)}
  \right\rceil
  \approx
  \left\lceil
  \frac{3}{2}\cdot\frac{1}{\varepsilon}\ln\frac{1}{\delta}
  \right\rceil.
\end{align}
\end{corollary}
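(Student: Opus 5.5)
The plan is to construct an explicit local protocol $\cS_\gamma$, compute the representative operator $\Omega_{0,0}(\cS_\gamma)$ by hand, and transport it to the other three outcomes with the local symmetry. Local transitivity holds, but I do not want to depend on irreducibility for all $\gamma$. So I will not invoke Theorem~\ref{thm:verification-efficiency}; instead I will prove Eq.~\eqref{eq:homogeneous-failure-probability} directly from the homogeneous form.

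\textbf{Choice of test states.} The key observation is which product states have an accept set of size two, since these give the smallest verification operators. Writing $s=\sin\gamma$ and $c=\cos\gamma$, I will use two families.
\begin{itemize}
\item Computational-basis states. Here $\ket{00}$ and $\ket{11}$ have accept set $\{00,11\}$, while $\ket{01}$ and $\ket{10}$ have accept set $\{01,10\}$.
\item Product states inside the span of two basis vectors. A general element of $\mathrm{span}\{\ket{\Psi^\gamma_{0,0}},\ket{\Psi^\gamma_{0,1}}\}$ is
\begin{align}
a\ket{\Psi^\gamma_{0,0}}+b\ket{\Psi^\gamma_{0,1}}
= s\ket{0}\bigl(a\ket{0}+b\ket{1}\bigr)+c\ket{1}\bigl(b\ket{0}+a\ket{1}\bigr),
\end{align}
which is a product state exactly when $a=\pm b$. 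This gives $(s\ket{0}\pm c\ket{1})\otimes\ket{\pm}$, each with accept set $\{00,01\}$. The analogous computation for the other pairs should give $Y$-type states such as $(s\ket{0}\pm i c\ket{1})\otimes(\ket{0}\mp i\ket{1})/\sqrt2$, accepted by $\{00,10\}$ (signs to be fixed).
\end{itemize}
I then take the full orbit of these states under the group generated by $\1\otimes X$, $Z\otimes Z$ and $ZX\otimes\1$, with uniform weights inside each family and weights $q_Z,q_X,q_Y$ across families. This is the analogue of Table~\ref{tbl:bell-measurement-verification}.

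\textbf{Computing the representative operator.} Summing the terms whose accept set contains $00$ gives $4\Omega_{0,0}$ as a combination of $\proj{00}+\proj{11}$ and the projectors onto the $X$- and $Y$-type states. The target is
\begin{align}
4\Omega_{0,0}=\tfrac23\proj{\Psi^\gamma_{0,0}}+\tfrac13\1 .
\end{align}
In the $\{\ket{00},\ket{11}\}$ block this requires diagonal entries $\tfrac23 s^2+\tfrac13$ and $\tfrac23 c^2+\tfrac13$ and off-diagonal entry $\tfrac23 sc$. The $X$- and $Y$-type states supply the coherences $sc$. Their $\ket{0}$-side versus $\ket{1}$-side weighting ($s^2$ versus $c^2$) will be balanced against the computational-basis contribution. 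A necessary check is that every test state should be accepted by exactly two outcomes, because $\sum_\theta\Omega_\theta=\1/2$.

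\textbf{Main obstacle.} The hard step is making the cross terms cancel, in particular the $\ket{01}$, $\ket{10}$ block and the coherences between the two blocks, for every $\gamma$ and not only at $\gamma=\pi/4$. My first attempt will use equal weights $q_Z=q_X=q_Y=1/3$, as in the Bell case, and rely on the $\pm$ sign pairs to kill the unwanted coherences. If a $\gamma$-dependent residue remains, I will let the weights depend on $\gamma$ or add the swapped states $\ket{\pm}\otimes(c\ket{0}\pm s\ket{1})$.

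\textbf{Lifting to all outcomes and the passing probability.} Once $\Omega_{0,0}$ is homogeneous with $\beta=1/3$, the symmetry $V(g)\Omega_\theta V(g)^\dagger=\Omega_{g\theta}$ gives Eq.~\eqref{eq:single-parameter-verification-operator} for all $(x,z)$. For any POVM $\cM$,
\begin{align}
p_{\rm pass}(\cS_\gamma,\cM)
&=\tfrac14\sum_\theta\tr\!\left[\bigl(\tfrac23\psi_\theta+\tfrac13\1\bigr)M_\theta\right]\notag\\
&=\tfrac13+\tfrac23 F(\cP_\gamma,\cM).
\end{align}
Hence $p_{\rm pass}\le 1-\tfrac23\varepsilon$ whenever $F(\cP_\gamma,\cM)\le 1-\varepsilon$. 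The bound is attained by mixing $\cP_\gamma$ with a fixed-point-free relabeling, which shows $P(\cS_\gamma,\varepsilon)=1-\tfrac23\varepsilon$. The sample complexity then follows from Eq.~\eqref{eq:number-of-copies} together with $\ln(1-x)\approx -x$.
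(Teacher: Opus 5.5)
Your overall route matches the paper's: three symmetry orbits of four local product states each (computational, $X$-type, $Y$-type) with weight $1/12$ per state, a direct computation of $\Omega_{0,0}(\cS_\gamma)$, transport to the other outcomes by the local symmetry, and the direct identity $p_{\rm pass}=\tfrac13+\tfrac23\Fid(\cP_\gamma,\cM)$ with tightness from a fixed-point-free relabeling. Avoiding Theorem~\ref{thm:verification-efficiency} is also what the paper does, and it is forced. For $\gamma\in(0,\pi/2)\setminus\{\pi/4\}$ the Schmidt coefficients of $\ket{\Psi^\gamma_{0,0}}$ are distinct, so any local unitary fixing this line is diagonal. Such a unitary acts as a scalar on $\mathrm{span}\{\ket{00},\ket{11}\}$ and gives $\ket{\Psi^\gamma_{1,1}}$ the same character, so irreducibility fails for every choice of $G$. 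Your $X$-type states $(s\ket0\pm c\ket1)\otimes\ket{\pm}$ are exactly the paper's $\ket{v_\pm}\otimes\ket{\pm}$.

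The gap lies in the $Y$-type family and in the step you call the main obstacle. For generic $\gamma$, no choice of signs puts $(s\ket0\pm ic\ket1)\otimes(\ket0\pm i\ket1)/\sqrt2$ into a two-dimensional span of basis vectors. For example, $(s\ket0+ic\ket1)\otimes\ket{\perp}$ has overlap $i(c^2-s^2)/\sqrt2$ with $\ket{\Psi^\gamma_{0,1}}$. It is therefore accepted by $\{(0,0),(0,1),(1,0)\}$ and fails your own two-outcome check.

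Carry out your determinant test for the pair $\{(0,0),(1,0)\}$ instead. The vector $a\ket{\Psi^\gamma_{0,0}}+b\ket{\Psi^\gamma_{1,0}}$ has a coefficient matrix with determinant $sc(a^2+b^2)$, so $b=\pm ia$. The product vectors are then $\ket{\top}\otimes(s\ket0-ic\ket1)$ and $\ket{\perp}\otimes(s\ket0+ic\ket1)$, so the $\gamma$-dependent factor belongs on the second qubit.

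Once the states are right, the cancellation problem disappears, and this is the observation your plan is missing. In each of the six spans $\mathrm{span}\{\ket{\Psi^\gamma_j},\ket{\Psi^\gamma_\ell}\}$, the two product test vectors are orthonormal. They are $(\ket{\Psi^\gamma_j}\pm e^{i\alpha}\ket{\Psi^\gamma_\ell})/\sqrt2$ for a suitable phase, or $\{\ket{00},\ket{11}\}$ and $\{\ket{01},\ket{10}\}$ for the computational family. Hence their projectors sum to $\proj{\Psi^\gamma_j}+\proj{\Psi^\gamma_\ell}$. It follows that $12\,\Omega_{0,0}(\cS_\gamma)=3\proj{\Psi^\gamma_{0,0}}+\proj{\Psi^\gamma_{0,1}}+\proj{\Psi^\gamma_{1,0}}+\proj{\Psi^\gamma_{1,1}}=2\proj{\Psi^\gamma_{0,0}}+\1$ for every $\gamma$, with equal weights and no entrywise bookkeeping.

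Your fallback would make things worse. The state $\ket{+}\otimes(c\ket0+s\ket1)$ overlaps $\ket{\Psi^\gamma_{0,0}}$, $\ket{\Psi^\gamma_{0,1}}$ and $\ket{\Psi^\gamma_{1,1}}$, so adding such states enlarges accept sets and raises $\beta$.
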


The protocol has the same asymptotic sample complexity as the Bell measurement
protocol in the previous subsection.  At the Bell point $\gamma=\pi/4$, it
reduces to Table~\ref{tbl:bell-measurement-verification}.  At the two product
endpoints, one can instead verify the product basis measurement with the
unconstrained optimal scaling in Eq.~\eqref{eq:optimal-copies}.  By the local
equivalence argument in Appendix~\ref{appx:sec:properties}, the same efficiency
applies to any measurement locally equivalent to $\cP_\gamma$.

\subsection{Elegant joint measurement on two qubits}\label{sec:elegant-joint-measurement}

The elegant joint measurement (EJM) on two qubits is a partially entangled
measurement with tetrahedral one qubit marginals.  It was introduced as a
joint measurement different from the Bell measurement~\cite{gisin2019entanglement}
and later used in bilocal network nonlocality tests
~\cite{tavakoli2021bilocal,PhysRevLett.129.030502}.  This makes the EJM a
representative target on two qubits beyond both product basis measurements and
maximally entangled Bell measurements.

In its general form, the EJM on two qubits is itself a single-parameter family.  Let
$\kappa\in[0,\pi/2]$ and $r_\pm^\kappa:=(1\pm e^{i\kappa})/\sqrt{2}$.
Define four orthonormal states~\cite{PhysRevLett.129.030502}:
\begin{subequations}\label{eq:EJM-states}
\begin{align}
    \vert\Phi_0^\kappa\rangle
&= \frac{1}{2}\left(e^{ - \frac{i\pi}{4}}\ket{00}
                    - r_+^\kappa\ket{01}
                    - r_-^\kappa\ket{10}
                    + e^{ - \frac{3i\pi}{4}}\ket{11}\right), \\
    \vert\Phi_1^\kappa\rangle
&= \frac{1}{2}\left(e^{\frac{i\pi}{4}}\ket{00}
                    + r_-^\kappa\ket{01}
                    + r_+^\kappa\ket{10}
                    + e^{\frac{3i\pi}{4}}\ket{11}\right), \\
    \vert\Phi_2^\kappa\rangle
&= \frac{1}{2}\left(e^{ - \frac{3i\pi}{4}}\ket{00}
                    + r_-^\kappa\ket{01}
                    + r_+^\kappa\ket{10}
                    + e^{- \frac{i\pi}{4}}\ket{11}\right), \\
    \vert\Phi_3^\kappa\rangle
&= \frac{1}{2}\left(e^{\frac{3i\pi}{4}}\ket{00}
                    - r_+^\kappa\ket{01}
                    - r_-^\kappa\ket{10}
                    + e^{\frac{i\pi}{4}}\ket{11}\right).
\end{align}
\end{subequations}
The single-parameter two-qubit EJM is defined as
\begin{align}\label{eq:EJM}
  \cJ_\kappa = \{\vert \Phi_j^\kappa\rangle\!\langle\Phi_j^\kappa\vert:
  j=0,1,2,3\}.
\end{align}
At $\kappa=0$ this recovers the original EJM in
Ref.~\cite{gisin2019entanglement}, while $\kappa=\pi/2$ gives a maximally
entangled endpoint.

For $0\le\kappa<\pi/2$, $\cJ_\kappa$ is essentially different from
the single-parameter family $\cP_\gamma$ considered in
Section~\ref{sec:single-parameter}.  Indeed, every state in $\cP_\gamma$ has
the same Schmidt basis up to local relabeling, whereas the states in
$\cJ_\kappa$ have tetrahedrally arranged reduced states.  Hence the verification
protocol constructed for $\cP_\gamma$ cannot be imported by local covariance,
and a separate local protocol is required.

The local Pauli operators $X\otimes X$ and $Z\otimes Z$ act on the four
EJM projectors as the permutations $(01)(23)$ and $(03)(12)$,
respectively, up to irrelevant phases of the underlying state vectors.
Hence the group generated by them acts transitively on the four outcomes
for every $\kappa\in[0,\pi/2]$, including the maximally entangled endpoint
$\kappa=\pi/2$.  Thus $\cJ_\kappa$ admits a symmetric local verification
protocol. Appendix~\ref{appx:sec:elegant-joint-measurement}
constructs this protocol explicitly; the corollary below gives 
its verification operators, success probability, and sample complexity.

\begin{corollary}\label{thm:elegant-joint-measurement}
For every $\kappa\in[0,\pi/2]$, the elegant joint measurement $\cJ_\kappa$
defined in Eq.~\eqref{eq:EJM} admits a local verification protocol
$\cS_\kappa$ with verification operators, for $j=0,1,2,3$,
\begin{align}\label{eq:EJM-verification-operator}
\Omega_j(\cS_\kappa)
=\frac{1}{4}\left[
\proj{\Phi_j^\kappa}
+\beta_\kappa\left(\1-\proj{\Phi_j^\kappa}\right)
\right],
\end{align}
where 
\begin{align}\label{eq:EJM-beta}
  \beta_\kappa=\frac{4-3\sin\kappa}{3(2-\sin\kappa)}.
\end{align}
Consequently,
\begin{align}
P(\cS_\kappa,\varepsilon)
=1-(1-\beta_\kappa)\varepsilon
=1-\frac{2}{3(2-\sin\kappa)}\varepsilon,
\end{align}
and the exact number of tests required by this protocol is
\begin{align}\label{eq:EJM-copies}
  N_{\cJ_\kappa}
  =
  \left\lceil
  \frac{\ln\delta}{\ln P(\cS_\kappa,\varepsilon)}
  \right\rceil
  \approx
  \left\lceil
  \frac{3(2-\sin\kappa)}{2}\cdot
  \frac{1}{\varepsilon}\ln\frac{1}{\delta}
  \right\rceil.
\end{align}
\end{corollary}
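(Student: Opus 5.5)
The plan is to construct the protocol explicitly, verify that its verification operators have the homogeneous form of Eq.~\eqref{eq:homogeneous}, and then compute the worst-case passing probability directly. A direct computation is needed because, unlike the GBM, the symmetry group generated by $X\otimes X$ and $Z\otimes Z$ is $\bZ_2\times\bZ_2$ acting regularly on the four outcomes. The stabilizer $H_j$ is therefore trivial and the irreducibility hypothesis of Theorem~\ref{thm:verification-efficiency} fails, so that theorem cannot be invoked.

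\emph{Homogeneous operators suffice.} I would first record the following elementary fact. Suppose
\[
\Omega_j=\tfrac14\bigl[\proj{\Phi_j^\kappa}+\beta(\1-\proj{\Phi_j^\kappa})\bigr]
\]
for all $j$. Then for any POVM $\cM$, Eq.~\eqref{eq:single-round-passing-probability} gives
\[
p_{\rm pass}(\cS,\cM)=\tfrac{\beta}{4}\sum_j\tr M_j+(1-\beta)\tfrac14\sum_j\bra{\Phi_j^\kappa}M_j\ket{\Phi_j^\kappa}=\beta+(1-\beta)\Fid(\cJ_\kappa,\cM).
\]
Maximizing over $\Fid\le 1-\varepsilon$ yields $P=1-(1-\beta)\varepsilon$. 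The bound is attained, for example, by mixing $\cJ_\kappa$ with a fixed-point-free relabeling of its outcomes. With $\beta=\beta_\kappa$ this gives the stated $P$, and the sample complexity then follows from Eq.~\eqref{eq:number-of-copies} together with $\ln(1-x)\approx -x$. This reduces the whole corollary to exhibiting a local protocol whose $\Omega_j$ are homogeneous with $\beta=\beta_\kappa$.

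\emph{Construction.} I would take test states from products of single-qubit Pauli eigenstates, grouped into three families: $Z$-type ($\ket{ab}$), $X$-type ($\ket{\pm\pm}$), and $Y$-type ($\ket{\top\top}$, etc.). Each family is closed under $X\otimes X$ and $Z\otimes Z$, so the protocol is automatically symmetric and $\Omega_j=V(g)\Omega_0V(g)^\dagger$. The accept set of each state is read off from the supports $\{j:\langle\Phi_j^\kappa|\rho|\Phi_j^\kappa\rangle>0\}$. The amplitudes in Eq.~\eqref{eq:EJM-states} all have modulus $1/2$ on $\ket{00},\ket{11}$ but moduli $|r_\pm^\kappa|/2$ on $\ket{01},\ket{10}$. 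Here $r_-^\kappa$ vanishes only at $\kappa=0$, so I expect the supports to change with $\kappa$ and to be larger than in the Bell table. Allowing a free weight $q_Z,q_X,q_Y$ per family, and if needed independent single-qubit bases (e.g.\ tetrahedral directions aligned with the marginals), I would compute $\Omega_0=\sum_{x:0\in\cJ(\rho_x)}p_x\rho_x$ in the EJM basis itself.

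\emph{Main obstacle.} The hard step is forcing homogeneity: $\Omega_0$ must be diagonal in $\{\ket{\Phi_j^\kappa}\}$, with eigenvalue $1/4$ on $\ket{\Phi_0^\kappa}$ and equal eigenvalues on the other three. By the $\bZ_2\times\bZ_2$ covariance, the off-diagonal entries $\bra{\Phi_i}\Omega_0\ket{\Phi_k}$ are constrained. My first attempt is to impose the three equal-eigenvalue conditions as linear equations in $(q_Z,q_X,q_Y)$ and solve. I expect the solution to give $1-\beta_\kappa=2/[3(2-\sin\kappa)]$, which interpolates correctly between $\beta_0=2/3$ and the Bell value $\beta_{\pi/2}=1/3$ of Corollary~\ref{thm:GBM}. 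If coherences survive, I would add a twirl by phases commuting with the symmetry, or enlarge the family by additional local rotations fixing the EJM set, until the residual off-diagonal terms cancel.
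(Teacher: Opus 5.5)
Your closing step is correct and matches how the paper concludes: once $\Omega_j=\frac14[\proj{\Phi_j^\kappa}+\beta(\1-\proj{\Phi_j^\kappa})]$ for all $j$, the identity $p_{\rm pass}(\cS,\cM)=\beta+(1-\beta)\Fid(\cJ_\kappa,\cM)$ gives $P$ and $N$. The paper likewise does a direct calculation here rather than invoking Theorem~\ref{thm:verification-efficiency}.

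The gap is the construction, which is the actual content of the corollary, and the Pauli ansatz you commit to cannot deliver it. For $\kappa\in(0,\pi/2]$, every same-axis Pauli product state has nonzero overlap with all four EJM vectors. Because the twelve states $\ket{ab}$, $\ket{\pm\pm}$, $\ket{\pm\mp}$, $\ket{\top\top},\dots$ form a set closed under $X\otimes X$ and $Z\otimes Z$, it suffices to check $\ket{\Phi_0^\kappa}$. There one finds, e.g., $2\langle 10|\Phi_0^\kappa\rangle=-r_-^\kappa$, $4\langle -+|\Phi_0^\kappa\rangle=\sqrt2(1-e^{i\kappa})$ and $4\langle\perp\top|\Phi_0^\kappa\rangle=i\sqrt2(e^{i\kappa}-1)$. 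The remaining nine overlaps are nonzero for every $\kappa$. Hence every test accepts all four outcomes, $\Omega_j=\sum_xp_x\rho_x$ is independent of $j$, and $p_{\rm pass}(\cS,\cM)=\tr[\Omega\sum_jM_j]=1$ for every POVM, whatever $q_Z,q_X,q_Y$ are.

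At $\kappa=0$, where the anti-aligned Pauli tests do accept only two outcomes, homogeneity with $\beta<1$ is still impossible (the argument in fact works for every $\kappa$). The relations $(X\otimes X)\ket{\Phi_0^\kappa}=-\ket{\Phi_1^\kappa}$ and $(Z\otimes Z)\ket{\Phi_0^\kappa}=-\ket{\Phi_3^\kappa}$ force $\bra{\Phi_0^\kappa}\sigma_i\otimes\sigma_i\ket{\Phi_0^\kappa}=0$. So the correlation tensor of the pure state $\proj{\Phi_0^\kappa}$ is nonzero and purely off-diagonal, whereas same-axis Pauli tests generate only $\sigma_i\otimes\sigma_i$ terms. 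Matching the two forces $\beta=1$.

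A twirl cannot repair this. Averaging over local symmetries that fix outcome $0$ removes coherences between $\ket{\Phi_0^\kappa}$ and its complement only if those symmetries separate the two, which is exactly the irreducibility property you correctly observed to fail. The coherences must instead be cancelled between different families of tests with tuned weights.

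What your plan lacks is a family of two-outcome tests adapted to the EJM rather than to the Pauli frame. For each pair $\{j,\ell\}$, the paper takes the product vectors of the form $\ket{\Phi_j^\kappa}+z\ket{\Phi_\ell^\kappa}$. The vanishing of the $2\times2$ coefficient determinant reduces to $(e^{2i\kappa}-3)z^2+2(e^{2i\kappa}+1)z+(e^{2i\kappa}-3)=0$. Since these vectors lie in $\mathrm{span}\{\ket{\Phi_j^\kappa},\ket{\Phi_\ell^\kappa}\}$, they accept exactly $\{j,\ell\}$; at $\kappa=0$ they degenerate to your anti-aligned Pauli tests.

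These are mixed with two tetrahedral families: $\ket{\pm\bm t_j,\pm\bm t_j}$, which accept all outcomes but $j$, and $\ket{\bm t_j,-\bm t_j}$, which accept all four. The weights $a_\kappa,b_\kappa,c_\kappa$ are chosen so that $b_\kappa=2c_\kappa$ and $a_\kappa(1-\sin\kappa)/[12(1+\sin\kappa)]=b_\kappa/24$. These two relations cancel the $R_0^\kappa$ and $Q_0^\kappa$ coherence terms and leave $\beta_\kappa=a_\kappa/3+2b_\kappa/3+c_\kappa$.

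Your tetrahedral fallback points toward the last two families, but they cannot work alone. Their coherence coefficients, $(c-b)/12$ on $R_0^\kappa$ and $(2c-b)/24$ on $Q_0^\kappa$, vanish together only when $b=c=0$. The pair tests, which carry all the weight at $\kappa=\pi/2$, are the missing ingredient. The value $\beta_\kappa$ then comes out of the cancellation conditions, not from solving toward a target you already expect.
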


Compared with the homogeneous protocol in Corollary~\ref{thm:single-parameter two-qubit measurement}, 
the EJM protocol has a homogeneous parameter that depends on $\kappa$.  It equals $2/3$ for the original
EJM at $\kappa=0$ and decreases to $1/3$ at the maximally entangled endpoint
$\kappa=\pi/2$.  Thus the original EJM requires twice as many tests as the
Bell measurement scaling, while the endpoint recovers the same asymptotic sample
complexity as the Bell measurement.

\subsection{Stabilizer state induced measurements}\label{sec:stabilizer-state-induced-measurements}

We now consider measurements induced by stabilizer bases on $n$ qudits, assuming
that the local dimension $d$ is prime.  Let $\bF_d$ be the finite field with $d$
elements, and define the Weyl representation on $n$ qudits
\begin{align}
W^n(\vec{x} ):=
W(x_1) \otimes \cdots \otimes W(x_n)
\end{align}
for $\vec{x}=(x_1,\ldots,x_n)\in(\bF_d^2)^n\simeq\bF_d^{2n}$, where
$x_i=(a_i,b_i)$ and $W(x_i)\equiv W_{a_i,b_i}$ is a single qudit Weyl operator;
see Appendix~\ref{appx:sec:GBM}.

Let $N \subset \bF_d^{2n}$ be a commutative subgroup with dimension $n$.
For this subgroup, let $\ket{S}$ be the stabilizer state satisfying
\begin{align}
W^n(\vec{x})\ket{S}=\ket{S},\qquad \forall\vec{x}\in N.
\end{align}
The orbit
$\{\ket{S,\vec{a}}:=W^n(\vec{a})\ket{S}:\vec{a}\in\bF_d^{2n}/N\}$
forms an orthonormal basis, and the induced measurement is
\begin{align}\label{eq:stabilizer state induced measurements}
  \cM_S := \{ \proj{S,\vec{a}}: \vec{a} \in \bF_d^{2n}/N\}.
\end{align}

The Weyl translations $W^n(\vec{x})$ are local and act 
transitively on the outcomes of $\cM_S$, because 
$W^n(\vec{x})\ket{S,\vec a}$
is proportional to $\ket{S,\vec{x}+\vec a}$, with the label understood
modulo $N$.  The stabilizer of the reference outcome is precisely $N$,
up to irrelevant phases.  For $\vec n\in N$, the basis state
$\ket{S,\vec a}=W^n(\vec a)\ket S$ is an eigenvector of the stabilizer
operator $W^n(\vec n)$ with character
\begin{align}
  \chi_{\vec a}(\vec n)
  =
  \omega^{[\vec n,\vec a]},
\end{align}
where $[\cdot,\cdot]$ denotes the standard symplectic form over
$\bF_d$ and $\omega=\exp(2\pi i/d)$.

These characters are distinct on different cosets of $N$.  Indeed, if
$\chi_{\vec a}=\chi_{\vec b}$ on $N$, then
\begin{align}
    [\vec n,\vec a-\vec b]=0  \qquad \forall\,\vec n\in N .
\end{align}
Thus $\vec a-\vec b\in N^\perp$.  Since $N$ is maximal isotropic,
$N^\perp=N$, and hence $\vec a$ and $\vec b$ represent the same coset in
$\bF_d^{2n}/N$.  Therefore different cosets give inequivalent
one-dimensional characters of the stabilizer.  The reference line has no
equivalent copy in its orthogonal complement, so the irreducibility
condition in Definition~\ref{def:irreducibility} is satisfied.  
Therefore Theorem~\ref{thm:verification-efficiency} applies.
Appendix~\ref{appx:sec:stabilizer-state}
constructs the local protocol; the corollary below 
summarizes its verification operators, success probability, and sample complexity.

\begin{corollary}\label{thm:stabilizer state induced measurements}
The stabilizer state induced measurement $\cM_S$ defined in
Eq.~\eqref{eq:stabilizer state induced measurements} admits the local
verification protocol $\cS_{\cM_S}$. 
For each outcome $\vec{a}\in\bF_d^{2n}/N$, its verification operator is
\begin{align}\label{eq:stabilizer-verification-operator}
\Omega_{\vec{a}}(\cS_{\cM_S})
&=\frac{1}{d^n}
\left[\proj{S,\vec{a}}+\beta_S\left(\1-\proj{S,\vec{a}}\right)\right],
\end{align}
where
\begin{align}
\beta_S=\frac{d^{n-1}-1}{d^n-1}.
\end{align}
Consequently,
\begin{align}
P(\cS_{\cM_S},\varepsilon)
&=1-(1-\beta_S)\varepsilon
=1-\frac{d^n-d^{n-1}}{d^n-1}\varepsilon,
\end{align}
and the exact number of tests required by this protocol is
\begin{align}
  N_{\cM_S}
  =
  \left\lceil
  \frac{\ln\delta}
       {\ln P(\cS_{\cM_S},\varepsilon)}
  \right\rceil
  \approx
  \left\lceil
  \frac{d^n - 1}{d^n - d^{n-1}}\cdot
  \frac{1}{\varepsilon}\ln\frac{1}{\delta}
  \right\rceil.
\end{align}
\end{corollary}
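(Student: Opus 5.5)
The strategy is to build an explicit local protocol whose reference verification operator is the standard optimal-type stabilizer-state verification operator. Theorem~\ref{thm:verification-efficiency} then lifts this operator to every outcome by Weyl translations. Since $\cM_S$ is locally transitive and irreducible (shown above), it suffices to find a local protocol $\cS$ that is symmetric under the Weyl group and satisfies
\begin{align*}
T_{\vec 0}(\cS)=d^n\Omega_{\vec 0}(\cS)=\proj{S}+\beta_S(\1-\proj{S}).
\end{align*}
Homogeneity for all outcomes then follows from covariance: $\Omega_{\vec a}=W^n(\vec a)\Omega_{\vec 0}W^n(\vec a)^\dagger$. The passing probability and the sample complexity follow from Eqs.~\eqref{eq:homogeneous-failure-probability} and \eqref{eq:number-of-copies-2}.

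\textbf{Test states.} For each nonidentity stabilizer element, i.e.\ each one-dimensional subspace $L=\langle\vec n\rangle\subset N$ with $\vec n\neq 0$, the operator $W^n(\vec n)$ is a tensor product of local Weyl operators. Its joint eigenbasis therefore consists of product states $\ket{e}=\bigotimes_i\ket{e_i}$, where $\ket{e_i}$ is an eigenvector of $W(n_i)$; on sites with $n_i=0$, any basis (say the computational basis) is used. There are $(d^n-1)/(d-1)$ such lines. The protocol chooses $L$ uniformly, chooses a uniformly random product eigenstate $\ket e$ from that basis, feeds it to the device, and accepts the ideal support $\cM_S(\proj e)$.

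\textbf{Accept sets.} The set $\cM_S(\proj e)$ consists exactly of the cosets $\vec a$ whose character value $\chi_{\vec a}(\vec n)=\omega^{[\vec n,\vec a]}$ equals the eigenvalue of $\ket e$ under $W^n(\vec n)$. This holds because $\ket{S,\vec a}$ lies in the corresponding eigenspace of $W^n(\vec n)$, and the product eigenbasis refines the eigenspace decomposition of $W^n(\vec n)$.

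\textbf{Operator for the reference outcome.} Summing $\proj e$ over the product eigenstates in the eigenvalue-$1$ eigenspace of $W^n(\vec n)$ gives the projector
\begin{align*}
\Pi_L=\frac1d\sum_{k\in\bF_d}W^n(k\vec n),
\end{align*}
up to phases. Averaging this over all lines gives
\begin{align*}
\frac{d-1}{d^n-1}\sum_{L}\Pi_L=\frac{d-1}{d^n-1}\cdot\frac{1}{d}\Bigl[\tfrac{d^n-1}{d-1}\1+\sum_{\vec n\in N\setminus\{0\}}W^n(\vec n)\Bigr].
\end{align*}
Using $\proj S=d^{-n}\sum_{\vec n\in N}W^n(\vec n)$, the bracket equals $\bigl(\tfrac{d^n-1}{d-1}-1\bigr)\1+d^n\proj S$. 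After normalizing so that $\ket S$ has eigenvalue $1$, the orthogonal eigenvalue is
\begin{align*}
\beta_S=\frac{d^{n-1}-1}{d^n-1}.
\end{align*}
This is the familiar stabilizer verification calculation.

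\textbf{Obstacles.} The step I expect to need care is the bookkeeping of probabilities. The protocol's $\Omega_{\vec 0}$ sums $p_x\rho_x$ over all test states whose accept set contains $\vec 0$, so the factor $1/d^n$ must come out exactly. The $d^n$ product eigenstates of each basis are sampled with probability $1/d^n$ each, and together with the line probability this gives $d^n\Omega_{\vec 0}=\text{average of }\Pi_L$.

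Two further issues must be checked:
\begin{itemize}
\item the protocol is symmetric, which holds because Weyl translations map each product eigenbasis of $W^n(\vec n)$ to itself up to phases and shift the accept sets accordingly;
\item phases of $W^n(\vec n)$ for $d=2$ (Pauli signs) do not disturb the eigenvalue matching, which I handle by using the character convention fixed by $W^n(\vec n)\ket S=\ket S$.
\end{itemize}
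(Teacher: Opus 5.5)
Your route is the paper's own. Sampling lines of $N$ uniformly is the same as the appendix's uniform sampling over nonidentity $S_{\vec k}$, since all nonzero powers on a line share the product eigenbasis and its eigenvalue-one subset. Your averaging of the $\Pi_L$ correctly rederives the cited stabilizer identity; it already gives $\ket S$ eigenvalue $1$, so nothing needs renormalizing. Covariance then finishes the argument.

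The genuine gap is your ``Accept sets'' step, and the paper's proof makes the same inference: only vectors in $\mathscr{B}^1_{\vec k}$ can overlap $\ket S$, ``therefore'' they are exactly the accepted ones. The accept set is the ideal support $\{\vec a:\vert\langle S,\vec a\vert e\rangle\vert>0\}$. Lying in the same eigenspace of $W^n(\vec n)$ as $\ket{S,\vec a}$ is necessary for membership, not sufficient. Indeed, $\ket e$ is a joint eigenvector of every Weyl operator in $N\cap U_e$, where $U_e$ is the maximal isotropic subspace diagonalized by its local basis. When $N\cap U_e\supsetneq L$, $\ket e$ can be orthogonal to $\ket{S,\vec a}$.

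Concretely, take $d=n=2$, $\ket S=\ket{00}$, $N=\langle Z\otimes\1,\1\otimes Z\rangle$, so $\cM_S$ is the computational-basis measurement. The line $\langle Z\otimes Z\rangle$ forces the computational basis. The state $\ket{11}$ is a $+1$ eigenvector of $Z\otimes Z$, yet $\cM_S(\proj{11})=\{11\}$, so it never enters $\Omega_{\vec 0}$. In fact $\bra{11}T_{\vec 0}\ket{11}=0\neq\beta_S=1/3$ for any choice of idle-site bases. With your computational choice, $T_{\vec 0}=\proj{00}$ and $P=1-\varepsilon$, not $1-\tfrac{2}{3}\varepsilon$. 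With the same choice, the three-qubit GHZ state gives $T_{\vec 0}$ eigenvalue $3/7$ on one $\ket{S,\vec a}$ and $2/7$ on the other six, so homogeneity fails there too.

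Once the accept set is identified correctly, $\vec a$ is accepted for $\ket e$ if and only if $\chi_{\vec a}$ matches the eigenvalues of $\ket e$ on all of $N\cap U_e$. Your argument then shows that $T_{\vec 0}$ is diagonal in $\{\ket{S,\vec a}\}$, with entry equal to the fraction of lines $L$ for which $\chi_{\vec a}$ is trivial on $N\cap U_L$. Since $L\subseteq N\cap U_L$, these entries are at most $\beta_S$ for $\vec a\neq\vec 0$. This yields only $P(\cS_{\cM_S},\varepsilon)\leq1-(1-\beta_S)\varepsilon$, so the stated $N_{\cM_S}$ suffices, but neither the claimed equalities nor homogeneity follow.

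To obtain the statement as written you need one of two extra ingredients:
\begin{enumerate}
\item a hypothesis such as $N\cap U_L=L$ for every line, which holds for the generalized Bell measurement, where the paper checks the overlaps directly; or
\item enlarged accept sets $\{\vec a:\chi_{\vec a}(\vec n)=\mu_e\}$, with $\mu_e$ the eigenvalue of $\ket e$ under $W^n(\vec n)$. This keeps perfect completeness but abandons the ideal-support convention, and you would then conclude via $p_{\rm pass}=\beta_S+(1-\beta_S)\Fid$ directly.
\end{enumerate}
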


\section{Measurement fidelity estimation}\label{sec:measurement-fidelity-estimation}

A QMV protocol $\cS$ for the measurement device $\mathscr{D}$ can be
utilized to estimate the device's fidelity, as in the cases of quantum state and
process verification~\cite{Zhu2019Optimal,Jiang2020Towards,Zhang2022Efficient,wollmann2023fidelity}.
In particular, because the constructed verification operators are homogeneous,
we can estimate the measurement fidelity in a more direct manner.
By the definition of homogeneous operator in Eq.~\eqref{eq:homogeneous},
QMV can be related to measurement fidelity estimation via
\begin{align}
  \Fid(\cP,\cM)
&= \frac{1}{\vert\Theta\vert}\sum_{\theta\in\Theta}\bra{\psi_\theta}M_\theta\ket{\psi_\theta} \\
&= \frac{\sum_{\theta\in\Theta}\tr[\Omega_\theta M_\theta] - \beta}{1 - \beta},
\end{align}
where the average passing probability $\sum_{\theta\in\Theta}\tr[\Omega_\theta M_\theta]$ can be evaluated
from the experimental data. Specifically, we consecutively perform the
measurement verification protocol $\cS$ for $N$ rounds and record the number of passing instances $f$.
Then $\sum_{\theta\in\Theta}\tr[\Omega_\theta M_\theta]\approx f/N$, as guaranteed
by the law of large numbers. However, for finite $N$, the standard deviation is bounded by
\begin{align}\label{eq:standard-deviation-main}
  \Delta(\widehat{F}) \leq \frac{1}{2(1-\beta)\sqrt{N}},
\end{align}
meaning that the number of rounds required must satisfy
$N=\cO(1/\varepsilon_{\rm est}^2)$
in order to achieve estimation precision $\varepsilon_{\rm est}$,
which is quadratically worse than that of verification.
Details regarding measurement fidelity estimation based on QMV
can be found in Appendix~\ref{appx:measurement-fidelity-estimation}.

Nevertheless, using QMV as an estimation procedure,
one can directly obtain the value of measurement fidelity rather than a bound.
In addition, performing fidelity estimation only needs to know the \emph{frequency} of pass instances
rather than the number of \emph{successive} pass cases, which is much more experimentally robust.

\section{Conclusions and outlook}\label{sec:conclusions}

We developed a framework for verifying entangled measurements using only local
state preparations.  We formulated the QMV task in terms of the worst-case
passing probability of faulty POVMs and proved that, for locally transitive and
irreducible von Neumann measurements, symmetry reduced the QMV problem
constrained by locality to a QSV style optimization for one representative basis state.  We
then constructed homogeneous local protocols for generalized Bell measurements,
single-parameter measurements on two qubits, elegant joint measurements, and
stabilizer state induced measurements, and derived their verification operators,
success probabilities, and sample complexities.  We also showed how homogeneous
QMV protocols could be used to estimate measurement fidelity directly from the
frequency of passing events.

Two problems deserve further study.  First, it is important to extend the
present symmetry reduction method beyond locally transitive and irreducible von
Neumann measurements. Second, a fully realistic theory should incorporate
imperfect test state preparation, device behavior that is not independent and identically distributed, and finite sample
confidence bounds, so that QMV protocols can be deployed robustly on quantum platforms.

\section*{Acknowledgements}
MH was supported in part by
the General R\&D Projects of 1+1+1 CUHK-CUHK(SZ)-GDST Joint Collaboration Fund (Grant No. GRDP2025-022) and the Guangdong Provincial Quantum Science Strategic Initiative (Grant No. 
GDZX2505003).

%

\makeatletter
\newcommand{\appendixtitle}[1]{\gdef\@title{#1}}
\makeatother

\makeatletter%
\newcommand{\appendixmaketitle}{%
\begin{center}%
\vspace{0.4in}%
{\large \@title \par}%
\end{center}%
\par%
}%
\makeatother%

\makeatletter
\newcommand{\appendixtableofcontents}{%
\begingroup
\setcounter{tocdepth}{4}%
\@starttoc{atoc}%
\endgroup}
\makeatother

\newcommand{\appendixtocdivider}{%
\par\medskip
\noindent\hbox to \linewidth{%
\leaders\hbox{\rule[0.6ex]{1pt}{0.4pt}}\hfill
\hspace{0.8em}\textsc{Table of Contents}\hspace{0.8em}%
\leaders\hbox{\rule[0.6ex]{1pt}{0.4pt}}\hfill}%
\par}

\makeatletter
\newcounter{subsubsubsection}[subsubsection]
\renewcommand{\thesubsubsubsection}{\thesubsubsection.\arabic{subsubsubsection}}
\providecommand{\theHsubsubsubsection}{}
\renewcommand{\theHsubsubsubsection}{\theHsubsubsection.\arabic{subsubsubsection}}
\providecommand*{\toclevel@subsubsubsection}{4}
\newcommand{\subsubsubsection}[1]{%
\refstepcounter{subsubsubsection}%
\paragraph*{\thesubsubsubsection\space #1}}
\newcommand*\l@subsubsubsection[2]{\@dottedtocline{4}{4.4em}{3.0em}{#1}{#2}}
\makeatother

\newcommand{\appsection}[1]{%
\section{#1}%
\addcontentsline{atoc}{section}{\protect\numberline{\thesection}#1}}

\newcommand{\appsubsection}[1]{%
\subsection{#1}%
\addcontentsline{atoc}{subsection}{\protect\numberline{\thesubsection}#1}}

\newcommand{\appsubsubsection}[1]{%
\subsubsection{#1}%
\addcontentsline{atoc}{subsubsection}{\protect\numberline{\thesubsubsection}#1}}

\newcommand{\appsubsubsubsection}[1]{%
\subsubsubsection{#1}}

\setcounter{secnumdepth}{4}
\appendix
\widetext
\newpage

\appendixtitle{\bf
Supplemental Material for\\``\thetitle''}
\appendixmaketitle
\vspace{0.1in}

This Supplemental Material is organized as follows.
Appendix~\ref{appx:sec:properties} collects basic properties of the maximal
passing probability used in QMV.
Appendix~\ref{appx:sec:measurement-symmetry} proves the symmetry reduction
developed in Section~\ref{sec:symmetries-simplify-QMV}.
Appendices~\ref{appx:sec:GBM}--\ref{appx:sec:stabilizer-state} provide the
detailed protocols, symmetry analysis, and proofs for the four representative
classes of measurements studied in Section~\ref{sec:applications}: generalized
Bell measurements, single-parameter measurements on two qubits, elegant joint
measurements, and stabilizer state induced measurements.
Appendix~\ref{appx:measurement-fidelity-estimation} contains the details of
measurement fidelity estimation based on homogeneous QMV protocols.

\appendixtocdivider
{%
\appendixtableofcontents%
}%

\appsection{Properties of the maximal probability}\label{appx:sec:properties}

In this section, we collect several basic properties of the maximal
passing probability $P(\cS,\varepsilon)$ defined in
Eq.~\eqref{eq:worst-case failure probability} of the main text.
These properties are used when designing efficient and optimal
verification protocols.

\begin{proposition}\label{prop:pure states}
It suffices to consider pure test states to minimize
$P(\cS,\varepsilon)$.
\end{proposition}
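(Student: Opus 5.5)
The plan is to show that replacing each mixed test state by its pure-state decomposition never increases $P(\cS,\varepsilon)$. Concretely, take an arbitrary protocol $\cS=\{p_x,(\rho_x,\cP(\rho_x))\}_{x\in\cX}$ and spectrally decompose each test state, $\rho_x=\sum_k q_{x,k}\proj{\phi_{x,k}}$ with $q_{x,k}>0$. Under the locality constraint, a separable $\rho_x$ is instead written as a convex combination of pure product states; any such decomposition works for the argument below. Define the refined protocol
\begin{align}
\cS'=\bigl\{p_xq_{x,k},\,(\proj{\phi_{x,k}},\cP(\phi_{x,k}))\bigr\}_{x,k},
\end{align}
where $\phi_{x,k}:=\proj{\phi_{x,k}}$. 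This protocol uses only pure test states, and these states are local whenever the original decomposition was.

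The key observation is that the accept sets can only shrink. If $\theta\in\cP(\phi_{x,k})$, then $\bra{\psi_\theta}\phi_{x,k}\ket{\psi_\theta}>0$. Since $q_{x,k}>0$, this gives $\bra{\psi_\theta}\rho_x\ket{\psi_\theta}\geq q_{x,k}\bra{\psi_\theta}\phi_{x,k}\ket{\psi_\theta}>0$, so $\cP(\phi_{x,k})\subseteq\cP(\rho_x)$.

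From this inclusion we get an operator inequality for every outcome $\theta$:
\begin{align}
\Omega_\theta(\cS')=\sum_{x,k:\,\theta\in\cP(\phi_{x,k})}p_xq_{x,k}\proj{\phi_{x,k}}\;\leq\;\sum_{x:\,\theta\in\cP(\rho_x)}p_x\sum_k q_{x,k}\proj{\phi_{x,k}}=\Omega_\theta(\cS).
\end{align}
The inequality holds because the left-hand sum runs over a subset of the positive terms on the right. Now fix any POVM $\cM$ with $M_\theta\geq0$. Then $\tr[\Omega_\theta(\cS')M_\theta]\leq\tr[\Omega_\theta(\cS)M_\theta]$ for each $\theta$. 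Summing over $\theta$ via Eq.~\eqref{eq:single-round-passing-probability} gives $p_{\rm pass}(\cS',\cM)\leq p_{\rm pass}(\cS,\cM)$. Both protocols face the same feasible set $\{\cM:\Fid(\cP,\cM)\leq1-\varepsilon\}$, so taking the maximum over it yields $P(\cS',\varepsilon)\leq P(\cS,\varepsilon)$. Hence the minimum of $P$ may be sought among protocols with pure test states.

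There is no real obstacle here; the argument is a monotonicity statement. The only point needing care is to state it under the locality constraint. There one must decompose into pure \emph{product} states rather than eigenvectors, which is possible by the definition of separability. The accept-set inclusion argument is unchanged, since it used only positivity of the weights $q_{x,k}$ and not orthogonality of the $\ket{\phi_{x,k}}$.
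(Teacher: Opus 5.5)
Your proof is correct and follows the paper's argument: refine each test state into a pure decomposition, derive $\cP(\phi_{x,k})\subseteq\cP(\rho_x)$ from $q_{x,k}>0$, and conclude $p_{\rm pass}(\cS',\cM)\leq p_{\rm pass}(\cS,\cM)$ for every POVM before maximizing over the common feasible set. Two differences are only presentational. You state the comparison as the operator inequality $\Omega_\theta(\cS')\leq\Omega_\theta(\cS)$, and you allow arbitrary pure-product decompositions of separable states, where the paper uses local spectral decompositions of product states.
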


\begin{proof}
Consider an arbitrary QMV protocol
\begin{align}
  \cS=\{p_x,(\rho_x,\cP(\rho_x))\}_{x\in\cX}
\end{align}
for the target measurement $\cP$.  For each test state $\rho_x$, choose
a spectral decomposition
\begin{align}
  \rho_x=\sum_y q_{y\vert x} v_{y\vert x},
  \qquad
  v_{y\vert x}:=\proj{v_{y\vert x}},
  \label{eq:spectral-decomposition-rhox}
\end{align}
where only eigenvectors with $q_{y\vert x}>0$ are included.  Zero
eigenvalue components are omitted, and the corresponding eigenvectors
may be completed to an orthonormal basis of the underlying Hilbert space
$\cH^{\otimes n}$.

We refine the original protocol by replacing each mixed test state with
the pure states appearing in its spectral decomposition:
\begin{align}
  \cS'
  :=
  \left\{
  p_x q_{y\vert x},
  \left(v_{y\vert x},\cP(v_{y\vert x})\right)
  \right\}_{x,y}.
  \label{eq:pure-refined-protocol}
\end{align}
This protocol consists only of pure test states.  It also accepts the
ideal measurement $\cP$ with certainty, because the accept set is again
defined as the support of the ideal outcome distribution.

It remains to compare the worst-case passing probabilities.  Take any
POVM $\cM=\{M_\theta\}_{\theta\in\Theta}$ satisfying
\begin{align}
  \Fid(\cP,\cM)\leq 1-\varepsilon .
\end{align}
For every pair $(x,y)$ appearing in the above decomposition,
\begin{align}
  \cP(v_{y\vert x})\subseteq \cP(\rho_x).
\end{align}
Indeed, if $\theta\in\cP(v_{y\vert x})$, then
\begin{align}
  \bra{\psi_\theta}v_{y\vert x}\ket{\psi_\theta}>0.
\end{align}
Since $q_{y\vert x}>0$, Eq.~\eqref{eq:spectral-decomposition-rhox}
implies
\begin{align}
  \bra{\psi_\theta}\rho_x\ket{\psi_\theta}
  \geq
  q_{y\vert x}
  \bra{\psi_\theta}v_{y\vert x}\ket{\psi_\theta}
  >0.
\end{align}
Thus $\theta\in\cP(\rho_x)$.

For this fixed POVM $\cM$, we therefore have
\begin{align}
p_{\rm pass}(\cS',\cM)
&=
\sum_{x,y}p_xq_{y\vert x}
\sum_{\theta\in\cP(v_{y\vert x})}
\tr[v_{y\vert x}M_\theta] \\
&\leq
\sum_{x,y}p_xq_{y\vert x}
\sum_{\theta\in\cP(\rho_x)}
\tr[v_{y\vert x}M_\theta] \\
&=
\sum_xp_x
\sum_{\theta\in\cP(\rho_x)}
\tr\!\left[
\left(\sum_yq_{y\vert x}v_{y\vert x}\right)M_\theta
\right] \\
&=
\sum_xp_x
\sum_{\theta\in\cP(\rho_x)}
\tr[\rho_xM_\theta]
=
p_{\rm pass}(\cS,\cM).
\end{align}
Maximizing over all POVMs satisfying
$\Fid(\cP,\cM)\leq1-\varepsilon$ gives
\begin{align}
    P(\cS',\varepsilon)\leq P(\cS,\varepsilon).
\end{align}
Hence any protocol with mixed test states can be refined to a pure-state
protocol without increasing the maximal passing probability.

When the test states are required to be local product states, the same
argument is compatible with the locality constraint for product mixed
states: it suffices to take local spectral decompositions, thereby obtaining
pure local product test states.
\end{proof}

\begin{proposition}\label{prop:nonincreasing}
$P(\cS,\varepsilon)$ is nonincreasing in $\varepsilon$.  Namely, for
$\varepsilon,\varepsilon'\in(0,1)$ with
$\varepsilon\leq\varepsilon'$,
\begin{align}
  P(\cS,\varepsilon)\geq P(\cS,\varepsilon').
\end{align}
\end{proposition}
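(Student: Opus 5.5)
The plan is a feasible-set inclusion argument. For a fixed protocol $\cS$, the objective $p_{\rm pass}(\cS,\cM)=\sum_{\theta}\tr[\Omega_\theta(\cS)M_\theta]$ in Eq.~\eqref{eq:worst-case failure probability} does not depend on $\varepsilon$. Only the constraint set
\begin{align}
  \mathcal{F}_\varepsilon:=\left\{\cM=\{M_\theta\}_{\theta\in\Theta}\ \text{POVM} : \Fid(\cP,\cM)\leq 1-\varepsilon\right\}
\end{align}
depends on $\varepsilon$. So it suffices to show $\mathcal{F}_{\varepsilon'}\subseteq\mathcal{F}_{\varepsilon}$ whenever $\varepsilon\leq\varepsilon'$.

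The first step is the inclusion itself. If $\cM\in\mathcal{F}_{\varepsilon'}$, then
\begin{align}
  \Fid(\cP,\cM)\leq 1-\varepsilon'\leq 1-\varepsilon,
\end{align}
so $\cM\in\mathcal{F}_{\varepsilon}$.

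The second step is to take maxima. For every $\cM\in\mathcal{F}_{\varepsilon'}$, we have $p_{\rm pass}(\cS,\cM)\leq P(\cS,\varepsilon)$, since $\cM$ is also feasible for the $\varepsilon$ problem. Taking the supremum over $\mathcal{F}_{\varepsilon'}$ gives $P(\cS,\varepsilon')\leq P(\cS,\varepsilon)$, as claimed.

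The only point needing care is that both maxima are well defined, i.e., that each $\mathcal{F}_{\varepsilon}$ is nonempty for $\varepsilon\in(0,1)$. Nonemptiness holds because the set $\{\cM:\Fid(\cP,\cM)\leq 1-\varepsilon'\}$ contains the witness used for $\cS_{\rm opt}$: mix the ideal measurement with a fixed-point-free relabeling of outcomes, with weight chosen to reach fidelity exactly $1-\varepsilon'$. Each $\mathcal{F}_\varepsilon$ is also compact (closed and bounded in finite dimensions), so the maximum is attained and the argument holds as stated rather than only for suprema.
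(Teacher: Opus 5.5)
Your proposal is correct and follows the same argument as the paper: the objective $p_{\rm pass}(\cS,\cM)$ does not depend on $\varepsilon$, and $\{\cM:\Fid(\cP,\cM)\leq 1-\varepsilon'\}\subseteq\{\cM:\Fid(\cP,\cM)\leq 1-\varepsilon\}$, so maximizing over the larger set cannot give a smaller value. Your added remarks on nonemptiness (via the fixed-point-free relabeling witness) and compactness, which ensure both maxima exist, are valid details that the paper leaves implicit in this proof.
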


\begin{proof}
Increasing $\varepsilon$ makes the bad-measurement condition more
restrictive.  More precisely, since $\varepsilon\leq\varepsilon'$, the
feasible set
\begin{align}
    \{\cM:\Fid(\cP,\cM)\leq1-\varepsilon'\}
\end{align}
is contained in
\begin{align}
    \{\cM:\Fid(\cP,\cM)\leq1-\varepsilon\}.
\end{align}
The same objective function $p_{\rm pass}(\cS,\cM)$ is maximized over
these two feasible sets.  Maximization over the larger feasible set
cannot give a smaller value, and hence
\begin{align}
  P(\cS,\varepsilon) \geq P(\cS,\varepsilon').
\end{align}
\end{proof}

\begin{proposition}\label{prop:concavity}
$P(\cS,\varepsilon)$ is concave in $\varepsilon$.  Namely, for
$\varepsilon_1,\varepsilon_2\in(0,1)$ and $\lambda\in[0,1]$,
\begin{align}
  P(\cS,\lambda\varepsilon_1+(1-\lambda)\varepsilon_2)
  \geq
  \lambda P(\cS,\varepsilon_1)
  +(1-\lambda)P(\cS,\varepsilon_2).
\end{align}
\end{proposition}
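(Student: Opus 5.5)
The plan is to treat $P(\cS,\varepsilon)$ as the optimal value of a linear program in the POVM, and to use convexity of the feasible set. The objective $\cM\mapsto p_{\rm pass}(\cS,\cM)=\sum_\theta\tr[\Omega_\theta(\cS)M_\theta]$ is linear in the tuple $(M_\theta)_\theta$. The fidelity $\Fid(\cP,\cM)$ is also linear in $\cM$. The set of POVMs is convex, and it is compact because each $0\le M_\theta\le\1$ in finite dimension. Hence each maximum in Eq.~\eqref{eq:worst-case failure probability} is attained whenever its feasible set is nonempty. Nonemptiness holds for every $\varepsilon\in(0,1)$: the fixed-point-free relabeling of the ideal measurement, used for $\cS_{\rm opt}$, has fidelity $0$.

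First, I fix $\varepsilon_1,\varepsilon_2\in(0,1)$ and $\lambda\in[0,1]$. I then choose maximizers $\cM^{(1)}$ and $\cM^{(2)}$ with $\Fid(\cP,\cM^{(i)})\le 1-\varepsilon_i$ and $p_{\rm pass}(\cS,\cM^{(i)})=P(\cS,\varepsilon_i)$.

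Second, I form the mixture $\cM^\lambda:=\{\lambda M^{(1)}_\theta+(1-\lambda)M^{(2)}_\theta\}_\theta$. This is a POVM, since positivity and the normalization $\sum_\theta M_\theta=\1$ are both preserved under convex combination.

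Third, linearity of $\Fid$ gives
\begin{align}
\Fid(\cP,\cM^\lambda)=\lambda\Fid(\cP,\cM^{(1)})+(1-\lambda)\Fid(\cP,\cM^{(2)})\le 1-[\lambda\varepsilon_1+(1-\lambda)\varepsilon_2].
\end{align}
So $\cM^\lambda$ is feasible for the precision $\lambda\varepsilon_1+(1-\lambda)\varepsilon_2$, which lies in $(0,1)$.

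Finally, linearity of $p_{\rm pass}$ gives
\begin{align}
P(\cS,\lambda\varepsilon_1+(1-\lambda)\varepsilon_2)\ge p_{\rm pass}(\cS,\cM^\lambda)=\lambda P(\cS,\varepsilon_1)+(1-\lambda)P(\cS,\varepsilon_2).
\end{align}
This is exactly the concavity inequality in the statement.

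The only delicate point is the attainment of the maxima. If I prefer not to rely on compactness, I will instead take $\eta$-optimal POVMs $\cM^{(i)}$, carry out the same mixing argument, and let $\eta\to0$. This avoids any existence issue, so I do not expect a genuine obstacle; the argument is essentially the standard fact that the value of a linear program is concave in the right-hand side of its constraint.
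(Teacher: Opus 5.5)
Your proposal is correct and follows the paper's proof essentially step for step: attain the maxima by compactness, mix the two maximizing POVMs, and use linearity of $\Fid(\cP,\cM)$ and $p_{\rm pass}(\cS,\cM)$ to get feasibility at $\lambda\varepsilon_1+(1-\lambda)\varepsilon_2$ and the concavity inequality. Your added check that the feasible set is nonempty, via a fixed-point-free relabeling with fidelity $0$, is a small point the paper leaves implicit.
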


\begin{proof}
The maximum in the definition of $P(\cS,\varepsilon)$ is attained:
the POVM set is compact in finite dimension, and both the objective
function and the fidelity constraint are continuous.

Choose maximizing POVMs
\begin{align}
    \cM^{(i)}=\{M_\theta^{(i)}\}_{\theta\in\Theta},
  \qquad i=1,2,
\end{align}
for $P(\cS,\varepsilon_1)$ and $P(\cS,\varepsilon_2)$, respectively.
Thus
\begin{align}
  \Fid(\cP,\cM^{(i)})\leq1-\varepsilon_i,
  \qquad
  P(\cS,\varepsilon_i)=p_{\rm pass}(\cS,\cM^{(i)}).
\end{align}
Their convex combination
\begin{align}
  \cM^\ast :=
  \{\lambda M_\theta^{(1)}+(1-\lambda)M_\theta^{(2)}\}_{\theta\in\Theta}
\end{align}
is again a POVM.  By linearity of the fidelity,
\begin{align}
  \Fid(\cP,\cM^\ast)
  &=
  \lambda\Fid(\cP,\cM^{(1)})
  +(1-\lambda)\Fid(\cP,\cM^{(2)}) \\
  &\leq
  \lambda(1-\varepsilon_1)
  +(1-\lambda)(1-\varepsilon_2) \\
  &=
  1-\{\lambda\varepsilon_1+(1-\lambda)\varepsilon_2\}.
\end{align}
Therefore $\cM^\ast$ is feasible for
$P(\cS,\lambda\varepsilon_1+(1-\lambda)\varepsilon_2)$.  
Consequently,
\begin{align}
P(\cS,\lambda\varepsilon_1+(1-\lambda)\varepsilon_2)
&\geq p_{\rm pass}(\cS,\cM^\ast) \\
&=
\lambda p_{\rm pass}(\cS,\cM^{(1)})
+(1-\lambda)p_{\rm pass}(\cS,\cM^{(2)}) \\
&= \lambda P(\cS,\varepsilon_1) + (1-\lambda)P(\cS,\varepsilon_2),
\end{align}
where the middle equality uses the linearity of
$p_{\rm pass}(\cS,\cM)$ in the POVM elements.
\end{proof}

\vspace{0.1in}
\textbf{Convex mixing of verification protocols.}
Consider two verification protocols for the same target measurement $\cP$,
\begin{align}
  \cS_1=\{p_x,(\rho_x,\cP(\rho_x))\}_x,
  \qquad
  \cS_2=\{q_y,(\sigma_y,\cP(\sigma_y))\}_y .
\end{align}
For $\lambda\in[0,1]$, their mixture is the protocol
\begin{align}
  \lambda\cS_1+(1-\lambda)\cS_2
  :=
  \left\{
  \lambda p_x,(\rho_x,\cP(\rho_x));
  \ (1-\lambda)q_y,(\sigma_y,\cP(\sigma_y))
  \right\}_{x,y}.
\end{align}

\begin{proposition}\label{prop:convexity}
$P(\cS,\varepsilon)$ is convex in the protocol $\cS$.  
More precisely, for $\cS=\lambda\cS_1+(1-\lambda)\cS_2$, one has
\begin{align}
  P(\cS,\varepsilon)
  \leq
  \lambda P(\cS_1,\varepsilon)
  +(1-\lambda)P(\cS_2,\varepsilon).
\end{align}
\end{proposition}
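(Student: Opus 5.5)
The plan is to prove convexity pointwise in the POVM and then pass to the maximum. Fix $\varepsilon\in(0,1)$ and any POVM $\cM=\{M_\theta\}_{\theta\in\Theta}$ with $\Fid(\cP,\cM)\leq 1-\varepsilon$. By the first line of Eq.~\eqref{eq:single-round-passing-probability}, the passing probability of the mixed protocol $\cS=\lambda\cS_1+(1-\lambda)\cS_2$ is a sum over its entries. The entries split into those inherited from $\cS_1$, with weights $\lambda p_x$, and those inherited from $\cS_2$, with weights $(1-\lambda)q_y$.

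The key observation is that each test state keeps its own accept set $\cP(\rho_x)$ or $\cP(\sigma_y)$ in the mixture. This holds because the accept set depends only on the test state and the target $\cP$, not on the protocol containing it. Hence
\begin{align}
  p_{\rm pass}(\cS,\cM)
  =\lambda\, p_{\rm pass}(\cS_1,\cM)+(1-\lambda)\, p_{\rm pass}(\cS_2,\cM).
\end{align}
The same identity can be phrased through Eq.~\eqref{eq:outcome-verification-operator}: there one has $\Omega_\theta(\cS)=\lambda\Omega_\theta(\cS_1)+(1-\lambda)\Omega_\theta(\cS_2)$, and the second line of Eq.~\eqref{eq:single-round-passing-probability} is linear in these operators.

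Next I bound each term by its own worst case. Since $\cM$ is feasible for the bad-case set, $p_{\rm pass}(\cS_i,\cM)\leq P(\cS_i,\varepsilon)$ for $i=1,2$. This gives $p_{\rm pass}(\cS,\cM)\leq \lambda P(\cS_1,\varepsilon)+(1-\lambda)P(\cS_2,\varepsilon)$ for every feasible $\cM$. Taking the maximum over feasible $\cM$ on the left-hand side gives the claim. In words, a maximum of a sum is at most the sum of the maxima.

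There is no real obstacle. The only point needing care is the bookkeeping when the same state appears in both protocols. Listing the entries separately, as in the definition of the mixture, avoids this issue entirely, and merging duplicate entries would not change the value anyway.
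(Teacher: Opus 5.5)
Your proof is correct and follows the paper's argument: the paper also notes that $p_{\rm pass}(\cdot,\cM)$ is affine in the protocol for each fixed POVM, then bounds the maximum of the convex combination by the convex combination of the maxima. Your remark that each test state keeps its own accept set in the mixture is exactly what makes the affine identity hold, and the paper builds this into its definition of $\lambda\cS_1+(1-\lambda)\cS_2$.
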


\begin{proof}
For every POVM $\cM$, the passing probability is affine in the protocol:
\begin{align}
  p_{\rm pass}(\lambda\cS_1+(1-\lambda)\cS_2,\cM)
  =
  \lambda p_{\rm pass}(\cS_1,\cM)
  +(1-\lambda)p_{\rm pass}(\cS_2,\cM).
\end{align}
Hence
\begin{align}
P(\cS,\varepsilon)
&=
\max_{\cM:\Fid(\cP,\cM)\leq1-\varepsilon}
\left[
\lambda p_{\rm pass}(\cS_1,\cM)
+(1-\lambda)p_{\rm pass}(\cS_2,\cM)
\right] \\
&\leq
\lambda
\max_{\cM:\Fid(\cP,\cM)\leq1-\varepsilon}
p_{\rm pass}(\cS_1,\cM)
+
(1-\lambda)
\max_{\cM:\Fid(\cP,\cM)\leq1-\varepsilon}
p_{\rm pass}(\cS_2,\cM) \\
&=
\lambda P(\cS_1,\varepsilon)
+(1-\lambda)P(\cS_2,\varepsilon).
\end{align}
\end{proof}

\vspace{0.1in}
\textbf{Local equivalence of target measurements.} Suppose that
\begin{align}
  \cP=\{\proj{\psi_\theta}\}_{\theta\in\Theta},
  \qquad
  \cQ=\{\proj{\phi_\theta}\}_{\theta\in\Theta}
\end{align}
are two projective measurements on $\cH^{\otimes n}$.  They are called
locally equivalent if local unitaries $U_1,\ldots,U_n$ exist such that,
with $U:=U_1\ox\cdots\ox U_n$,
\begin{align}
  \ket{\phi_\theta}=U\ket{\psi_\theta},
  \qquad
  \forall \theta\in\Theta.
\end{align}
The next proposition states that local equivalence preserves the maximal
passing probability.  To avoid ambiguity, $P_{\cP}$ and $P_{\cQ}$ denote
the maximal passing probabilities evaluated with respect to the target
measurements $\cP$ and $\cQ$, respectively.

\begin{proposition}\label{prop:locally equivalent}
For any protocol $\cS$ verifying $\cP$, there is a corresponding protocol
$\cS'$ verifying $\cQ$ such that
\begin{align}
  P_{\cQ}(\cS',\varepsilon)=P_{\cP}(\cS,\varepsilon).
\end{align}
\end{proposition}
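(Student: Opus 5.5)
The plan is to transport everything through the local unitary $U=U_1\ox\cdots\ox U_n$. Given $\cS=\{p_x,(\rho_x,\cP(\rho_x))\}_{x\in\cX}$, define
\begin{align}
  \cS':=\{p_x,(U\rho_xU^\dagger,\cQ(U\rho_xU^\dagger))\}_{x\in\cX}.
\end{align}
Since $U$ is a tensor product of single-party unitaries, $\cS'$ is local whenever $\cS$ is. Product test states are mapped to product test states.

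\textbf{Accept sets coincide.} Using $\ket{\phi_\theta}=U\ket{\psi_\theta}$, we have
\begin{align}
  \bra{\phi_\theta}U\rho_xU^\dagger\ket{\phi_\theta}=\bra{\psi_\theta}\rho_x\ket{\psi_\theta}.
\end{align}
Hence $\cQ(U\rho_xU^\dagger)=\cP(\rho_x)$ for every $x$. It follows from Eq.~\eqref{eq:outcome-verification-operator} that
\begin{align}
  \Omega_\theta(\cS')=U\,\Omega_\theta(\cS)\,U^\dagger
\end{align}
for every outcome $\theta$.

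\textbf{Bijection of POVMs.} Map each POVM $\cM=\{M_\theta\}$ to $\cM':=\{UM_\theta U^\dagger\}$. This is again a POVM, since positivity and $\sum_\theta UM_\theta U^\dagger=\1$ are preserved. The map is invertible, with inverse given by conjugation by $U^\dagger$. It preserves fidelity:
\begin{align}
  \Fid(\cQ,\cM')=\frac{1}{|\Theta|}\sum_\theta\bra{\psi_\theta}U^\dagger UM_\theta U^\dagger U\ket{\psi_\theta}=\Fid(\cP,\cM).
\end{align}
Therefore it sends the feasible set $\{\cM:\Fid(\cP,\cM)\le1-\varepsilon\}$ bijectively onto $\{\cM':\Fid(\cQ,\cM')\le1-\varepsilon\}$.

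\textbf{Passing probabilities agree.} Cyclicity of the trace gives
\begin{align}
  p_{\rm pass}(\cS',\cM')=\sum_\theta\tr[U\Omega_\theta(\cS) U^\dagger UM_\theta U^\dagger]=p_{\rm pass}(\cS,\cM).
\end{align}
Maximizing both sides over the corresponding feasible sets, which are in bijection, yields $P_\cQ(\cS',\varepsilon)=P_\cP(\cS,\varepsilon)$.

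No step is a real obstacle. The only points needing care are that locality is preserved, because $U$ is a product of local unitaries, and that the accept-set map agrees exactly, so that no spurious outcomes enter $\cQ(U\rho_xU^\dagger)$. Both follow from the one-line identities above.
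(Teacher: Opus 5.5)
Your proposal is correct and follows essentially the same argument as the paper: the same transported protocol $\cS'$, the identity $\cQ(U\rho_xU^\dagger)=\cP(\rho_x)$, and a conjugation bijection between feasible POVMs that preserves both fidelity and passing probability. The differences are cosmetic: you conjugate in the direction $\cP\to\cQ$ via $M_\theta\mapsto UM_\theta U^\dagger$, whereas the paper uses $M_\theta\mapsto U^\dagger M_\theta U$ from the $\cQ$-problem to the $\cP$-problem, and you route the passing-probability identity through $\Omega_\theta(\cS')=U\,\Omega_\theta(\cS)\,U^\dagger$.
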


\begin{proof}
Start from a protocol $\cS=\{p_x,(\rho_x,\cP(\rho_x))\}_x$ for $\cP$. 
The corresponding protocol for $\cQ$ is obtained by applying
the same local unitary to every test state:
\begin{align}
  \cS'
  :=
  \left\{
  p_x,
  \left(
  U\rho_xU^\dagger,\,
  \cQ(U\rho_xU^\dagger)
  \right)
  \right\}_x .
\end{align}
This protocol accepts the ideal measurement $\cQ$ with certainty.

The accept sets are transported by the same unitary.  Indeed, for every
$x$ and $\theta$,
\begin{align}
  \bra{\psi_\theta}\rho_x\ket{\psi_\theta}
  =
  \bra{\phi_\theta}U\rho_xU^\dagger\ket{\phi_\theta},
\end{align}
and therefore
\begin{align}
  \cP(\rho_x)=\cQ(U\rho_xU^\dagger).
  \label{eq:accept-sets-local-equivalence}
\end{align}

Now consider an arbitrary POVM
$\cM=\{M_\theta\}_{\theta\in\Theta}$ in the $\cQ$-verification problem.
After conjugation by $U$, define
\begin{align}
  M_\theta' := U^\dagger M_\theta U,
  \qquad
  \cM':=\{M_\theta'\}_{\theta\in\Theta}.
\end{align}
The family $\cM'$ is a valid POVM.  Moreover,
\begin{align}
\Fid(\cQ,\cM)
=
\frac{1}{|\Theta|}
\sum_{\theta\in\Theta}
\bra{\phi_\theta}M_\theta\ket{\phi_\theta}
=
\frac{1}{|\Theta|}
\sum_{\theta\in\Theta}
\bra{\psi_\theta}U^\dagger M_\theta U\ket{\psi_\theta}
=
\Fid(\cP,\cM').
\end{align}
The passing probabilities are also preserved:
\begin{align}
p_{\rm pass}(\cS',\cM)
=
\sum_xp_x
\sum_{\theta\in\cQ(U\rho_xU^\dagger)}
\tr[U\rho_xU^\dagger M_\theta]
=
\sum_xp_x
\sum_{\theta\in\cP(\rho_x)}
\tr[\rho_x U^\dagger M_\theta U]
=
p_{\rm pass}(\cS,\cM').
\end{align}
Thus conjugation by $U$ gives a bijection between the feasible POVMs for
$P_{\cQ}(\cS',\varepsilon)$ and those for
$P_{\cP}(\cS,\varepsilon)$, while preserving the objective value.
Consequently,
\begin{align}
  P_{\cQ}(\cS',\varepsilon)
  =
  P_{\cP}(\cS,\varepsilon).
\end{align}
\end{proof}

\appsection{Symmetry reduction for QMV with locality constraint}\label{appx:sec:measurement-symmetry}

This appendix proves the symmetry reduction statements claimed in
Section~\ref{sec:symmetries-simplify-QMV}. Throughout this section, $\cP$ is
locally transitive with respect to the finite group $G$ and the local unitary
representation $V$.

\appsubsection{Proof of Proposition~\ref{prop:symmetric-protocols}}
\label{appx:B1}
For any local protocol
$\cS=\{p_x,(\rho_x,\cP(\rho_x))\}_{x\in\cX}$, its symmetrization
$\bar{\cS}$ is local because each $V(g)$ is a product unitary.
For any POVM $\cM=\{M_\theta\}_{\theta\in\Theta}$, define the averaged POVM
\begin{align}
\bar{M}_\theta:=
\frac{1}{|G|}\sum_{g\in G}V^\dagger(g)M_{g\theta}V(g).
\end{align}
Then $\bar{\cM}:=\{\bar{M}_\theta\}_{\theta\in\Theta}$ is a valid POVM and
\begin{align}
\Fid(\cP,\bar{\cM})
=
\frac{1}{|\Theta|}
\sum_{\theta\in\Theta}
\bra{\psi_\theta}\bar{M}_\theta\ket{\psi_\theta}
=
\frac{1}{|G||\Theta|}
\sum_{g\in G}\sum_{\theta\in\Theta}
\bra{\psi_{g\theta}}M_{g\theta}\ket{\psi_{g\theta}}
=
\Fid(\cP,\cM).
\end{align}
Moreover, using $g\cP(\rho_x)=\cP(V(g)\rho_xV^\dagger(g))$, we have
\begin{align}
p_{\rm pass}(\bar{\cS},\cM)
=
\sum_{x,g}\frac{p_x}{|G|}
\sum_{\eta\in g\cP(\rho_x)}
\tr\!\left[V(g)\rho_xV^\dagger(g)M_\eta\right]
=
\sum_xp_x
\sum_{\theta\in\cP(\rho_x)}
\tr\!\left[\rho_x\bar M_\theta\right]
=
p_{\rm pass}(\cS,\bar{\cM}).
\end{align}

Now take any POVM $\cM$ satisfying $\Fid(\cP,\cM)\leq1-\varepsilon$.
Since $\Fid(\cP,\bar{\cM})=\Fid(\cP,\cM)$, the POVM $\bar{\cM}$ also satisfies
\begin{align}
    \Fid(\cP,\bar{\cM})\leq1-\varepsilon.
\end{align}
Therefore, by the definition of $P(\cS,\varepsilon)$,
\begin{align}
  p_{\rm pass}(\bar{\cS},\cM)
  =
  p_{\rm pass}(\cS,\bar{\cM})
  \leq
  P(\cS,\varepsilon).
\end{align}
Taking the maximum over all POVMs $\cM$ satisfying
$\Fid(\cP,\cM)\leq1-\varepsilon$ gives
\begin{align}
  P(\bar{\cS},\varepsilon)\leq P(\cS,\varepsilon).
\end{align}

\appsubsection{Proof of Theorem~\ref{thm:verification-efficiency}}
\label{appx:sec:proof-of-theorem-verification-efficiency}

This appendix first proves a lifting lemma for stabilizer-twirled states
and then uses it to reduce the QMV analysis to QSV for one representative
outcome.

\appsubsubsection{Preparation of a key lemma}

To prove Theorem~\ref{thm:verification-efficiency} from the main text, 
the following key lemma is needed to lift a stabilizer-twirled state to a POVM.
\begin{lemma}
\label{lem:povm-lifting}
Let $\cP=\{\proj{\psi_\theta}\}_{\theta\in\Theta}$ be locally transitive
and irreducible with respect to $G$ and $V$.  Fix
$\theta_0\in\Theta$, and let
\begin{align}
    H_{\theta_0}:=\{g\in G:g\theta_0=\theta_0\}.
\end{align}
For any density operator $\sigma$, define its stabilizer twirl by
\begin{align}
  \sigma_H
  :=
  \frac{1}{|H_{\theta_0}|}
  \sum_{h\in H_{\theta_0}}
  V^\dagger(h)\sigma V(h).
\end{align}
Choose one representative $g_\theta\in G$ for each
$\theta\in\Theta$ such that $g_\theta\theta_0=\theta$, and set
\begin{align}
  M_\theta
  :=
  V(g_\theta)\sigma_HV^\dagger(g_\theta),
  \qquad \theta\in\Theta .
\end{align}
Then $\{M_\theta\}_{\theta\in\Theta}$ is a POVM; that is,
\begin{align}
  M_\theta\geq0,\qquad\sum_{\theta\in\Theta}M_\theta=\1.
\end{align}
\end{lemma}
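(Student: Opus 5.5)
The plan is to prove positivity directly and to obtain $\sum_\theta M_\theta=\1$ by showing that the full group twirl of $\sigma$ is the maximally mixed state, using irreducibility through Frobenius reciprocity and Schur's lemma.

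\emph{Positivity and reduction to a full twirl.} Positivity is immediate. Each $V^\dagger(h)\sigma V(h)$ is positive, so $\sigma_H\geq0$, and conjugating by the unitary $V(g_\theta)$ preserves positivity. For the normalization, first note that the coset decomposition $G=\bigsqcup_{\theta\in\Theta} g_\theta H_{\theta_0}$ holds. This uses transitivity of $G$ on $\Theta$, which gives the orbit–stabilizer relation $|G|=|\Theta|\,|H_{\theta_0}|$. Reindexing $h\mapsto h^{-1}$ in the definition of $\sigma_H$, and using that conjugation is insensitive to any phases in $V$, we get $\sigma_H=\frac{1}{|H_{\theta_0}|}\sum_{h}V(h)\sigma V^\dagger(h)$. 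Hence
\begin{align}
\sum_{\theta\in\Theta}M_\theta
=\frac{1}{|H_{\theta_0}|}\sum_{\theta}\sum_{h\in H_{\theta_0}}V(g_\theta h)\sigma V^\dagger(g_\theta h)
=|\Theta|\,\mathcal{T}(\sigma),
\qquad
\mathcal{T}(\sigma):=\frac{1}{|G|}\sum_{g\in G}V(g)\sigma V^\dagger(g).
\end{align}
Since $|\Theta|=d^n$ and $\tr\sigma=1$, it then suffices to show $\mathcal{T}(\sigma)=\1/d^n$. This in turn follows from Schur's lemma once we know that $V$ is an irreducible representation of $G$ on $\cH^{\otimes n}$. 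The reason is that $\mathcal{T}(\sigma)$ commutes with every $V(g)$, so it must be a multiple of $\1$, and the multiple is fixed by the trace.

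\emph{Irreducibility of $V$ (the main step).} By Eq.~\eqref{eq:A-symmetry}, $V(g)\ket{\psi_\theta}=c_g(\theta)\ket{\psi_{g\theta}}$ for some phases $c_g(\theta)$. The $V(g)$ therefore permute the lines of the orthonormal basis $\{\ket{\psi_\theta}\}$ transitively, and $H_{\theta_0}$ acts on $\ket{\psi_{\theta_0}}$ by a one-dimensional character $\chi$. This is exactly the structure of a monomial representation, so $V\cong\mathrm{Ind}_{H_{\theta_0}}^{G}\chi$. The explicit isomorphism sends the basis vector $g_\theta\otimes 1$ to $V(g_\theta)\ket{\psi_{\theta_0}}$; its image spans $\cH^{\otimes n}$ because the dimensions agree. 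Frobenius reciprocity then gives
\begin{align}
\langle V,V\rangle_G=\langle \mathrm{Ind}\,\chi,\mathrm{Ind}\,\chi\rangle_G=\langle \mathrm{Res}_{H_{\theta_0}}V,\chi\rangle_{H_{\theta_0}} .
\end{align}
The right-hand side is the multiplicity of $\chi$ in the restriction of $V$ to $H_{\theta_0}$. The irreducibility hypothesis of Definition~\ref{def:irreducibility} says precisely that this multiplicity is one: $\ket{\psi_{\theta_0}}$ carries $\chi$, and no equivalent copy appears in the orthogonal complement. Hence $\langle V,V\rangle_G=1$, and $V$ is irreducible.

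\emph{Expected obstacle.} The delicate point is that $V$ may only be a projective representation, since the paper repeatedly says ``up to phases''. I would handle this first by replacing $G$ with the finite central extension $\tilde G$ generated by the unitaries $V(g)$ together with the relevant roots of unity. In the examples these are Weyl-type groups, which are finite. The twirl over $\tilde G$ coincides with $\mathcal{T}$ because scalar phases cancel under conjugation. Both the monomial structure and the multiplicity-one condition transfer verbatim to $\tilde G$: its stabilizer contains the phases, which act on $\ket{\psi_{\theta_0}}$ by a character and act identically on the complement, so they cannot create a new equivalent copy. The Frobenius–Schur argument then runs in $\tilde G$ and yields $\mathcal{T}(\sigma)=\1/d^n$, hence $\sum_\theta M_\theta=\1$.
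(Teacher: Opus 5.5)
Your proof is correct, but it takes a different route from the paper's. The paper never forms the full group twirl and never identifies $V$ as an induced representation. It sets $A:=\sum_\theta M_\theta$ and notes that $M_\theta$ does not depend on the choice of $g_\theta$; this is your coset rewriting $\sum_\theta M_\theta=|\Theta|\,\mathcal{T}(\sigma)$ in another form. From this it gets $V(k)AV^\dagger(k)=A$ for all $k\in G$, and then argues directly. Since $V(h)A\ket{\psi_{\theta_0}}=AV(h)\ket{\psi_{\theta_0}}=\chi(h)A\ket{\psi_{\theta_0}}$, the vector $A\ket{\psi_{\theta_0}}$ lies in the $\chi$-isotypic part of $V$ restricted to $H_{\theta_0}$. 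The multiplicity-one hypothesis forces that part to be the line spanned by $\ket{\psi_{\theta_0}}$, so $A\ket{\psi_{\theta_0}}=c\ket{\psi_{\theta_0}}$. Transitivity carries $c$ to every $\ket{\psi_\theta}$, so $A=c\1$, and the trace gives $c=1$. This is a hands-on proof that the commutant of $V(G)$ is trivial, i.e.\ the content of your Frobenius reciprocity plus Schur step (Mackey's criterion for a monomial representation) done without characters. Your route gives the slightly stronger, reusable statement that $V$ itself is irreducible, so every full twirl is proportional to $\1$, and the formula $\sum_\theta M_\theta=|\Theta|\,\mathcal{T}(\sigma)$ makes the covariance transparent. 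The paper's route is more elementary. It needs no intertwiner with $\mathrm{Ind}\,\chi$; in your version, bijectivity of that map comes from $V(g_\theta)\ket{\psi_{\theta_0}}$ being a phase times $\ket{\psi_\theta}$ plus an intertwining check, not from a dimension count alone. It also needs no character inner products, and it works essentially verbatim for a projective $V$, since it only uses that $A$ commutes with the unitaries $V(h)$. Your central-extension paragraph is therefore unnecessary, though not wrong: Definition~\ref{def:locally-transitive} already takes $V$ to be a genuine representation of a finite group.
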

\begin{proof}
Positivity of each $M_\theta$ is immediate.  It remains to prove
\begin{align}
    \sum_{\theta\in\Theta}M_\theta=\1.
\end{align}
Let $A:=\sum_{\theta\in\Theta}M_\theta$.
Since $\sigma_H$ is invariant under the stabilizer $H_{\theta_0}$, the
definition of $M_\theta$ is independent of the choice of the
representative $g_\theta$.  Therefore the covariance relation
\begin{align}
V(k)AV^\dagger(k)=A, \qquad \forall k\in G,
\end{align}
holds.
Thus $A$ commutes with the full group representation.  In particular,
$A\ket{\psi_{\theta_0}}$ transforms under $H_{\theta_0}$ in the same
representation as the line spanned by $\ket{\psi_{\theta_0}}$.  By the
irreducibility assumption, no equivalent irreducible subspace occurs in
the orthogonal complement of this line.  Hence
\begin{align}
A\ket{\psi_{\theta_0}}=c\ket{\psi_{\theta_0}}
\end{align}
for some scalar $c$.

Since $A$ commutes with all $V(g)$ and $G$ acts transitively on
$\Theta$, the same scalar $c$ is obtained on every basis vector
$\ket{\psi_\theta}$.  Therefore 
\begin{align}
A=c\1.
\end{align}
Taking traces gives
\begin{align}
  c|\Theta|
  =
  \tr A
  =
  \sum_{\theta\in\Theta}\tr M_\theta
  =
  |\Theta|,
\end{align}
because each $M_\theta$ is unitarily equivalent to the density operator
$\sigma_H$ and hence has trace one.  Thus $c=1$, and $A=\1$.
Hence $\{M_\theta\}_{\theta\in\Theta}$ is a POVM.
\end{proof}

\appsubsubsection{Reduction to quantum state verification}

Fix an arbitrary outcome $\theta_0\in\Theta$.
The proof of Theorem~\ref{thm:verification-efficiency} proceeds in five steps:  
\begin{itemize}
  \item \textbf{Step 1} records the covariance and normalization properties forced by symmetry.  
  \item \textbf{Steps 2 and 3} prove matching upper and lower bounds for a fixed representative outcome $\theta_0$: 
   an arbitrary faulty POVM is averaged into a single density operator, and, conversely, 
   the lifting lemma converts any admissible density operator back
   into a POVM with the same objective value.  
   \item \textbf{Step 4} uses the same lifting  argument to 
  prove that $T_{\theta_0}(\cS)$ is a valid verification operator.
  \item \textbf{Step 5} removes the dependence on $\theta_0$ by transitivity and
applies the standard QSV eigenvalue formula.
\end{itemize}

\vspace*{0.1in}
\textbf{Step 1: Covariance and normalization.}
We first establish the covariance relations
\begin{align}
  V(g)\Omega_\theta(\cS)V^\dagger(g)
  =
  \Omega_{g\theta}(\cS),
  \qquad g\in G,\ \theta\in\Theta,
  \label{eq:symmetric-operator-covariance}
\end{align}
and
\begin{align}
  V(g)T_\theta(\cS)V^\dagger(g)
  =
  T_{g\theta}(\cS),
  \qquad g\in G,\ \theta\in\Theta,
  \label{eq:T-covariance}
\end{align}
together with the normalization
\begin{align}
  \bra{\psi_\theta}T_\theta(\cS)\ket{\psi_\theta}=1,
  \qquad \forall\theta\in\Theta .
  \label{eq:T-ideal-overlap-one}
\end{align}

Indeed, Eq.~\eqref{eq:symmetric-operator-covariance} follows directly
from the symmetry of $\cS$.  Since
\begin{align}
    T_\theta(\cS):=|\Theta|\Omega_\theta(\cS),
\end{align}
Eq.~\eqref{eq:T-covariance} follows immediately.  Moreover,
\begin{align}
\sum_{\theta\in\Theta}
\bra{\psi_\theta}\Omega_\theta(\cS)\ket{\psi_\theta}
=
\sum_x p_x
\sum_{\theta\in\cP(\rho_x)}
\bra{\psi_\theta}\rho_x\ket{\psi_\theta}
=
\sum_x p_x\tr[\rho_x]
=
1.
\end{align}
By Eq.~\eqref{eq:symmetric-operator-covariance} and transitivity, all
terms on the left-hand side are equal.  Therefore, for every
$\theta\in\Theta$,
\begin{align}
    |\Theta|\bra{\psi_\theta}\Omega_\theta(\cS)\ket{\psi_\theta}=1,
\end{align}
which is Eq.~\eqref{eq:T-ideal-overlap-one}.

\vspace*{0.1in}
\textbf{Step 2: From a POVM to a density operator.}
This step proves the upper bound
\begin{align}
  P(\cS,\varepsilon)
  \leq
  \max_{\substack{\sigma\geq0,\ \tr[\sigma]=1:\\
  \bra{\psi_{\theta_0}}\sigma\ket{\psi_{\theta_0}}\leq1-\varepsilon}}
  \tr[T_{\theta_0}(\cS)\sigma].
  \label{eq:upper-bound-qsv}
\end{align}

Let $\cM=\{M_\theta\}_{\theta\in\Theta}$ be an arbitrary POVM.  Define
\begin{align}
  \sigma_{\theta_0}(\cM)
  :=
  \frac{1}{|G|}
  \sum_{\theta\in\Theta}
  \sum_{\substack{g\in G:\\ g\theta_0=\theta}}
  V^\dagger(g)M_\theta V(g).
  \label{eq:sigma-from-povm}
\end{align}
This operator is positive.  Its trace is one because each
$\theta\in\Theta$ has $|H_{\theta_0}|$ preimages under the map
$g\mapsto g\theta_0$ and
\begin{align}
    |G|=|\Theta|\,|H_{\theta_0}|.
\end{align}
Thus $\sigma_{\theta_0}(\cM)$ is a density operator.

Using
\begin{align}
    V(g)\proj{\psi_{\theta_0}}V^\dagger(g) = \proj{\psi_{g\theta_0}},
\end{align}
we obtain
\begin{align}
  \bra{\psi_{\theta_0}}\sigma_{\theta_0}(\cM)
  \ket{\psi_{\theta_0}}
  =
  \Fid(\cP,\cM).
  \label{eq:fidelity-sigma}
\end{align}
Similarly, using Eq.~\eqref{eq:symmetric-operator-covariance}, we have
\begin{align}
  p_{\rm pass}(\cS,\cM)
  =
  \sum_{\theta\in\Theta}
  \tr[\Omega_\theta(\cS)M_\theta]
  =
  \tr[T_{\theta_0}(\cS)\sigma_{\theta_0}(\cM)].
  \label{eq:pass-to-state}
\end{align}
Therefore, if $\cM$ satisfies $\Fid(\cP,\cM)\leq1-\varepsilon$,
then $\sigma_{\theta_0}(\cM)$ satisfies
\begin{align}
    \bra{\psi_{\theta_0}}\sigma_{\theta_0}(\cM)
  \ket{\psi_{\theta_0}}
  \leq1-\varepsilon.
\end{align}
Taking the maximum over all such POVMs $\cM$ gives
Eq.~\eqref{eq:upper-bound-qsv}.

\vspace*{0.1in}
\textbf{Step 3: The reverse inequality.}
We next prove the reverse bound
\begin{align}
  P(\cS,\varepsilon)
  \geq
  \max_{\substack{\sigma\geq0,\ \tr[\sigma]=1:\\
  \bra{\psi_{\theta_0}}\sigma\ket{\psi_{\theta_0}}\leq1-\varepsilon}}
  \tr[T_{\theta_0}(\cS)\sigma].
  \label{eq:lower-bound-qsv}
\end{align}
Let $\sigma$ be any operator satisfying
\begin{align}
  \sigma\geq0,\qquad
  \tr[\sigma]=1,\qquad
  \bra{\psi_{\theta_0}}\sigma\ket{\psi_{\theta_0}}
  \leq1-\varepsilon.
\end{align}
Twirl it over the stabilizer $H_{\theta_0}$:
\begin{align}
  \sigma_H
  :=
  \frac{1}{|H_{\theta_0}|}
  \sum_{h\in H_{\theta_0}}
  V^\dagger(h)\sigma V(h).
  \label{eq:stabilizer-twirl}
\end{align}
Since $h\theta_0=\theta_0$ for $h\in H_{\theta_0}$, Eq.~\eqref{eq:T-covariance}
implies that $T_{\theta_0}(\cS)$ is invariant under the stabilizer.
The projector $\proj{\psi_{\theta_0}}$ is also invariant under the
stabilizer.  Hence
\begin{align}
  \tr[T_{\theta_0}(\cS)\sigma_H]
  &=
  \tr[T_{\theta_0}(\cS)\sigma],
  \label{eq:twirl-preserves-objective}\\
  \bra{\psi_{\theta_0}}\sigma_H\ket{\psi_{\theta_0}}
  &=
  \bra{\psi_{\theta_0}}\sigma\ket{\psi_{\theta_0}}.
  \label{eq:twirl-preserves-fidelity}
\end{align}

Choose one representative $g_\theta\in G$ for each $\theta\in\Theta$
such that $g_\theta\theta_0=\theta$.  By
Lemma~\ref{lem:povm-lifting}, the operators
\begin{align}
  M_\theta
  :=
  V(g_\theta)\sigma_HV^\dagger(g_\theta),
  \qquad \theta\in\Theta,
  \label{eq:lifted-povm-theorem}
\end{align}
form a POVM.

For this POVM, covariance gives
\begin{align}
  \Fid(\cP,\cM)
  &=
  \bra{\psi_{\theta_0}}\sigma_H\ket{\psi_{\theta_0}}
  \leq1-\varepsilon,
  \label{eq:lifted-povm-fidelity}\\
  p_{\rm pass}(\cS,\cM)
  &=
  \tr[T_{\theta_0}(\cS)\sigma_H]
  =
  \tr[T_{\theta_0}(\cS)\sigma].
  \label{eq:lifted-povm-pass}
\end{align}
Therefore
\begin{align}
  P(\cS,\varepsilon)
  \geq
  \tr[T_{\theta_0}(\cS)\sigma].
\end{align}
Since this holds for every $\sigma$ satisfying the constraints in
Eq.~\eqref{eq:lower-bound-qsv}, Eq.~\eqref{eq:lower-bound-qsv} follows.

Combining Eqs.~\eqref{eq:upper-bound-qsv} and~\eqref{eq:lower-bound-qsv}, we obtain
\begin{align}
  P(\cS,\varepsilon)
  =
  \max_{\substack{\sigma\geq0,\tr[\sigma]=1:\\
  \bra{\psi_{\theta_0}}\sigma\ket{\psi_{\theta_0}}\leq1-\varepsilon}}
  \tr[T_{\theta_0}(\cS)\sigma].
  \label{eq:qmv-qsv-fixed-theta}
\end{align}

\vspace*{0.1in}
\textbf{Step 4: Validity of $T_{\theta_0}(\cS)$ as a verification operator.}
We prove
\begin{align}
  0\leq T_{\theta_0}(\cS)\leq\1,
  \qquad
  T_{\theta_0}(\cS)\ket{\psi_{\theta_0}}
  =
  \ket{\psi_{\theta_0}}.
  \label{eq:T-theta0-valid}
\end{align}
The positivity $T_{\theta_0}(\cS)\geq0$ is immediate from the definition.
To prove $T_{\theta_0}(\cS)\leq\1$, let $\sigma$ be an arbitrary density
operator.  Apply the stabilizer twirl in Eq.~\eqref{eq:stabilizer-twirl}
and the lifting construction in Lemma~\ref{lem:povm-lifting} to this
$\sigma$.  As in Eq.~\eqref{eq:lifted-povm-pass}, the resulting POVM
$\cM$ satisfies
\begin{align}
    p_{\rm pass}(\cS,\cM) = \tr[T_{\theta_0}(\cS)\sigma].
\end{align}
Since $p_{\rm pass}(\cS,\cM)\leq1$ for every POVM $\cM$, we have
\begin{align}
    \tr[T_{\theta_0}(\cS)\sigma]\leq1
\end{align}
for every density operator $\sigma$.  Hence $T_{\theta_0}(\cS)\leq\1$.
Together with Eq.~\eqref{eq:T-ideal-overlap-one}, this implies
\begin{align}
    T_{\theta_0}(\cS)\ket{\psi_{\theta_0}} = \ket{\psi_{\theta_0}}.
\end{align}
Thus Eq.~\eqref{eq:T-theta0-valid} holds.

\vspace*{0.1in}
\textbf{Step 5: Independence of the representative outcome and the
eigenvalue formula.}
We finally show that Eq.~\eqref{eq:qmv-to-qsv} holds for every fixed
$\theta\in\Theta$ and that the spectrum of $T_\theta(\cS)$ is independent
of $\theta$.
Since the representative outcome $\theta_0$ was arbitrary,
Eq.~\eqref{eq:qmv-qsv-fixed-theta} holds with $\theta_0$ replaced by any
fixed $\theta\in\Theta$.  This proves Eq.~\eqref{eq:qmv-to-qsv}.  The same
argument also shows that Eq.~\eqref{eq:T-theta0-valid} holds with
$\theta_0$ replaced by any $\theta\in\Theta$, so each $T_\theta(\cS)$ is
a valid verification operator for $\ket{\psi_\theta}$.

Moreover, by local transitivity, for any $\theta,\eta\in\Theta$ there
exists $g\in G$ such that $\eta=g\theta$.  Eq.~\eqref{eq:T-covariance} then gives
\begin{align}
    T_\eta(\cS) = V(g)T_\theta(\cS)V^\dagger(g).
\end{align}
Thus all $T_\theta(\cS)$ are unitarily equivalent and have the same
spectrum.  In particular, $\lambda_2(T_\theta(\cS))$ is independent of
$\theta$.

Denote this common value by $\lambda_2(\cS)$.  Applying the standard
state-verification formula to the valid verification operator
$T_\theta(\cS)$ gives
\begin{align}
  P(\cS,\varepsilon)
  =
  1-\bigl[1-\lambda_2(\cS)\bigr]\varepsilon.
\end{align}
This completes the proof of Theorem~\ref{thm:verification-efficiency}.

\appsection{Generalized Bell measurement verification}\label{appx:sec:GBM}

\appsubsection{Introduction to generalized Bell measurements}

We first define the Weyl operators, which generalize the qubit Pauli operators.
For every positive integer $d$, define the set $\bZ_d:=\{0,\cdots, d-1\}$.
Whenever elements of $\bZ_d$ appear in arithmetic expressions, we assume that the operations are to be taken modulo $d$.
The \textit{discrete Weyl operators} acting on a Hilbert space $\cH$ of dimension $d$
are defined as follows.  Set
\begin{align}
    \zeta := \exp\left(\frac{2\pi i}{d}\right),
\end{align}
the root of unity of order $d$, and define the unitary operators
\begin{align}
    X := \sum_{c\in\bZ_d} \ketbra{c+1}{c},\quad
    Z := \sum_{c\in\bZ_d} \zeta^{c}\proj{c}.
\end{align}
When $d=2$, $X$ and $Z$ are the standard Pauli operators.
For each index pair $(x,z)\in\bZ_d\times\bZ_d$, the discrete Weyl operator $W_{x,z}$ in $\cH$ is defined as
\begin{align}\label{eq:Weyl-operators}
  W_{x,z} := X^x Z^z = \sum_{c\in\bZ_d} \zeta^{zc} \ketbra{c+x}{c}.
\end{align}
When $d=2$, we recover the complete set of Pauli operators.

From the canonical maximally entangled state of two qudits $\vert\Psi^\star\rangle$
defined in Eq.~\eqref{eq:entangled-qudit-mes} of the main text,
the generalized Bell states $\{\vert\Psi_{x,z}\rangle:x,z\in\bZ_d\}$
are defined via the Weyl operators by
\begin{align}
   \vert\Psi_{x,z}\rangle
:= (\1_R\otimes W_{x,z})\vert\Psi^\star\rangle
 = \frac{1}{\sqrt{d}}\sum_{i=0}^{d-1}\ket{i}\otimes W_{x,z}\ket{i}.
\end{align}
Based on the generalized Bell states, the \emph{generalized Bell measurement} is defined as
\begin{align}\label{eq:GBM-2}
  \cB = \{\vert\Psi_{x,z}\rangle\!\langle\Psi_{x,z}\vert: x,z\in \bZ_d\}.
\end{align}
Since all the Bell states are entangled,
the generalized Bell measurement is an entangling quantum measurement.
For $d=2$, this construction gives the standard Bell measurement.

\appsubsection{Symmetries of generalized Bell measurements}

Let $G$ be the local Weyl group, including phases.  After quotienting out phases,
its action is represented by $W_{a,b}\otimes W_{c,e}$ with
$(a,b),(c,e)\in\bZ_d^2$.  This action maps Bell states to Bell states and is
transitive on $\{\ket{\Psi_{x,z}}:(x,z)\in\bZ_d^2\}$, so the GBM is locally
transitive.

The stabilizer of $\ket{\Psi^\star}$ is, up to phases,
\begin{align}
  G_{\Psi^\star}
  =\{W_{a,b}\otimes W_{a,-b}:(a,b)\in\bZ_d^2\}.
\end{align}
The Bell states are simultaneous eigenvectors of this stabilizer:
\begin{align}
  (W_{a,b}\otimes W_{a,-b})\ket{\Psi_{x,z}}
  = \chi_{a,b}(x,z)\ket{\Psi_{x,z}},
\end{align}
where $\chi_{a,b}(x,z)$ is a phase depending on $(a,b)$ and $(x,z)$.  For prime
$d$, different pairs $(x,z)$ give different characters.  Hence the Bell basis
decomposes into one dimensional irreducible subspaces under the stabilizer, and
the irreducibility condition in Definition~\ref{def:irreducibility} holds.

\appsubsection{Proof of Corollary~\ref{thm:GBM}}

We prove Corollary~\ref{thm:GBM} by constructing a local protocol and showing
that its representative operator is the optimal separable verification operator
for $\vert\Psi^\star\rangle$.

\appsubsubsection{Verification protocol}

Assume that $d$ is prime, so that $\bZ_d$ is the finite field $\bF_d$.  Introduce
the equivalence relation on $\bF_d^2\setminus\{(0,0)\}$ by
$(x,z)\sim (x',z')$ if there exists $a\in\bF_d\setminus\{0\}$ such that
$(ax,az)=(x',z')$.  For a Weyl operator $W$, denote by $\mathscr{B}_W$ an
eigenbasis of $W$.  If $(x,z)\sim(x',z')$, then
$\mathscr{B}_{W_{x,z}}=\mathscr{B}_{W_{x',z'}}$.  A convenient set of
representatives is
\begin{align}
  \mathscr{M} := \{W_{0,1}\equiv Z,\; W_{1,0}\equiv X,\; W_{1,1}\equiv XZ,\;
                    W_{1,2}\equiv XZ^2,\; \cdots,\; W_{1,d-1}\equiv XZ^{d-1}\}.
\end{align}
The eigenbases associated with $\mathscr{M}$ form a complete set of $d+1$
mutually unbiased bases~\cite{DURT2010}.  They satisfy the identity
~\cite{H1,Zhu2019Optimal}
\begin{align}\label{eq:MUB-property}
  \frac{1}{d+1}\sum_{W\in\mathscr{M}}\sum_{\ket{u}\in\mathscr{B}_W}
  \proj{u, \overline{u}}
= \vert\Psi^\star\rangle\!\langle\Psi^\star\vert
  + \frac{1}{d+1}(\1 - \vert\Psi^\star\rangle\!\langle\Psi^\star\vert),
\end{align}
where $\ket{\overline{u}}$ is the complex conjugate of $\ket{u}$ in the
computational basis.

For each $W\in\mathscr{M}$, let
$\mathscr{B}_{\overline W}:=\{\ket{\overline u}:\ket u\in\mathscr{B}_W\}$ be the
conjugate basis.  Define the local input ensemble
\begin{align}
  \cS_{\cB} = \left\{\frac{1}{d^2(d+1)},
                      \left(\proj{u,u'},\cB(\proj{u,u'})\right)
                      \sbar \ket{u}\in\mathscr{B}_W,\ 
                      \ket{u'}\in\mathscr{B}_{\overline W},\
                      W\in\mathscr{M}
              \right\}.
\end{align}
Thus $\cS_{\cB}$ contains $(d+1)d^2$ local test states.  In each round, one
uniformly samples $W\in\mathscr{M}$, then uniformly and independently samples
$\ket{u}\in\mathscr{B}_W$ and $\ket{u'}\in\mathscr{B}_{\overline W}$, prepares
$\proj{u}\otimes\proj{u'}$, and accepts exactly the outcomes in
$\cB(\proj{u,u'})$.

\appsubsubsection{Performance analysis}

By local transitivity, it is enough to compute the verification operator for the
reference outcome $(0,0)$.  For
$\ket{u}\in\mathscr{B}_W$ and $\ket{u'}\in\mathscr{B}_{\overline W}$,
\begin{align}
  \left|\bra{\Psi_{0,0}}u,u'\rangle\right|^2
  = \frac{1}{d}\left|\langle\overline u|u'\rangle\right|^2.
\end{align}
Therefore the local inputs that can produce the ideal outcome $(0,0)$ are exactly
\begin{align}
  \Xi_{0,0} = \bigcup_{W\in\mathscr{M}}
  \left\{\ket{u,\overline{u}}\sbar\ket{u}\in\mathscr{B}_W\right\}.
\end{align}
It follows that
\begin{align}
\Omega_{0,0}(\cS_{\cB})
&= \frac{1}{d^2(d+1)}
\sum_{W\in\mathscr{M}}
\sum_{\ket{u}\in\mathscr{B}_W}\proj{u, \overline{u}} \\
&= \frac{1}{d^2}
\left(\vert\Psi^\star\rangle\!\langle\Psi^\star\vert + \frac{1}{d+1}(\1 -
      \vert\Psi^\star\rangle\!\langle\Psi^\star\vert)\right)
 \label{eq:rdnyiy2}
= \frac{1}{d^2}
\left(
\vert\Psi_{0,0}\rangle\!\langle\Psi_{0,0}\vert + \frac{1}{d+1}(\1 - \vert\Psi_{0,0}\rangle\!\langle\Psi_{0,0}\vert)
\right),
\end{align}
where Eq.~\eqref{eq:rdnyiy2} follows from Eq.~\eqref{eq:MUB-property}.  Equivalently,
the representative operator $T^\star_{\cB}:=d^2\Omega_{0,0}(\cS_{\cB})$ is
\begin{align}
T^\star_{\cB}
= \vert\Psi^\star\rangle\!\langle\Psi^\star\vert
+ \frac{1}{d+1}(\1-\vert\Psi^\star\rangle\!\langle\Psi^\star\vert).
\end{align}
Covariance then gives, for every $(x,z)\in\bZ_d^2$,
\begin{align}\label{eq:GBM-verification-operator-appx}
\Omega_{x,z}(\cS_{\cB})
=\frac{1}{d^2}\left[|\Psi_{x,z}\rangle\langle \Psi_{x,z}| 
+ \frac{1}{d+1}\left(\1-|\Psi_{x,z}\rangle\langle \Psi_{x,z}|\right)\right].
\end{align}
Thus the homogeneous parameter is $\beta=1/(d+1)$, and
Theorem~\ref{thm:verification-efficiency} gives
\begin{align}
P(\cS_{\cB},\varepsilon)
=1-\frac{d}{d+1}\varepsilon.
\end{align}
Substituting this value into Eq.~\eqref{eq:number-of-copies} yields
\begin{align}
N_{\cB}
=\left\lceil
\frac{\ln\delta}{\ln\left(1-\frac{d}{d+1}\varepsilon\right)}
\right\rceil
\end{align}
and, for small $\varepsilon$,
\begin{align}
N_{\cB}
\approx
\left\lceil
\frac{d+1}{d}\cdot\frac{1}{\varepsilon}\ln\frac{1}{\delta}
\right\rceil,
\end{align}
which is the claimed scaling in Corollary~\ref{thm:GBM}.

It remains to justify optimality.  By the symmetry reduction in
Theorem~\ref{thm:verification-efficiency}, any local protocol for the GBM induces a
separable verification operator $T$ for the maximally entangled state
$\ket{\Psi^\star}$.  The separable verification bound in
Ref.~\cite[Theorem 1]{Hayashi_2006} states that the smallest possible second
eigenvalue of such a $T$ is $1/(d+1)$, achieved by $T^\star_{\cB}$ above.
Hence no local GBM verification protocol
can improve on the passing probability or sample complexity above.

\appsection{Single-parameter measurement on two qubits verification}\label{appx:sec:single-parameter}

\appsubsection{Introduction to single-parameter measurements on two qubits}

The single-parameter measurement on two qubits $\cP_\gamma$ is defined in
Eq.~\eqref{eq:single-parameter}.  It interpolates between product basis
measurements at $\gamma=0,\pi/2$ and the Bell measurement at
$\gamma=\pi/4$.  This appendix presents the local protocol used in
Corollary~\ref{thm:single-parameter two-qubit measurement} and computes its
homogeneous verification operators explicitly.

\appsubsection{Symmetries of single-parameter measurements on two qubits}

Direct substitution in Eq.~\eqref{eq:entangled-states} shows that, up to irrelevant phases, the local
unitaries generated by
\begin{align}
  \1\otimes X,\qquad
  Z\otimes Z,\quad
  ZX\otimes \1
\end{align}
permute the four projectors of $\cP_\gamma$.  More explicitly, for
$x,z\in\bZ_2$,
\begin{align}
  (\1\otimes X)\ket{\Psi_{x,z}^{\gamma}}
  &= \ket{\Psi_{x,z+1}^{\gamma}}, \\
  (ZX\otimes\1)\ket{\Psi_{x,z}^{\gamma}}
  &= (-1)^x\ket{\Psi_{x+1,z}^{\gamma}}, \\
  (Z\otimes Z)\ket{\Psi_{x,z}^{\gamma}}
  &= (-1)^{x+z}\ket{\Psi_{x,z}^{\gamma}},
\end{align}
where all additions in the indices are modulo $2$.  Equivalently,
\begin{align}
  (\1\otimes X)\proj{\Psi_{x,z}^{\gamma}}(\1\otimes X)^\dagger
  &= \proj{\Psi_{x,z+1}^{\gamma}}, \\
  (ZX\otimes\1)\proj{\Psi_{x,z}^{\gamma}}(ZX\otimes\1)^\dagger
  &= \proj{\Psi_{x+1,z}^{\gamma}}, \\
  (Z\otimes Z)\proj{\Psi_{x,z}^{\gamma}}(Z\otimes Z)^\dagger
  &= \proj{\Psi_{x,z}^{\gamma}} .
\end{align}
Thus $\1\otimes X$ exchanges the $z$ label, while $ZX\otimes\1$ exchanges
the $x$ label.  Hence the family is locally transitive, and a symmetric local
protocol can be specified by computing one representative verification
operator and transporting it to the remaining outcomes.

\appsubsection{Proof of Corollary~\ref{thm:single-parameter two-qubit measurement}}

We prove Corollary~\ref{thm:single-parameter two-qubit measurement} by
constructing the local protocol and computing its verification operators.

\appsubsubsection{Verification protocol}

Define
\begin{subequations}\label{eq:post-measurement-states}
\begin{align}
  \ket{v_\pm} &= \sin\gamma\ket{0} \pm \cos\gamma\ket{1}, \\
  \ket{v_\pm^\perp} &= \cos\gamma\ket{0} \mp \sin\gamma\ket{1}, \\
  \ket{w_\pm} &= \sin\gamma\ket{0} \pm i\cos\gamma\ket{1}, \\
  \vert{w_\pm^\perp} \rangle &= \cos\gamma\ket{0} \pm i\sin\gamma\ket{1}.
\end{align}
\end{subequations}
The superscript $\perp$ denotes the orthogonal state in the corresponding
two dimensional basis.

The protocol $\cS_\gamma$ is listed in
Table~\ref{tbl:projective-measurement-verification}.  In each round, one
samples a row uniformly, prepares the corresponding local product state, and
accepts precisely the ideal outcomes listed in the last column.

\begin{table}[!hbtp]
\centering
\setlength{\tabcolsep}{10pt}
\renewcommand{\arraystretch}{1.3}
\begin{tabular}{@{}cccc@{}}
    \toprule
      \textbf{Index $x$} &
      \textbf{Prob. $p_x$} & \textbf{Test State $\rho_x$} & \textbf{Outputs $\cP_\gamma(\rho_x)$} \\\midrule
      1 & $1/12$ & $\ket{0}\otimes\ket{0}$ & $\{(0,0),\;(1,1)\}$ \\
      2 & $1/12$ & $\ket{1}\otimes\ket{1}$ & $\{(0,0),\;(1,1)\}$ \\
      3 & $1/12$ & $\ket{0}\otimes\ket{1}$ & $\{(0,1),\;(1,0)\}$ \\
      4 & $1/12$ & $\ket{1}\otimes\ket{0}$ & $\{(0,1),\;(1,0)\}$ \\
      5 & $1/12$ & $\ket{v_+}\otimes\ket{+}$ & $\{(0,0),\;(0,1)\}$ \\
      6 & $1/12$ & $\ket{v_-}\otimes\ket{-}$ & $\{(0,0),\;(0,1)\}$ \\
      7 & $1/12$ & $\ket{v_+^\perp}\otimes\ket{+}$ & $\{(1,0),\;(1,1)\}$ \\
      8 & $1/12$ & $\ket{v_-^\perp}\otimes\ket{-}$ & $\{(1,0),\;(1,1)\}$ \\
      9 & $1/12$ & $\ket{\top}\otimes\ket{w_-}$    & $\{(0,0),\;(1,0)\}$ \\
      10 & $1/12$ & $\ket{\perp}\otimes\ket{w_+}$  & $\{(0,0),\;(1,0)\}$ \\
      11 & $1/12$ & $\ket{\top}\otimes\ket{w_+^\perp}$ & $\{(0,1),\;(1,1)\}$ \\
      12 & $1/12$ & $\ket{\perp}\otimes\ket{w_-^\perp}$ & $\{(0,1),\;(1,1)\}$ \\
    \bottomrule
\end{tabular}
\caption{A homogeneous local verification protocol for the single-parameter
measurement on two qubits $\cP_\gamma$.  Here
$\{\ket{0},\ket{1}\}$, $\{\ket{+},\ket{-}\}$, and
$\{\ket{\top},\ket{\perp}\}$ are the eigenstates of the Pauli operators
$Z$, $X$, and $Y$, respectively, with
$\ket{\top}=(\ket0+i\ket1)/\sqrt2$ and
$\ket{\perp}=(\ket0-i\ket1)/\sqrt2$.}
\label{tbl:projective-measurement-verification}
\end{table}

\appsubsubsection{Performance analysis}

For the reference outcome $(0,0)$, the accepting states in
Table~\ref{tbl:projective-measurement-verification} give
\begin{align}
\Omega_{0,0}(\cS_\gamma)
&= \frac{1}{12}\left(\proj{00} + \proj{11}
 + \proj{v_++} + \proj{v_--}
 + \proj{\top w_-} + \proj{\perp w_+}\right) \\
&= \frac{1}{4}\left[
\proj{\Psi_{0,0}^{\gamma}}
+\frac{1}{3}\left(\1-\proj{\Psi_{0,0}^{\gamma}}\right)
\right].
\end{align}
The local symmetry generated by $ZX\otimes\1$, $\1\otimes X$, and
$Z\otimes Z$ maps this calculation to the remaining three outcomes.  Therefore
\begin{align}
\Omega_{x,z}(\cS_\gamma)
=\frac{1}{4}\left[
\proj{\Psi_{x,z}^{\gamma}}
+\frac{1}{3}\left(\1-\proj{\Psi_{x,z}^{\gamma}}\right)
\right],
\quad (x,z)\in\bZ_2^2.
\end{align}
The protocol is homogeneous with $\beta=1/3$.  For any POVM
$\cM=\{M_{x,z}\}_{(x,z)\in\bZ_2^2}$,
\begin{align}
\sum_{x,z}\tr[\Omega_{x,z}(\cS_\gamma)M_{x,z}]
=\beta+(1-\beta)\Fid(\cP_\gamma,\cM).
\end{align}
Therefore, under the bad case condition
$\Fid(\cP_\gamma,\cM)\leq1-\varepsilon$,
\begin{align}
P(\cS_\gamma,\varepsilon)=1-\frac{2}{3}\varepsilon.
\end{align}
The equality is tight, for example by mixing the ideal measurement with any
fixed point free relabeling of its four outcomes.
Substitution into Eq.~\eqref{eq:number-of-copies-2} yields
\begin{align}
N_{\cP_\gamma}
=
\left\lceil
\frac{\ln\delta}{\ln P(\cS_\gamma,\varepsilon)}
\right\rceil
\approx
\left\lceil
\frac{3}{2}\cdot\frac{1}{\varepsilon}\ln\frac{1}{\delta}
\right\rceil,
\end{align}
as stated in Corollary~\ref{thm:single-parameter two-qubit measurement}.
At $\gamma=\pi/4$ the protocol reduces to the optimal Bell measurement protocol.
At $\gamma=0$ and $\gamma=\pi/2$, a direct product basis verification reaches the
unconstrained optimal scaling.  For generic $\gamma$, the corollary records the
performance of this homogeneous symmetric local protocol rather than a separate
global optimality claim.

\appsection{Elegant joint measurement verification}\label{appx:sec:elegant-joint-measurement}

\appsubsection{Introduction to elegant joint measurements}

The EJM family on two qubits $\cJ_\kappa$ is defined in Eq.~\eqref{eq:EJM}.  Each
state has tetrahedral one qubit marginals, with a Bloch length that depends on
$\kappa$.  To describe the local test states used below, let $\ket{\bm r}$
denote the pure state of one qubit whose Bloch vector is the unit vector
$\bm r$.
Define the four tetrahedral directions
\begin{align}
\bm t_0&=\frac{1}{\sqrt{3}}(1,1,1),\\
\bm t_1&=\frac{1}{\sqrt{3}}(1,-1,-1), \\
\bm t_2&=\frac{1}{\sqrt{3}}(-1,1,-1),\\
\bm t_3&=\frac{1}{\sqrt{3}}(-1,-1,1).
\end{align}
For the basis in Eq.~\eqref{eq:EJM-states},
\begin{align}
\tr_B\proj{\Phi_j^\kappa}
=\frac{1}{2}\left(\1+\frac{\sqrt{3}}{2}\cos\kappa\,
\bm t_j\cdot\bm\sigma\right), \qquad
\tr_A\proj{\Phi_j^\kappa}
=\frac{1}{2}\left(\1-\frac{\sqrt{3}}{2}\cos\kappa\,
\bm t_j\cdot\bm\sigma\right).
\end{align}
Thus the reduced states point along opposite tetrahedral directions, except at
the maximally entangled endpoint $\kappa=\pi/2$, where both marginals are
maximally mixed.

\appsubsection{Symmetries of elegant joint measurements}

Direct substitution in Eq.~\eqref{eq:EJM-states} gives
\begin{align}
(X\otimes X)\ket{\Phi_0^\kappa}
&=-\ket{\Phi_1^\kappa},&
(X\otimes X)\ket{\Phi_1^\kappa}
&=-\ket{\Phi_0^\kappa}, \\
(X\otimes X)\ket{\Phi_2^\kappa}
&=-\ket{\Phi_3^\kappa},&
(X\otimes X)\ket{\Phi_3^\kappa}
&=-\ket{\Phi_2^\kappa},
\end{align}
and
\begin{align}
(Z\otimes Z)\ket{\Phi_0^\kappa}
&=-\ket{\Phi_3^\kappa},&
(Z\otimes Z)\ket{\Phi_3^\kappa}
&=-\ket{\Phi_0^\kappa}, \\
(Z\otimes Z)\ket{\Phi_1^\kappa}
&=-\ket{\Phi_2^\kappa},&
(Z\otimes Z)\ket{\Phi_2^\kappa}
&=-\ket{\Phi_1^\kappa}.
\end{align}
Therefore, at the level of projectors, $X\otimes X$ induces the
permutation $(01)(23)$ and $Z\otimes Z$ induces the permutation
$(03)(12)$. The group generated by these two permutations is transitive
on $\{0,1,2,3\}$. Hence $\cJ_\kappa$ is locally transitive for every
$\kappa\in[0,\pi/2]$. The endpoint $\kappa=\pi/2$ is included because
the above identities hold without any division by a $\kappa$-dependent
quantity.
The one qubit tetrahedral
rotation $U$ satisfying
\begin{align}
  U X U^\dagger=Y,\qquad UYU^\dagger=Z,\qquad UZU^\dagger=X
\end{align}
also gives a local symmetry: $U\otimes U$ fixes the projector
$\proj{\Phi_0^\kappa}$ and cycles the remaining three projectors.  Hence the
verification operator for one representative outcome determines the operators
for the other outcomes by local covariance.

\appsubsection{Proof of Corollary~\ref{thm:elegant-joint-measurement}}

We prove Corollary~\ref{thm:elegant-joint-measurement} by constructing a local
protocol and evaluating its homogeneous verification operators.

\appsubsubsection{Verification protocol}

The protocol $\cS_\kappa$ consists of three symmetry orbits of local product
tests.  
First, for each unordered pair $0\le j<\ell\le3$, consider vectors in
the two dimensional subspace
$\operatorname{span}\{\ket{\Phi_j^\kappa},\ket{\Phi_\ell^\kappa}\}$ of
the form
\begin{align}
    \ket{\Phi_j^\kappa}+z\ket{\Phi_\ell^\kappa}.
\end{align}
Such a vector is a product vector if and only if the determinant of its
$2\times2$ coefficient matrix in the computational product basis
vanishes.  Direct substitution of Eq.~\eqref{eq:EJM-states} shows that,
for every unordered pair $\{j,\ell\}$, this determinant condition is
equivalent, up to an irrelevant nonzero scalar factor, to
\begin{align}\label{eq:EJM-product-root-equation}
  (e^{2i\kappa}-3)z^2
  +2(e^{2i\kappa}+1)z
  +(e^{2i\kappa}-3)=0 .
\end{align}
Let $z_\pm^\kappa$ denote the two roots of this equation, counted with
multiplicity, and define
\begin{align}\label{eq:EJM-pair-roots}
  \ket{\eta_{j\ell,\pm}^\kappa}
  =
  \frac{\ket{\Phi_j^\kappa}+z_\pm^\kappa\ket{\Phi_\ell^\kappa}}
       {\sqrt{1+|z_\pm^\kappa|^2}} .
\end{align}
Since $\ket{\eta_{j\ell,\pm}^\kappa}$ lies in
$\operatorname{span}\{\ket{\Phi_j^\kappa},\ket{\Phi_\ell^\kappa}\}$,
it is automatically orthogonal to the other two EJM basis states.
Therefore its accepted outcomes are precisely $\{j,\ell\}$.

For $0<\kappa\leq\pi/2$, the two roots are distinct.  At
$\kappa=0$, Eq.~\eqref{eq:EJM-product-root-equation} becomes
\begin{align}
    -2(z-1)^2=0,
\end{align}
so the two roots coincide.  In this degenerate endpoint, the two labels
$\eta_{j\ell,+}^0$ and $\eta_{j\ell,-}^0$ represent the same product
test.  The protocol is then understood with the duplicate tests merged,
or equivalently with their probabilities added.

Set
\begin{align}\label{eq:EJM-orbit-weights}
  a_\kappa=\frac{1+\sin\kappa}{2(2-\sin\kappa)},\qquad
  b_\kappa=\frac{1-\sin\kappa}{2-\sin\kappa},\qquad
  c_\kappa=\frac{1-\sin\kappa}{2(2-\sin\kappa)} .
\end{align}
The protocol is:
\begin{itemize}
\item prepare each $\ket{\eta_{j\ell,\pm}^\kappa}$ with probability
$a_\kappa/12$ and accept the outcomes $\{j,\ell\}$;
\item prepare each $\ket{\bm t_j,\bm t_j}$ and
$\ket{-\bm t_j,-\bm t_j}$ with probability $b_\kappa/8$ and accept all outcomes
except $j$;
\item prepare each $\ket{\bm t_j,-\bm t_j}$ with probability $c_\kappa/4$ and
accept all four outcomes.
\end{itemize}
The probabilities sum to one because $a_\kappa+b_\kappa+c_\kappa=1$.  At
$\kappa=0$, the two roots in Eq.~\eqref{eq:EJM-product-root-equation} coincide,
and the first orbit reduces to the six Pauli tests with opposite local
directions used for the original EJM.

\appsubsubsection{Performance analysis}

Let $P_j^\kappa:=\proj{\Phi_j^\kappa}$.  For the reference outcome $0$, define
\begin{align}
  R_0^\kappa&:=\sum_{\ell=1}^{3}
  \left(\ketbra{\Phi_0^\kappa}{\Phi_\ell^\kappa}
       +\ketbra{\Phi_\ell^\kappa}{\Phi_0^\kappa}\right),\\
  Q_0^\kappa&:=\sum_{1\le \ell<m\le3}
  \left(\ketbra{\Phi_\ell^\kappa}{\Phi_m^\kappa}
       +\ketbra{\Phi_m^\kappa}{\Phi_\ell^\kappa}\right).
\end{align}
Direct substitution of Eq.~\eqref{eq:EJM-states}, together with
$\proj{\bm r}=(\1+\bm r\cdot\bm\sigma)/2$, gives the three orbit contributions
accepted by outcome $0$:
\begin{align}
A_0^\kappa
&:=\frac{1}{12}\sum_{\ell=1}^{3}\sum_{\nu=\pm}
\proj{\eta_{0\ell,\nu}^\kappa} 
=
\frac{1}{4}P_0^\kappa+\frac{1}{12}\sum_{\ell=1}^{3}P_\ell^\kappa
+\frac{1-\sin\kappa}{12(1+\sin\kappa)}R_0^\kappa, \label{eq:EJM-A0}\\
B_0^\kappa
&:=\frac{1}{8}\sum_{\ell=1}^{3}
\left(\proj{\bm t_\ell,\bm t_\ell}
      +\proj{-\bm t_\ell,-\bm t_\ell}\right) 
=
\frac{1}{4}P_0^\kappa+\frac{1}{6}\sum_{\ell=1}^{3}P_\ell^\kappa
-\frac{1}{12}R_0^\kappa-\frac{1}{24}Q_0^\kappa, \label{eq:EJM-B0}\\
C_0^\kappa
&:=\frac{1}{4}\sum_{\ell=0}^{3}\proj{\bm t_\ell,-\bm t_\ell} 
=
\frac{1}{4}\sum_{\ell=0}^{3}P_\ell^\kappa
+\frac{1}{12}\left(R_0^\kappa+Q_0^\kappa\right). \label{eq:EJM-C0}
\end{align}
The choices in Eq.~\eqref{eq:EJM-orbit-weights} satisfy
$b_\kappa=2c_\kappa$ and
\begin{align}
a_\kappa\frac{1-\sin\kappa}{12(1+\sin\kappa)}
=\frac{b_\kappa}{24}.
\end{align}
Therefore the terms outside the diagonal in
$a_\kappa A_0^\kappa+b_\kappa B_0^\kappa+c_\kappa C_0^\kappa$ cancel.  The
verification operator for the reference outcome is
\begin{align}
\Omega_0(\cS_\kappa)
&=a_\kappa A_0^\kappa+b_\kappa B_0^\kappa+c_\kappa C_0^\kappa
=\frac{1}{4}\left[P_0^\kappa+\beta_\kappa(\1-P_0^\kappa)\right],
\end{align}
where
\begin{align}
  \beta_\kappa
  =\frac{a_\kappa}{3}+\frac{2b_\kappa}{3}+c_\kappa
  =\frac{4-3\sin\kappa}{3(2-\sin\kappa)}.
\end{align}

By local symmetry,
\begin{align}
\Omega_j(\cS_\kappa)
=\frac{1}{4}\left[
P_j^\kappa+\beta_\kappa(\1-P_j^\kappa)
\right],
\qquad j=0,1,2,3.
\end{align}
Thus the EJM protocol is homogeneous.  Substitution into
the direct homogeneous identity
\begin{align}
\sum_j\tr[\Omega_j(\cS_\kappa)M_j]
=\beta_\kappa+(1-\beta_\kappa)\Fid(\cJ_\kappa,\cM)
\end{align}
gives
\begin{align}
P(\cS_\kappa,\varepsilon)
=1-(1-\beta_\kappa)\varepsilon
=1-\frac{2}{3(2-\sin\kappa)}\varepsilon.
\end{align}
Again, the equality is tight by mixing the ideal measurement with a fixed point
free relabeling of the four EJM outcomes.

Substitution into Eq.~\eqref{eq:number-of-copies-2} yields
\begin{align}
N_{\cJ_\kappa}
=
\left\lceil
\frac{\ln\delta}{\ln P(\cS_\kappa,\varepsilon)}
\right\rceil
\approx
\left\lceil
\frac{3(2-\sin\kappa)}{2}\cdot
\frac{1}{\varepsilon}\ln\frac{1}{\delta}
\right\rceil,
\end{align}
as stated in Corollary~\ref{thm:elegant-joint-measurement}.

\appsection{Stabilizer state induced measurement verification}\label{appx:sec:stabilizer-state}

\appsubsection{Introduction to stabilizer state induced measurements}

The stabilizer state induced measurement $\cM_S$ is defined in
Eq.~\eqref{eq:stabilizer state induced measurements}.  It is obtained from the
orbit of a stabilizer state $\ket{S}$ under the $n$ qudit Weyl representation.
The protocol below is adapted from stabilizer state
verification~\cite[Section X]{Zhu2019General}.

\appsubsection{Symmetries of stabilizer state induced measurements}

The Weyl operators $W^n(\vec{x})$ are tensor products of local Weyl operators.
They map
\begin{align}
  \ket{S,\vec{a}}\mapsto W^n(\vec{x})\ket{S,\vec{a}}
  = \ket{S,\vec{x}+\vec{a}},
\end{align}
up to phases and modulo the stabilizer subgroup $N$.  Hence the action is
locally transitive on $\bF_d^{2n}/N$.

The stabilizer of the reference outcome is $N$.  
Each basis
state $\ket{S,\vec a}=W^n(\vec a)\ket S$ is a simultaneous eigenvector of
the stabilizer operators.  More explicitly, the relation
\begin{align}
  W^n(\vec n)\ket{S,\vec a}
  =
  \chi_{\vec a}(\vec n)\ket{S,\vec a},
  \qquad
  \chi_{\vec a}(\vec n)=\omega^{[\vec n,\vec a]}
\end{align}
holds for $\vec n\in N$.  If two
cosets $\vec a+N$ and $\vec b+N$ define the same character on $N$, then
\begin{align}
    [\vec n,\vec a-\vec b]=0 \qquad \forall\,\vec n\in N.
\end{align}
Hence $\vec a-\vec b\in N^\perp$.  Since $N$ is maximal isotropic,
$N^\perp=N$, and therefore $\vec a+N=\vec b+N$.  Thus different cosets in
$\bF_d^{2n}/N$ define inequivalent one-dimensional characters of the
stabilizer.  Consequently, the stabilizer representation on the basis
$\{\ket{S,\vec a}\}$ is multiplicity free, and the irreducibility
condition in Definition~\ref{def:irreducibility} is satisfied.

\appsubsection{Proof of Corollary~\ref{thm:stabilizer state induced measurements}}

We prove Corollary~\ref{thm:stabilizer state induced measurements} by
constructing a local stabilizer test protocol and evaluating the associated
homogeneous verification operator.

\appsubsubsection{Verification protocol}

Choose stabilizer generators $\{W^n(\vec{x}_r):r=1,\ldots,n\}$ for $\ket{S}$.
For $\vec{k}=(k_1,\ldots,k_n)\in\bZ_d^n$, define
\begin{align}
  S_{\vec{k}}
  :=\prod_{r=1}^{n} W^n(\vec{x}_r)^{k_r}.
\end{align}
There are $d^n$ such stabilizers, and $\vec{k}=\vec{0}$ gives the identity.
Let $\mathscr{B}_{\vec{k}}$ be a local product eigenbasis of $S_{\vec{k}}$, and
let $\mathscr{B}^1_{\vec{k}}$ be the subset with eigenvalue $1$.  Then
$|\mathscr{B}_{\vec{k}}|=d^n$ and $|\mathscr{B}^1_{\vec{k}}|=d^{n-1}$ for
$\vec{k}\neq\vec{0}$. The stabilizer state identity~\cite{Zhu2019General}
\begin{align}\label{eq:stabilizer-identity}
\frac{1}{d^n-1}
\sum_{\vec{k}\in\bZ_d^n\backslash\{\vec{0}\}} \;
\sum_{\ket{\vec{u}}\in\mathscr{B}^1_{\vec{k}}}\proj{\vec{u}}
=\proj{S} + \frac{d^{n-1}-1}{d^n-1}(\1 - \proj{S})
\end{align}
is used below.

Define the local protocol
\begin{align}\label{eq:protocol-for-stabilizer-measurement}
  \cS_{\cM_S} = \left\{\frac{1}{d^n(d^n-1)},
                      \left(\proj{\vec{u}},\cM_S(\proj{\vec{u}})\right)
                      \sbar \ket{\vec{u}}\in\mathscr{B}_{\vec{k}},\; \vec{k}\in\bZ_d^n\backslash\{\vec{0}\}
              \right\}.
\end{align}
That is, one samples a nonidentity stabilizer uniformly, samples a local product
eigenstate of that stabilizer uniformly, and accepts exactly the ideal outcomes
that can occur for this input state.

\appsubsubsection{Performance analysis}

Fix an outcome $\vec{a}\in\bF_d^{2n}/N$ and define
\begin{align}
    \mathscr{B}_{\vec{k}}(\vec{a})
:=  \left\{W^n(\vec{a})\ket{\vec{u}} \sbar
          \ket{\vec{u}}\in\mathscr{B}^1_{\vec{k}}\right\}.
\end{align}
This set is exactly the set of test states from the $\vec{k}$-basis that
are accepted by the outcome $\vec a$.  To see this, first consider the
reference outcome $\vec 0$.  Since $\ket S$ is stabilized by every
nonidentity stabilizer $S_{\vec{k}}$, it lies in the eigenvalue-one
eigenspace of $S_{\vec{k}}$.  Hence, in the eigenbasis
$\mathscr{B}_{\vec{k}}$, only the vectors in
$\mathscr{B}_{\vec{k}}^1$ can have nonzero overlap with $\ket S$.
Therefore the states in $\mathscr{B}_{\vec{k}}^1$ are precisely the
states in $\mathscr{B}_{\vec{k}}$ accepted by the reference outcome.

Conjugation by $W^n(\vec{a})$ maps eigenstates of $S_{\vec{k}}$ to eigenstates
of the same stabilizer, possibly with a shifted eigenvalue.  Hence
$\mathscr{B}_{\vec{k}}(\vec{a})$ is a subset of the test states in
Eq.~\eqref{eq:protocol-for-stabilizer-measurement}.  These are exactly the test
states selected by the protocol that have support on the ideal outcome
$\vec{a}$.  Therefore
\begin{align}
\Omega_{\vec{a}}(\cS_{\cM_S})
&= \frac{1}{d^n(d^n-1)}
\sum_{\vec{k}\in\bZ_d^n\backslash\{\vec{0}\}}
                \sum_{\ket{\vec{u}}\in\mathscr{B}_{\vec{k}}^1}
                W^n(\vec{a})\proj{\vec{u}}(W^n(\vec{a}))^\dagger \\
&= \frac{1}{d^n}W^n(\vec{a})
      \left(\frac{1}{d^n-1}\sum_{\vec{k}\in\bZ_d^n\backslash\{\vec{0}\}}
                \sum_{\ket{\vec{u}}\in\mathscr{B}_{\vec{k}}^1}
                    \proj{\vec{u}}\right)
                (W^n(\vec{a}))^\dagger \\
&= \frac{1}{d^n}W^n(\vec{a})
                \left(\proj{S} + \frac{d^{n-1}-1}{d^n-1}(\1 - \proj{S})\right)
                (W^n(\vec{a}))^\dagger \\
&= \frac{1}{d^n}\left[
\proj{S,\vec{a}} + \frac{d^{n-1}-1}{d^n-1}
\left(\1 - \proj{S,\vec{a}}\right)\right].
\end{align}
Thus $\beta=(d^{n-1}-1)/(d^n-1)$.  Substitution into
Eq.~\eqref{eq:homogeneous-failure-probability} gives
\begin{align}
P(\cS_{\cM_S},\varepsilon)
=1-\frac{d^n-d^{n-1}}{d^n-1}\varepsilon.
\end{align}
Substitution into Eq.~\eqref{eq:number-of-copies-2} gives
\begin{align}
N_{\cM_S}
=
\left\lceil
\frac{\ln\delta}{\ln P(\cS_{\cM_S},\varepsilon)}
\right\rceil
\approx
\left\lceil
\frac{d^n - 1}{d^n - d^{n-1}}\cdot
\frac{1}{\varepsilon}\ln\frac{1}{\delta}
\right\rceil,
\end{align}
which gives the asymptotic sample complexity in
Corollary~\ref{thm:stabilizer state induced measurements}.  The exact formula is
the ceiling expression above.  The protocol is
the standard homogeneous stabilizer test strategy.  Theorem~\ref{thm:verification-efficiency}
reduces the relevant optimality question to the second eigenvalue of the
representative stabilizer state verification operator; within the stabilizer test
construction above, this eigenvalue is fixed by the stabilizer identity in
Eq.~\eqref{eq:stabilizer-identity}, so no different weighting of these
nonidentity stabilizer eigenspaces improves the homogeneous parameter.

\appsection{Discussion on measurement fidelity estimation}\label{appx:measurement-fidelity-estimation}

This appendix provides the details and proofs for measurement fidelity estimation based on QMV.
Assume that a homogeneous verification
protocol $\cS=\{\Omega_\theta:\theta\in\Theta\}$ for the target quantum measurement $\cP$
has been constructed, where each $\Omega_\theta$ has the form of
Eq.~\eqref{eq:homogeneous}.

To use this homogeneous QMV strategy to achieve fidelity estimation,
the protocol is performed for $N$ rounds.
In round $i$, a test state is sampled and fed into the measurement device.
Based on the outcome, define the random variable $X_i$ by
\begin{align}
    X_i =   \begin{cases}
                1, &\text{The state passes the test}, \\
                0, &\text{otherwise}.
            \end{cases}
\end{align}
By construction, each $X_i$ is independent and has a Bernoulli distribution $B(p)$ with
\begin{align}
    p := \sum_{\theta\in\Theta}\tr[\Omega_\theta M_\theta]
       = \sum_{\theta\in\Theta}\tr\left[
          \left(\frac{\proj{\psi_\theta} + \beta(\1 - \proj{\psi_\theta})}{\vert\Theta\vert}\right) M_\theta\right]
       = (1-\beta)\Fid + \beta,\label{eq:yiwre}
\end{align}
where $\Fid$ is the fidelity between the target measurement $\cP$ and the actual measurement $\cM$.
Accordingly, the variance of $X_i$ is $\mathbb{V}[X_i]=p(1-p)$,
where $\mathbb{V}[\cdot]$ denotes variance.
By the law of large numbers, $\widehat{p}:=\frac{1}{N}\sum_{i=1}^NX_i$
approximates the ideal target $p$ as $N$ becomes large, and
$\mathbb{E}[\widehat{p}] = p$,
where $\mathbb{E}[\cdot]$ denotes the expectation value.
From $\wh{p}$, the estimate $\wh{F}$ is obtained directly from Eq.~\eqref{eq:yiwre}.

It remains to analyze how quickly the estimate $\wh{F}$ converges.  First,
\begin{align}
    \mathbb{V}[\widehat{p}]
= \mathbb{V}\left[\frac{1}{N}\sum_{i=1}^NX_i\right]
= \frac{1}{N^2}\sum_{i=1}^N\mathbb{V}[X_i]
= \frac{p(1-p)}{N},
\end{align}
where independence of the rounds is used in the third equality.  Using Eq.~\eqref{eq:yiwre}, we obtain
\begin{align}
    \mathbb{V}[\widehat{F}]
= \frac{\mathbb{V}[\widehat{p}]}{(1-\beta)^2}
= \frac{p(1-p)}{(1-\beta)^2N}.
\end{align}
Thus, the sample standard deviation of the estimated fidelity $\wh{F}$ satisfies
\begin{align}\label{eq:standard-deviation}
    \Delta(\widehat{F})
:= \sqrt{\mathbb{V}[\widehat{F}] } = \frac{\sqrt{p(1-p)}}{(1-\beta)\sqrt{N}}
 = \frac{\sqrt{(1-F)(F + (1-\beta)^{-1} - 1)}}{\sqrt{N}}
 \leq \frac{1}{2(1-\beta)\sqrt{N}},
\end{align}
where the second equality follows from Eq.~\eqref{eq:yiwre}
and the last inequality follows from the arithmetic-mean--geometric-mean inequality.

This bound gives the sample complexity of fidelity estimation.  Since
$\widehat{F}=(\widehat{p}-\beta)/(1-\beta)$ and
$\mathbb{E}[\widehat{p}]=p$, the estimator $\widehat{F}$ is unbiased:
$\mathbb{E}[\widehat{F}]=F$.  If the estimation precision is quantified by the
root-mean-square error and denoted by $\varepsilon_{\rm est}$, then it suffices
to require
\begin{align}
  \Delta(\widehat{F}) \leq \varepsilon_{\rm est}.
\end{align}
Using Eq.~\eqref{eq:standard-deviation}, this condition is guaranteed whenever
\begin{align}
  N \geq
  \left\lceil
  \frac{1}{4(1-\beta)^2\varepsilon_{\rm est}^2}
  \right\rceil.
\end{align}
Therefore the number of rounds needed for fidelity estimation scales as
\begin{align}
  N_{\rm est}=\cO\left(\frac{1}{(1-\beta)^2\varepsilon_{\rm est}^2}\right),
\end{align}
and, for a fixed homogeneous parameter $\beta<1$,
$N_{\rm est}=\cO(1/\varepsilon_{\rm est}^2)$.  Equivalently, Chebyshev's
inequality gives the confidence bound
\begin{align}
\Pr\!\left[|\widehat{F}-F|\geq\varepsilon_{\rm est}\right]
\leq
\frac{1}{4(1-\beta)^2N\varepsilon_{\rm est}^2},
\end{align}
so confidence $1-\delta$ is ensured by
\begin{align}
  N \geq
  \left\lceil
  \frac{1}{4(1-\beta)^2\delta\,\varepsilon_{\rm est}^2}
  \right\rceil.
\end{align}

\end{document}